\documentclass{easychair}

\usepackage{doc}
\usepackage{graphicx}
\usepackage{xspace}
\usepackage{bm}
\usepackage{amsmath}
\usepackage{amsfonts}
\usepackage{amsthm}
\usepackage{empheq}
\usepackage{amssymb}
\usepackage{mathtools}
\usepackage{stmaryrd}
\usepackage{xcolor}
\usepackage{subcaption}
\usepackage{float}
\usepackage{hyperref}
\usepackage{dlogic}
\usepackage{wrapfig}
\usepackage{empheq}
\usepackage{geometry}
\usepackage{footmisc}
\usepackage{wrapfig}
\usepackage{cancel}

\usepackage{cite}

\theoremstyle{plain}
\newtheorem{lemma}{Lemma}

\newtheorem{theorem}{Theorem}
\newtheorem{corollary}{Corollary}

\theoremstyle{remark}
\newtheorem{remark}{Remark}
\newtheorem{example}{Example}
\newtheorem{counterexample}{Counterexample}

\theoremstyle{definition}
\newtheorem{definition}{Definition}

\newif\ifprintappendix 
\printappendixtrue % version with appendix

\title{Comparison Invariants for Verifying Control Invariance}

\author{
Promit Panja
\and
Andr\'{e} Platzer
}

\institute{
  Department of Informatics,
  Karlsruhe Institute of Technology, Karlsruhe, Germany\\
  \email{\{promit.panja, andre.platzer\}@kit.edu}
 }

\authorrunning{P. Panja, A. Platzer}

\titlerunning{Comparison Invariants for Verifying Control Invariance}

\begin{document}

\maketitle

\begin{abstract}
  Control invariance validates that dynamical systems have a control input that 
  preserves a given property at all times. This paper introduces a set of sound axioms 
  and proof rules in differential dynamic logic (\dL) that enable 
  verification of control invariance. First, the \emph{scalar} and \emph{vector comparison principles}, 
  relating a system of differential equations to a comparison system such that 
  invariance properties can be established more easily, are axiomatized in \dL. This 
  axiomatization primarily utilizes \emph{differential ghosts}, which are 
  proof-theoretic generalizations of comparison systems. Next, with the comparison principles serving as the basis, \emph{comparison invariants} 
  are introduced, and sound axioms and proof rules are derived. Comparison 
  invariants reduce the question of control invariance to a functional 
  inequality on its Lie derivative for a suitable class of functions, moreover, 
  the right choice of function can result in decidable arithmetic. Furthermore, the perennially 
  popular control barrier functions (CBFs) used in safety-critical control are shown to 
  be a special instance of comparison invariants. This yields an axiomatization of 
  CBFs that leads to a dedicated set of proof rules. The rules allow for the verification 
  of CBFs, which are traditionally used for synthesizing safe controllers without 
  verification. Lastly, comparison invariants are shown to unify several other safety verification techniques, 
  including Darboux invariants and differential invariants, further cementing their 
  versatility. 
\end{abstract}

\section{Introduction}

Control systems and cyber-physical systems are an integral part of modern society \cite{alur2015principles,Platzer18}, 
which are deployed in various domains such as aerospace, medical devices, autonomous vehicles, 
industrial robots, and power grids \cite{sastry2013nonlinear,Platzer10}. Control systems 
are often termed \emph{safety-critical} when safety is a primary concern, breach of 
which can cause catastrophic damage including loss of life \cite{Platzer18,Platzer10,alur2015principles}. Consequently, 
safety-critical control systems require formal safety guarantees. Making sure that 
a control system behaves safely under all admissible control inputs has been a central 
challenge in control theory and formal verification \cite{sastry2013nonlinear,khalil2002nonlinear,Platzer18}. 
At the heart of this challenge lies control invariance, the principle that a system's differential
equations can admit control inputs that preserve a desired property.

This paper studies deductive verification of control invariance by developing an 
axiomatic foundation in \emph{differential dynamic logic} (\dL) \cite{DBLP:journals/jar/Platzer08,Platzer10,DBLP:conf/lics/Platzer12b,DBLP:journals/jar/Platzer17,Platzer18,DBLP:conf/lics/PlatzerT18,DBLP:journals/jacm/PlatzerT20,DBLP:journals/jacm/PlatzerQ25}, a logic for modeling 
and proving hybrid systems. In order to axiomatically prove control invariance we utilize 
the \emph{comparison principles} \cite{walter2013ordinary} as the underlying proof technique. 
The comparison principles enable relating a system of differential equations with a 
comparison system aiding invariance proofs by establishing properties that were not possible 
before. Based on the comparison principles we introduce a new general class of invariants 
called \emph{comparison invariants}, which reduce the question of control invariance 
to questions about real arithmetic, specifically a functional inequality on the Lie derivative 
of the invariants, proving sublevel sets as control invariant. Comparison invariants are used to further generalize various safety 
verification techniques including the ever popular control barrier functions \cite{ames2019control}.

Control barrier functions are a go-to choice in safety-critical control theory for 
synthesizing efficient safety enforcing controllers \cite{ames2014control,ames2016control,ames2019control,xiao2021high}. 
Due to their relative simplicity and their formal safety guarantees control barrier functions 
have attracted much success, especially in fields like safety-critical robotics. However, 
these formal guarantees are only enforced by the synthesized controllers if the control barrier function 
used to synthesize it was valid in the first place. Verifying whether a candidate 
control barrier function is valid or not is a nontrivial task \cite{clark2024semi}. 
Recently, there have been efforts in verifying control barrier functions, primarily 
utilizing sum-of-squares optimization using semidefinite programming \cite{clark2024semi}, however, 
in-practice these methods can suffer from potential soundness issues due to round-off errors in floating-point solvers.

Due to a lack of formal logical embeddings for verifying control barrier functions, this work 
was motivated to bridge this gap by providing an axiomatic foundation for proving 
control invariance in general, and as a consequence, also control barrier functions. With this 
work our aim is to provide a unified proof framework in \dL, which allows proving various 
control invariance properties via a general set of axioms and proof rules. Moreover, we propose a logical foundation for verifying control barrier functions for the safety-critical control community for synthesizing safe controllers, in which unverified candidate control barrier functions abound.

\paragraph{Contributions.} The main contributions are summarized as follows.
\begin{enumerate}
  \item We first axiomatize the \emph{scalar} and \emph{vector comparison principles}, 
  and show they are derivable in \dL resulting in axioms \lref{ax:SCP} and \lref{ax:VCP}, respectively.
  \item Next, we introduce \emph{scalar} and \emph{vector comparison invariants} 
  along with their respective axioms and proof rules, 
  enabling sound verification of control invariance properties in \dL.
  \item Further, we give an axiomatization of \emph{control barrier functions} (CBFs), scalar 
  and vector, and in the process show how they are a special case of comparison invariants.
  \item Additionally, we show the existing \dL axioms and proof rules for Darboux invariants and 
  differential invariants, respectively, are derivable from comparison invariants.
\end{enumerate}

\section{Background: Differential Dynamic Logic}
 
Differential dynamic logic (\dL), a logic for modeling and reasoning about hybrid 
systems \cite{Platzer18}, forms the basis of this paper. We will briefly go over the syntax, semantics, 
and the proof calculus, followed by control systems modeling principles in \dL. 
A detailed exposition of the logic can be found elsewhere 
\cite{DBLP:journals/jar/Platzer17,Platzer18}.

\subsection{Syntax}
Here we give an overview of the \dL syntax, which comprises \emph{terms} and \emph{formulas} 
that are then combined to form \emph{hybrid programs} modeling hybrid systems.

\subsubsection{Terms}\label{syntax:terms}
The \emph{term language} of \dL follows the following grammar, where 
$x \in \vars$ is a variable in the set of all variables $\vars$, and $c \in \Q$ is 
a rational constant:
\[
  p, q \hspace{0.25em} ::= \hspace{0.25em} x \hspace{0.25em} | \hspace{0.25em} c \hspace{0.25em}
  | \hspace{0.25em} p + q \hspace{0.25em} | \hspace{0.25em} p \cdot q \hspace{0.25em} | 
  \hspace{0.25em} \eta(p_{1}, \dots, p_{n}) \hspace{0.25em} 
  | \hspace{0.25em} (p)'.
\]
Also, $x = (x_{1}, \dots, x_{n})^{\top}$ 
is a vector of variables. It is worth noting that this paper frequently 
utilizes (fixed) function symbols like $\eta(\cdot)$, which are scalar functions, 
and $\boldsymbol{\eta}(\cdot) = (\eta_{1}(\cdot), \dots, \eta_{n}(\cdot))^{\top}$ 
are vector-valued functions, where $\eta_{1}(\cdot), \dots, \eta_{n}(\cdot)$ are scalar functions. 
All the function symbols are assumed to be continuous in their respective arguments. For further 
distinction, vector of terms are written in bold $\mathbf{p} = (p_{1}, \dots, p_{n})^{\top}$.
The term language also includes \emph{differential terms} 
$(p)'$ that allow reasoning about derivatives locally, along the evolution of associated 
\emph{ordinary differential equations} (ODEs) \cite{DBLP:journals/jacm/PlatzerT20}, and are defined as follows:
\begin{equation*}
    (p)' \defeq \sum_{i=1}^{n} \frac{\partial p}{\partial x_{i}} \cdot x'_{i} \hspace{3em}
    \Lie{\mathbf{f}(x)}{p} \defeq \sum_{i=1}^{n} \frac{\partial p}{\partial x_{i}} \cdot f_{i}(x).
\end{equation*}
The differential of a term $p$ coincides with its syntactic Lie derivative along 
the ODE $x' = \mathbf{f}(x)$, with $x' = (x_{1}', \dots, x_{n}')^{\top}$ and $\mathbf{f}(x) = (f_{1}(x), \dots, f_{n}(x))^{\top}$, 
when following the chain rule. Here we use $\Lie{\mathbf{f}(x)}{p}$ as 
an operator and use $\dot{p}$ as the syntactic 
representation of the Lie derivative for brevity. For two vectors $\mathbf{v}$ and $\mathbf{w}$ the 
inner product is denoted as $\mathbf{v} \cdot \mathbf{w}$ and the Euclidean norm is 
$\norm{\mathbf{v}} = \sqrt{\mathbf{v} \cdot \mathbf{v}}$. Since standard norms are not part of \dL term syntax, 
squared Euclidean norms are used whenever necessary, as they are definable in first-order logic over 
real arithmetic ($\text{FOL}_{\mathbb{R}}$) as $ \norm{\mathbf{v}}^{2} = \sum_{i=1}^{n} v_{i}^{2}$.

\subsubsection{Formulas}
The \emph{formulas} of \dL are defined by the following grammar, where $\sim \hspace{0.3em} \in \{>, \geq, =\}$ 
is a comparison operator and $\alpha$ is a \emph{hyrbid program} \cite{Platzer18}:
\[
  \phi, \psi \hspace{0.25em} ::= \hspace{0.25em} p \sim q \hspace{0.25em} | \hspace{0.25em} 
  \neg \phi \hspace{0.25em} | \hspace{0.25em} \phi \Land \psi \hspace{0.25em} | \hspace{0.25em} 
  \phi \Lor \psi \hspace{0.25em} | \hspace{0.25em} \forall x \phi \hspace{0.25em} | 
  \hspace{0.25em} \exists x \phi \hspace{0.25em} | \hspace{0.25em} \fBox[\alpha][\phi] 
  \hspace{0.25em} | \hspace{0.25em} \fDia[\alpha][\phi].
\]
This grammar extends the standard first-order logic over real arithmetic 
($\text{FOL}_{\mathbb{R}}$), with the box 
$\fBox[\alpha][\phi]$ and the diamond $\fDia[\alpha][\phi]$ modalities, which express 
that \emph{all} or \emph{some} runs of the hybrid program $\alpha$ satisfy 
$\phi$, respectively.

\subsubsection{Hybrid Programs}
The language of \textit{hybrid programs} is defined by the following grammar, 
where $x$ is a variable, $p$ is a \dL term, and $Q$ is a formula of first-order 
real arithmetic \cite{DBLP:journals/jar/Platzer17,Platzer18}.
\[
  \alpha, \beta \hspace{0.25em} ::= \hspace{0.25em} x' = \mathbf{f}(x) \hspace{0.1em} \& 
  \hspace{0.1em} Q \hspace{0.25em} | \hspace{0.25em} \alpha ; \beta \hspace{0.25em} 
  | \hspace{0.25em} \alpha^*
\]
Continuous dynamics are modeled by the $n$-dimensional system of autonomous\footnote{An ODE of the form $x' = f(x)$ is called \emph{autonomous}, if the right-hand side of the ODE, i.e., $f(x)$ only depends on the state variable $x$ and \textbf{not} explicitly on time.}
ODEs $\ode[x]{\mathbf{f}(x)}[Q]$ whose evolution is constrained by 
the domain $Q$; the ODE is modeled as $x' = \mathbf{f}(x)$ when 
$Q \equiv true$. The continuous and discrete fragments can be 
combined to model hybrid dynamics using the following program combinators: 
\textit{sequential composition} $\alpha ; \beta$, which executes program $\alpha$ 
followed by program $\beta$, and (bounded)
\textit{nondeterministic repetition} $\alpha^*$, which repeats the program $\alpha$ 
zero or arbitrarily more number of times. The syntax of the discrete fragment of \dL 
and other program combinators can be found elsewhere \cite{Platzer18}.

\begin{remark}\label{rem:time}
  This paper utilizes systems of ODEs of the form 
  $\odes{x' = \mathbf{f}(x,\tau), \tau' = 1}[Q]$,
  where the variable $\tau$ is specifically meant for tracking time. Due to the  
  ODE $\tau' = 1$, the system of ODEs as a whole is still autonomous in the extended 
  state space \cite[Ch. 1, \S1.3]{chicone2006ordinary}. This choice is made so that 
  once can show ODE solutions remain 
  bounded in a compact time interval, which would not have been possible to show 
  syntactically in \dL's proof calculus without the addition of the time variable 
  $\tau$.
\end{remark}

\subsection{Semantics}
The semantics of \dL are defined such that states $\nu: \vars \to \real$ assign 
a real value to each variable in $\vars$. For some state 
$\nu \in \States$ and some term $p$, where $\States \subseteq \R^{n}$ denotes 
the set of all states, the semantics $\nu\sem{p}$ is the real value of $p$ 
when evaluated at $\nu$. For a formula $\phi$, the semantics $\sem{\phi} \subseteq \States$
is the set of states in which $\phi$ holds. First-order connectives have their usual 
semantics, e.g., $\sem{\phi \Land \psi} = \sem{\phi} \cap \sem{\psi}$.

Now we present the semantics of formulas most relevant to this paper, i.e., 
semantics of ODEs under the box modality.\footnote{The complete \dL semantics including semantics of ODEs under the diamond modality 
can be found elsewhere \cite{DBLP:journals/jar/Platzer17,Platzer18,DBLP:journals/jacm/PlatzerT20}.}
Let $\varphi: [0, T) \to \States$ (for some $0 < T \leq \infty$), 
be the unique solution of the ODE $x' = \mathbf{f}(x)$ extended maximally to the right \cite{walter2013ordinary,chicone2006ordinary} 
with initial value $\varphi(0) = \nu$, where $\nu \in \States$ is some state:
\begin{align*}
  \nu \in \sem{\fBox[\ode[x]{\mathbf{f}(x)}[Q]][\psi]} &\text{ iff for all } 0 \leq t < T, \varphi(\xi) \in \sem{Q} \text{ for all } 0 \leq \xi \leq t \text{ such that } \varphi(t) \in \sem{\psi}.
\end{align*}
Informally it means, the formula $\fBox[\ode[x]{\mathbf{f}(x)}[Q]][\psi]$ is true in the initial 
state $\nu$ if \emph{all} states reached by the solution of the ODE from $\nu$ 
satisfy the postcondition $\psi$, while staying in the domain constraint $Q$.

\subsection{Proof Calculus}
\begin{wrapfigure}{r}{0.28\textwidth}
  \centering
  \vspace{-25pt}
  \begin{sequentproof}
    \AxiomC{$\Gamma_{1} \vdash \phi_{1}$}
    \AxiomC{$\dots$}
    \AxiomC{$\Gamma_{n} \vdash \phi_{n}$}
    \TrinaryInfC{$\vdots \ \begin{pmatrix}
      \text{\small hybrid program} \\
      \text{\small reasoning}
    \end{pmatrix}$}
    \un[]{\Gamma}{[\alpha]\phi}
  \end{sequentproof}
  \vspace{-10pt}
  \caption{Proofs proceed upwards by deduction, where each reasoning step is justified by sound \dL axioms and proof rules.}
  \vspace{-10pt}
\end{wrapfigure}

The \dL proof calculus enables deductive verification of hybrid 
dynamical systems by utilizing sound axioms and proof rules of \dL via compositional 
reasoning principles of hybrid programs \cite{DBLP:journals/jar/Platzer17,Platzer18}. 
All derivations use classical sequent calculus with standard rules for logical 
connectives and sequents. A sequent is of the form $\Gamma \vdash \Delta$ with 
semantics $(\bigwedge_{\psi \in \Gamma}\psi) \to \Delta$, and  is valid if and only 
if the formula is valid. Completed proof branches are marked with $*$. Since 
first-order real arithmetic is decidable \cite{bochnak2013real}, we assume access 
to a decision procedure and mark steps justified by real arithmetic with the rule \lref{Real}. 
An axiom is sound if all its instances are valid. A proof rule is sound  if validity 
of all its premises implies the validity of its conclusion. Axioms and proof rules 
are derivable if they can be obtained from sound \dL axioms and proof rules. The 
soundness of the base \dL axiomatization then guarantees the soundness of all derived 
axioms and proof rules \cite{DBLP:journals/jar/Platzer17,Platzer18,DBLP:journals/jacm/PlatzerT20}. 
An overview of the base \dL calculus is provided in Appendix~\ref{appendix:base-calculus}.

\subsection{Control Systems in \dL}

In general, control systems are modeled by parameterized ODEs with an explicit control parameter, 
which represents the control input. For a (compact) control admissible set $\mathcal{U} \subseteq \R^{m}$, 
following is a control system, where $x \in \mathcal{X} \subseteq \R^{n}$ is 
the state variable and $u \in \mathcal{U}$ is the control input:
\[x' = \mathbf{f}(x, u)\]

In \dL, however, control systems are modeled as hybrid programs; discrete programs 
and continuous dynamics paired together along with nondeterministic repetition. 
\begin{equation}\label{eq:control-sys-dl}
  \Gamma \to [\underbrace{(ctrl;dyn)^{*}}_{\alpha^{*}}]P
  \hspace{6em}
  \vcenter{\hbox{%
  \AxiomC{$\Gamma \vdash Inv$}
  \AxiomC{$\textcolor{RPTHMediumBlue}{Inv \vdash [\alpha]Inv}$}
  \AxiomC{$Inv \vdash P$}
  \LeftLabel{\axtag{loop} }
  \TrinaryInfC{$\Gamma \vdash [\alpha^{*}]P$}
  \DisplayProof
  }}
\end{equation}

In the above formula, $ctrl$ refers to the \emph{(discrete) controller} and $dyn$ refers 
to the \emph{continuous dynamics}, i.e., a system of ODEs. The central reasoning principle for proving 
hybrid programs is the \lref{loop} rule \cite[Ch. 7, \S7.3, Lemma 7.3]{Platzer18}.
This proof rule decomposes the hybrid program $\alpha^{*}$ in an inductive fashion. 
The most crucial step is showing that formula $Inv$ is an invariant for the hybrid 
program, more importantly the ODEs in $\alpha$. Because of this reason the rest of 
this paper focuses on proving invariants of ODEs, specifically, control invariance 
of ODEs, which will be defined and looked into in detail in the later sections.

\section{Comparison Principle}\label{sec:comparison-priciple}

We begin our endeavor by first taking a look at \emph{comparison systems} and 
the \emph{comparison principle}, which are fundamental tools in the theory of 
dynamical systems enabling differential inequality reasoning using differential 
equations \cite{walter2013ordinary,bellman1962vector,chatterjee2006stability,nersesov2006stability}.

\paragraph{Comparison System.} Informally, a \emph{comparison system} is an auxiliary 
system of differential equations, which establishes a relation with the original system 
of differential equations enabling the analysis of properties that were 
not possible before. For example, for a 
system of ODEs $x' = \mathbf{f}(x)$, if one wants to show a property $R(x)$ holds true given 
some property $P(x)$ was true initially, instead of proving this implication directly, which often times can 
be difficult, one might construct a comparison system $y' = \mathbf{g}(x,y)$ that enables 
proving from $P(x)$ a new suitable intermediate property $Q(x,y)$, which implies the intended $R(x)$.
\begin{equation}\label{eq:comparison-formula}
   [x' = \mathbf{f}(x)]P(x) \to [x' = \mathbf{f}(x), y' = \mathbf{g}(x,y)]Q(x,y) \to [x' = \mathbf{f}(x)]R(x)
\end{equation}
At a high-level, the above \dL formula shows how this relation with a comparison 
system can be established for a suitable $Q(x,y)$ and $y' = \mathbf{g}(x,y)$.
\begin{remark}
  Of course, formula (\ref{eq:comparison-formula}) requires handling side conditions on 
  the solutions of $y' = \mathbf{g}(x,y)$ to maintain soundness. In \dL this sound  
  reasoning is achieved by \emph{differential ghosts} 
  \cite{DBLP:journals/jar/Platzer17,DBLP:journals/fac/TanP21,DBLP:journals/jacm/PlatzerT20}, 
  which are proof-theoretic generalizations of comparison systems. The \dL axioms \lref{ax:DG} 
  and \lref{ax:BDG} allow introducing auxiliary dynamics that abstract from
  the existing dynamics to simplify proofs.
\end{remark}

\begin{lemma}[Differential Ghosts {\cite{DBLP:journals/jacm/PlatzerT20,DBLP:journals/fac/TanP21}}]
  The axioms \emph{\lref{ax:DG}} and \emph{\lref{ax:BDG}} enable sound additions of \emph{linear} 
  and \emph{nonlinear} ghost ODEs, respectively, to the existing systems.
  \begin{align*}
    \axtag{ax:DG} \quad & \fBox[\ode[x]{\mathbf{f}(x)}[Q]][\psi] \Liff \exists y \fBox[\odes{x' = \mathbf{f}(x), y' = a(x)y + b(x)}[Q]][\psi] \\
    \axtag{ax:BDG} \quad & \fBox[\ode[x]{\mathbf{f}(x)}, \ode[y]{\mathbf{g}(x, y)}[Q]][\norm{y}^{2} \leq p(x)] \to \\ 
    & (\fBox[\ode[x]{\mathbf{f}(x)}[Q]][\psi] \Liff \fBox[\ode[x]{\mathbf{f}(x)}, \ode[y]{\mathbf{g}(x, y)}[Q]][\psi])
  \end{align*}
\end{lemma}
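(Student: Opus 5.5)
Both axioms are proved semantically: fix a state $\nu$, compare the solution of the original ODE with the solution of the augmented ODE, and use two facts. The first is that $\psi$ and $Q$ do not mention the ghost variable $y$ (the usual side condition; for \lref{ax:BDG} it is $\psi$ that must avoid $y$, while $Q$ may mention it). The second is that the $x$-projection of any solution of the augmented system solves $x'=\mathbf{f}(x)$ on the same interval. By uniqueness, that projection coincides with the original solution $\varphi$ wherever both exist. So for both axioms the real work is comparing the \emph{lengths of the existence intervals}.

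For \lref{ax:DG}, the direction ``$\leftarrow$'' goes as follows. Take the witness value of $y$ and the augmented solution $(\varphi_x,\varphi_y)$ on $[0,T')$. Its projection $\varphi_x$ agrees with $\varphi$, and the maximal interval $[0,T)$ of $\varphi$ satisfies $T' \le T$. The key step is to show $T'=T$. The ghost ODE $y' = a(\varphi(t))y + b(\varphi(t))$ is linear in $y$ with coefficients continuous on $[0,T)$. On every compact $[0,t]\subset[0,T)$ these coefficients are bounded. A Gr\"onwall estimate, or the explicit variation-of-constants formula, then bounds $|y|$ on $[0,t]$, so $y$ cannot blow up before $T$. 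By the standard extension theorem the augmented solution exists on all of $[0,T)$. Because $\psi$ and $Q$ ignore $y$, the truth of $\psi$ along $[0,T)$ transfers directly. For ``$\rightarrow$'', pick any initial $y$, for instance $y=0$. The same argument shows the augmented solution lives exactly on $[0,T)$ and projects to $\varphi$, so $\psi$ transfers back. The main obstacle is this global-existence step for the linear ghost. I would discharge it by writing $y(t) = e^{A(t)}\bigl(y_0+\int_0^t e^{-A(s)}b(\varphi(s))\,ds\bigr)$ with $A(t)=\int_0^t a(\varphi(s))\,ds$. This expression is finite for every $t<T$ and solves the ODE, which gives existence on all of $[0,T)$ outright.

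For \lref{ax:BDG} the ghost may be nonlinear, and the hypothesis $[\ldots]\,\norm{y}^2\le p(x)$ replaces linearity as the source of the needed bound. Assume the hypothesis, and let $[0,T')$ be the maximal interval of the augmented solution. Its $x$-part equals $\varphi$ on $[0,T')$, so $T'\le T$. Suppose for contradiction that $T'<T$, and first assume the domain $Q$ holds throughout. Then $\varphi$ stays in the compact set $\varphi([0,T'])$, and by the hypothesis $\norm{y(t)}^2 \le p(\varphi(t))$, which is bounded on $[0,T')$ by continuity of $p$ on that compact set. The whole augmented trajectory therefore stays in a compact set as $t\to T'$. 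This contradicts the escape-to-infinity characterization of maximal solutions.

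If instead $Q$ fails at some time, the box semantics only ever quantifies over times before the first exit from $Q$. In that case I would restrict the argument to the portion of the trajectory that stays inside $Q$, on which both formulas quantify over the same times. Either way, on the relevant times the two solutions are defined on the same interval, and their $x$-parts, and hence $\psi$ and $Q$, agree. This gives both directions of the biconditional. The delicate point here is bookkeeping the domain constraint together with maximality. I would handle it by working with the largest $t$ such that $Q$ holds on $[0,t]$ for the augmented solution, and applying the compactness argument on that interval.
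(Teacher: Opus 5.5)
The paper gives no proof of this lemma. It imports both axioms from the cited works, where soundness is argued semantically. Your argument is that standard semantic one: global solvability of the linear ghost for \lref{ax:DG}, and for \lref{ax:BDG} the bound $\|y\|^2\le p(\varphi(t))$, compactness of $\varphi([0,T'])$, and the blow-up characterization of maximal solutions. Its core is fine, but one claim in your setup is false. For \lref{ax:BDG} the domain $Q$ may \emph{not} mention $y$.

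Take $x'=1$, $y'=1$, $p(x)=x^2+1$, $Q\equiv y\le 0$ and $\psi\equiv x\le 0$, starting at $x=y=0$. Along the augmented ODE $y(t)=t$, so $Q$ holds only at $t=0$. Hence both $[x'=1,\,y'=1\,\&\,y\le 0]\,y^2\le x^2+1$ and $[x'=1,\,y'=1\,\&\,y\le 0]\,x\le 0$ are true. Along $x'=1\,\&\,y\le 0$, however, $y$ stays $0$, so $Q$ holds forever and $x(t)=t$ violates $x\le 0$. This instance of the axiom is therefore invalid. Your closing step (``their $x$-parts, and hence $\psi$ and $Q$, agree'') in fact relies on $Q$ being $y$-free. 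The correct side condition for \lref{ax:BDG} is the same as for \lref{ax:DG}: $\psi$, $Q$ and also $p$ must not mention $y$.

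A smaller point concerns your final fix. The ``largest $t$ such that $Q$ holds on $[0,t]$'' need not exist; for open $Q$ that set is typically $[0,s)$. The case split is also unnecessary if you argue per time. Take any $t<T$ with $Q$ true along $\varphi$ on $[0,t]$, and suppose $T'\le t$. Then $Q$ holds along the augmented solution on all of $[0,T')$, so the hypothesis bounds $\|y\|^2$ by $p(\varphi(\cdot))$ there. Compactness of $\varphi([0,T'])$ then contradicts maximality, so $T'>t$. This is exactly what the direction from the augmented box to the original box needs. The converse direction only uses $T'\le T$ and needs no bound.
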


A cleverly constructed comparison system can enable reasoning about properties of 
ODEs with (relative) ease. Once a suitable comparison system has been constructed, 
there also needs to be a way to relate this system to the original system of ODEs. 

The \emph{comparison principle} emerged as a powerful tool in the qualitative theory of 
dynamical systems \cite{walter2013ordinary,chatterjee2006stability,DBLP:conf/fm/SogokonGTP18}, 
and provides a simple mechanism for relating the solutions of two different differential systems. 
It plays an integral part in the analysis of \emph{partial differential equations} (PDEs) 
\cite{kawohl2000comparison}. In deductive verification the comparison principle 
has been successfully leveraged \cite{DBLP:journals/tocl/Platzer17} to 
prove invariance via Hamilton-Jacobi-Isaacs PDEs for differential hybrid games.
For ODEs, the comparison principle has been widely used for stability 
analysis, generalizing the celebrated Lyapunov's direct method, and certifying invariance 
\cite{bellman1962vector,chatterjee2006stability,nersesov2006stability,DBLP:conf/fm/SogokonGTP18,walter2013ordinary}. 
Next we will go over the scalar and vector comparison principles and axiomatize 
them in \dL.

\subsection{Scalar Comparison}

The scalar comparison principle proves properties about a system of ODEs by relating 
a scalar differential inequality with another scalar differential equation (comparison system). 
To demonstrate the scalar comparison principle, consider an $n$-dimensional system 
of ODEs $x' = \mathbf{f}(x,t)$ whose solution $x(t)$ exists in some compact time interval $I = [0,T]$ 
for some $T > 0$. For a scalar function $h: \R^{n} \to \R$ satisfying the following differential 
inequality, the corresponding one-dimensional comparison system in $z$ can be constructed
\begin{equation}\label{eq:scalar-comparison}
  h' \geq \eta(h(x),t), \qquad z' = \eta(z(t),t)
\end{equation}
where $z$ is a fresh scalar variable and 
$\eta: \R \times I \to \R$ is an appropriately chosen function (the exact criteria are investigated below). The comparison principle 
enables one to use the solutions $h(x)$, i.e., $h(x(t))$ to relate the solutions 
$x(t)$ of the original system to the solution $z(t)$ of the comparison system. This 
sound reasoning is possible due an accompanying comparison theorem.

\begin{definition}[Local Lipschitz {\cite[\S10.IV]{walter2013ordinary}}]\label{def:local-lip}
  A function $\eta(\cdot)$ is said to be \emph{locally Lipschitz} if the following formula 
  is valid, $B_{r}(a) \equiv t \geq 0 \Land \norm{x - a}^{2} < r^{2} \Land \norm{y - a}^{2} < r^{2}$ is an open ball of radius $r$.
  \[\LocLip{\eta} \defequiv \forall \delta \forall a \exists r \exists k \forall t \forall x \forall y (t \leq \delta^{2} \Land B_{r}(a) \to \norm{\eta(x,t) - \eta(y,t)}^{2} \leq k^{2}\norm{x - y}^{2}).\]
\end{definition}

\begin{theorem}[Scalar Comparison Theorem {\cite[\S9.IX]{walter2013ordinary}}]\label{thm:scalar-comparison}
  Let $g$ and $h$ be scalar functions differentiable in $I = [\xi, \xi + \delta] \subseteq \R$ and 
  $\eta: D \to \R$ locally Lipschitz for some real domain $D$. If $h(\xi) \geq g(\xi)$ and $h' - \eta(h,t) \geq g' - \eta(g,t)$ 
  in $I$, then $h \geq g$ in $I$.
\end{theorem}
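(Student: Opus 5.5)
The plan is to study the difference $w \defeq g - h$ on $I = [\xi, \xi+\delta]$ and show that $w \leq 0$ throughout. By hypothesis $w(\xi) \leq 0$. Rearranging the differential inequality gives
\[
w' \leq \eta(g,t) - \eta(h,t) \quad \text{on } I .
\]
The key input is the local Lipschitz property of $\eta$. Since $g$ and $h$ are differentiable, hence continuous, on the compact interval $I$, the set $\{(g(t),t),(h(t),t) : t \in I\}$ is compact. Covering it by the balls from Definition~\ref{def:local-lip} and extracting a finite subcover (plus a Lebesgue number argument) produces a single constant $L \geq 0$. On the relevant region we then get $|\eta(g,t) - \eta(h,t)| \leq L\,|g - h|$ for all $t \in I$. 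Consequently, wherever $w(t) > 0$ we have $w'(t) \leq L\, w(t)$.

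Next we run a Grönwall-type argument. Suppose, for contradiction, that $w(t_1) > 0$ for some $t_1 \in I$. Let $t_0 = \sup\{ t \in [\xi, t_1] : w(t) \leq 0 \}$. By continuity $w(t_0) \leq 0$, hence $w(t_0) = 0$, and $w > 0$ on $(t_0, t_1]$. On this interval set $v(t) = e^{-L(t - t_0)} w(t)$. Then $v' = e^{-L(t-t_0)}(w' - L w) \leq 0$, so $v$ is nonincreasing on $[t_0, t_1]$. By the mean value theorem, $v(t_1) \leq v(t_0) = 0$, which contradicts $w(t_1) > 0$. Hence $w \leq 0$, i.e.\ $h \geq g$, on all of $I$.

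The main obstacle is extracting a uniform Lipschitz constant. Definition~\ref{def:local-lip} is phrased in terms of balls around points, for each time bound, and with constants $k$ that depend on the centre $a$. We need one constant $L$ valid along both trajectories at once. The step I would try first is this: for fixed $\delta$, apply the definition at each point of the compact trace, take a finite subcover, and let $L$ be the maximum of the finitely many $|k|$. For pairs $(g(t),h(t))$ that do not lie in a common ball, it is enough to work locally near $t_0$. Since $w(t_0) = 0$, for $t$ slightly larger than $t_0$ the values $g(t)$ and $h(t)$ lie in a single ball around $g(t_0) = h(t_0)$. So the contradiction argument only needs this local estimate on a small interval $(t_0, t_0+\epsilon)$. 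That suffices: $v$ nonincreasing on $[t_0,t_0+\epsilon)$ forces $w \leq 0$ there, which contradicts the choice of $t_0$ as a supremum with $w>0$ immediately to its right. This local version sidesteps the uniformity issue entirely, so it is the route I would take.
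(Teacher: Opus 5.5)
Your proof is correct, and the localized version you settle on is the right one to run. The paper does not prove Theorem~\ref{thm:scalar-comparison} itself; it imports it from Walter. The natural comparison is therefore the derivation of axiom SCP in Lemma~\ref{lem:scp}. That derivation uses the same Gr\"onwall-type exponential weight as your $v=e^{-k(t-t_0)}(g-h)$. It runs it as a global differential invariant $\exp(k\tau)\cdot(h-z)\geq 0$, with one Lipschitz constant $k$ fixed for the whole flow.

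You instead argue by contradiction at the last zero $t_0$ of $w=g-h$. Since $g(t_0)=h(t_0)$, both trajectories stay in a single Lipschitz neighbourhood on $(t_0,t_0+\epsilon)$. You then use $\eta(g,t)-\eta(h,t)\leq k\,|g-h| = k\,w$ only where $w>0$, which is exactly where dropping the absolute value is legitimate. This makes a uniform constant unnecessary. One could also be obtained, e.g.\ by chaining along the segment between $h(t)$ and $g(t)$ in steps shorter than the Lebesgue number, but you do not need it.

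The global-invariant formulation, by contrast, has to establish $k(h-z)+\eta(h,\tau)-\eta(z,\tau)\geq 0$ throughout the evolution domain. The Lipschitz estimate delivers that only where $h\geq z$ and, for a merely locally Lipschitz $\eta$, only on bounded regions. Your localization is what discharges those side conditions. The price is that your argument is a semantic first-crossing argument, not a single differential-invariant step that the proof calculus of \dL can check directly.
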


This theorem is a general version, which relates two differentiable functions 
on some (compact) interval. Following is a corollary based on Theorem~\ref{thm:scalar-comparison} 
that directly relates to the system in (\ref{eq:scalar-comparison}) and is more 
useful in our case.

\begin{corollary}\label{cor:scalar-comparison}
  Let $h$ and $z$ be differentiable scalar functions defined on the interval 
  $I = [0, T] \subset \R$ where $T > 0$. For some locally Lipschitz function $\eta: \R \times I \to \R$, 
  if $h(0) \geq z(0)$, and if $h' \geq \eta(h,t)$ and $z' = \eta(z,t)$ holds in $I$, 
  then $h(t) \geq z(t)$ for all $t \in I$.
\end{corollary}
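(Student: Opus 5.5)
The plan is to derive the corollary directly from Theorem~\ref{thm:scalar-comparison} by instantiating it with $g \defeq z$, $\xi \defeq 0$, $\delta \defeq T$, and domain $D \defeq \R \times I$. With these choices the interval $[\xi, \xi+\delta]$ is exactly $I = [0,T]$, and both $h$ and $z$ are differentiable there by hypothesis.

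Next I check the two hypotheses of the theorem. The initial condition $h(\xi) \geq g(\xi)$ is literally $h(0) \geq z(0)$. For the derivative condition, rewrite the assumptions as differences: $h' \geq \eta(h,t)$ gives $h' - \eta(h,t) \geq 0$ on $I$, and $z' = \eta(z,t)$ gives $z' - \eta(z,t) = 0$ on $I$. Chaining these yields $h' - \eta(h,t) \geq 0 = z' - \eta(z,t)$, which is the required inequality $h' - \eta(h,t) \geq g' - \eta(g,t)$ pointwise on $I$. The theorem then gives $h \geq z$ on $I$, i.e.\ $h(t) \geq z(t)$ for all $t \in I$.

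The only step needing care is that the local Lipschitz hypothesis on $\eta$ is used in the form the theorem expects. The corollary assumes $\eta: \R \times I \to \R$ is locally Lipschitz in its first argument, uniformly for $t$ in compact time ranges, as in Definition~\ref{def:local-lip} with the time bound $t \leq \delta^2$. Since $I$ is compact, this is the regularity Theorem~\ref{thm:scalar-comparison} requires on $D$. I would also note that $h$ and $z$ are continuous on the compact $I$, so their ranges lie in a compact set. A single Lipschitz constant then suffices there, which is what the classical proof of the theorem uses internally.

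Should one wish to avoid relying on that subtlety, there is a self-contained fallback. Set $w = z - h$ and observe that $w' \leq \eta(z,t) - \eta(h,t) \leq k|w|$ on $I$. Then run a Grönwall-type argument on $\max(w,0)$ starting from $w(0) \leq 0$ to conclude $w \leq 0$ on $I$.
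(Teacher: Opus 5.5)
Your proposal is correct and is the paper's argument: instantiate Theorem~\ref{thm:scalar-comparison} with $g = z$ on $[0,T]$, and chain $h' - \eta(h,t) \geq 0 = z' - \eta(z,t)$ to get its derivative hypothesis. Your remarks on the Lipschitz hypothesis and the Gr\"{o}nwall-type fallback on $\max(z-h,0)$ go beyond the paper's one-line proof, but they are sound.
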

\begin{proof}
  Since $z' - \eta(z,t) = 0$ and $h' - \eta(h,t) \geq 0$, also 
  $h' - \eta(h,t) \geq z' - \eta(z,t)$ in $I$. The rest follows directly from Theorem~\ref{thm:scalar-comparison}.
\end{proof}

\begin{figure}[htbp]
    \centering
    % first Image
    \begin{subfigure}[b]{0.3\textwidth}
        \centering
        \includegraphics[width=\textwidth]{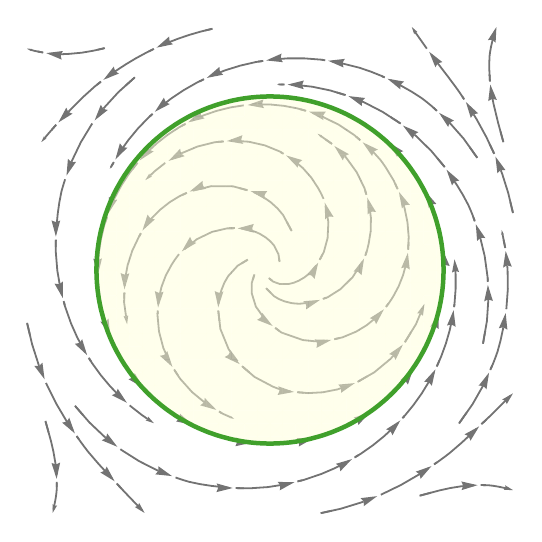}
        \caption{}
        \label{fig:phase-plot}
    \end{subfigure}
    % second Image
    \begin{subfigure}[b]{0.3\textwidth}
        \centering
        \includegraphics[width=\textwidth]{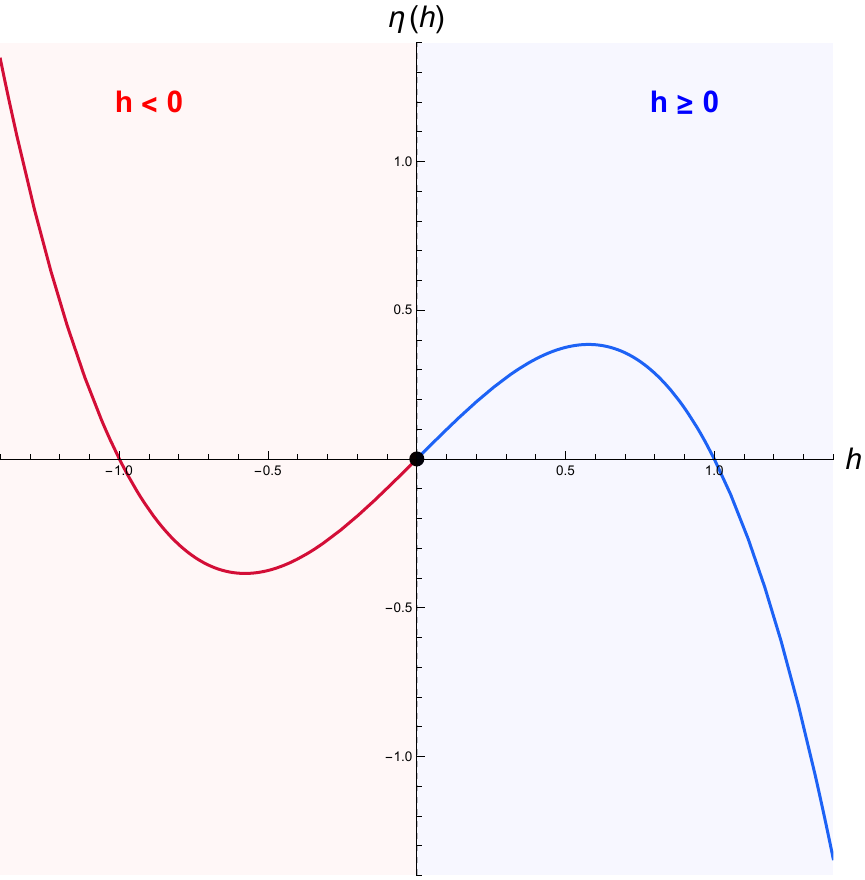}
        \caption{}
        \label{fig:eta-h}
    \end{subfigure}
    % third Image
    \begin{subfigure}[b]{0.3\textwidth}
        \centering
        \includegraphics[width=\textwidth]{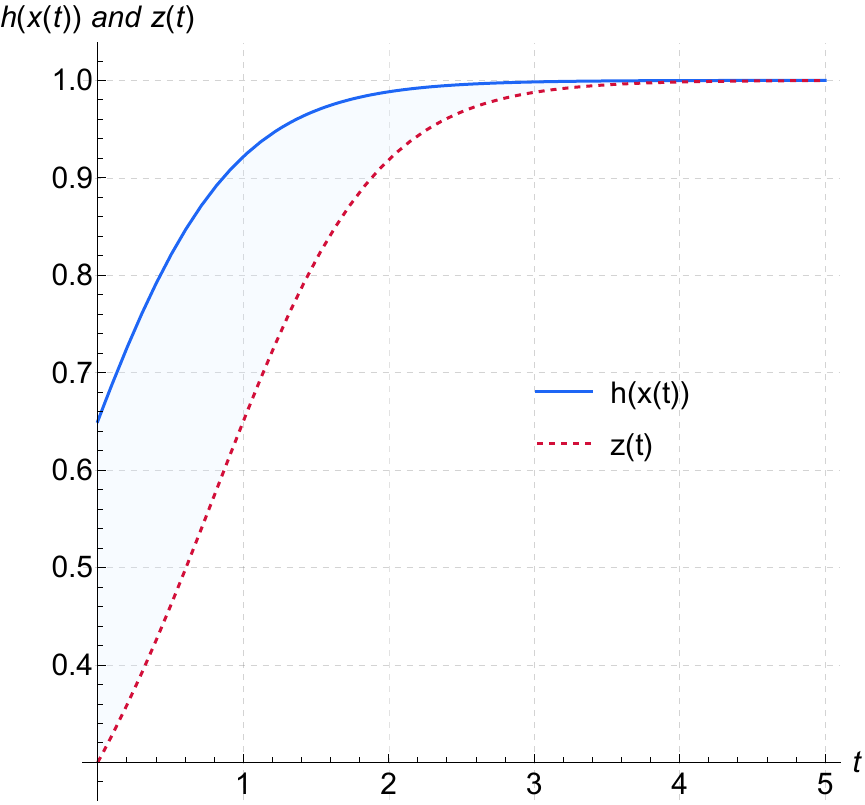}
        \caption{}
        \label{fig:lower-bound}
    \end{subfigure}
    \vspace{-5pt}
    \caption{The phase portrait of the system $\alpha_{e}$ with the invariant 
    $h \geq 0$ from Example~\ref{ex:invariance-comparison} is shown in (a). Figure (b) shows how $\eta(h)$ changes 
    as $h$ changes. The right figure (c) shows that $h(x(t))$ is lower bounded by 
    the solution $z(t)$ of the comparison system $z' = \eta(z)$.}
    \label{fig:comparison}
\end{figure}

An important ingredient of Theorem~\ref{thm:scalar-comparison} and Corollary~\ref{cor:scalar-comparison} 
is that the function $\eta(\cdot)$ must be locally Lipschitz, making sure that there 
are no pathological behaviors in the solution of the differential inequality and 
the comparison system \cite{walter2013ordinary}.
The validity of the formula \LocLip{$\eta$} from Definition~\ref{def:local-lip} 
ensures that the function $\eta(\cdot)$ is locally Lipschitz, 
and can be soundly utilized in \dL to carry out proofs. Moreover, if $\eta(\cdot)$ is a 
polynomial, then $\LocLip{\eta}$ is a first-order logic formula over the reals (FOL$_\R$), and hence 
decidable \cite{bochnak2013real}.

\begin{remark}
  The function $\eta(\cdot)$, which acts as the bridge relating the differential inequality 
  and the comparison system is often called the \emph{comparison function} \cite{kellett2014compendium}. 
  Different comparison functions with different properties can be chosen depending on the proof requirement \cite{kellett2014compendium}.
\end{remark}

\begin{example}[Invariance via Comparison]\label{ex:invariance-comparison}
  Consider the following 2D dynamical system
  \[\alpha_{e} \equiv x_{1}' = -\frac{1}{2}(1 - x_ {1}^{2} - x_ {2}^{2})(x_ {1}^{2} + x_{2}^{2} - 2)x_{1} - x_{2}, \ x_{2}' = -\frac{1}{2}(1 - x_ {1}^{2} - x_ {2}^{2})(x_ {1}^{2} + x_{2}^{2} - 2)x_{2} + x_{1}\]
  here $x = (x_{1}, x_{2})^{\top} \in \R^{2}$. For the system $\alpha_{e}$ and the 
  function $h(x) = 1 - x_{1}^{2} - x_{2}^{2}$, $h(x) \geq 0$ is an invariant, as 
  illustrated\footnote{Figure~\ref{fig:comparison} is only meant for illustration purposes and is not a formal proof. The invariance of $h \geq 0$ can be formally proven in \dL by utilizing axiom \lref{ax:SCP}.} 
  in Figure~\ref{fig:phase-plot}. The invariance of $h \geq 0$ can 
  be proved using the comparison principle from Corollary~\ref{cor:scalar-comparison} 
  by showing $\dot{h} \geq \eta(h)$, where $\eta(h) := h - h^{3}$, holds true. As we 
  can see $\eta(\cdot)$ is a nonlinear locally Lipschitz function, Figure~\ref{fig:eta-h} 
  shows how $\eta(h)$ changes when $h \geq 0$ or $h < 0$. Now we can construct the 
  comparison system $z' = \eta(z)$ whose solutions $z(t)$ lower bounds $h(x(t))$ 
  whenever $h(x(0)) \geq z(0) \geq 0$, making $h \geq 0$ an invariant, as can be 
  seen in Figure~\ref{fig:lower-bound}.
\end{example}

\begin{lemma}[Scalar Comparison Principle]\label{lem:scp}
  The axiom \lref{ax:SCP} internalizes the scalar comparison principle and is derived syntactically in \emph{\dL}, where the term $h$ represents $h(x)$, 
  and $h$, $z$, $\eta(\cdot)$ are scalars.
  \begin{align*}
        \axtag{ax:SCP} \quad & h \geq z \Land \LocLip{\eta} \to \\
    & (\fBox[\odes{x' = \mathbf{f}(x), \tau' = 1}[Q]][\dot{h} \geq \eta(h,\tau)] \to \fBox[\odes{x' = \mathbf{f}(x), \tau' = 1, z' = \eta(z,\tau)}[Q]][h \geq z])
  \end{align*}
\end{lemma}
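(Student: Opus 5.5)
The plan is to reduce \lref{ax:SCP} to an invariance argument for the single difference term $e \defeq h - z$ along the extended ODE $\odes{x' = \mathbf{f}(x), \tau' = 1, z' = \eta(z,\tau)}[Q]$. The assumption $\fBox[\odes{x' = \mathbf{f}(x), \tau' = 1}[Q]][\dot{h} \geq \eta(h,\tau)]$ mentions neither $z$ nor $z'$. So the first step is to transfer it to the extended system. Because the $z$-equation is nonlinear, \lref{ax:DG} does not apply. Instead I would use the base \dL axiom for removing fresh ODE coordinates under box postconditions that do not mention those coordinates (the soundness direction of \lref{ax:BDG} that needs no boundedness premise). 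This gives $\fBox[\odes{x' = \mathbf{f}(x), \tau' = 1, z' = \eta(z,\tau)}[Q]][\dot{h} \geq \eta(h,\tau)]$. A differential cut then adds $\dot{h} \geq \eta(h,\tau)$ to the domain constraint. Along the extended system, the differential of $e$ becomes $\dot{h} - \eta(z,\tau)$, which is at least $\eta(h,\tau) - \eta(z,\tau)$ in the cut domain.

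The core step is to prove $e \geq 0$ invariant from $e \geq 0$ initially, using only $e' \geq \eta(h,\tau) - \eta(z,\tau)$ and local Lipschitzness. Globally, there is no uniform Lipschitz constant. I would therefore use the real-induction/local-progress characterization of invariance from \cite{DBLP:journals/jacm/PlatzerT20}, which reduces the claim to a local statement. It suffices to show that from any state satisfying $e \geq 0$ (and $Q$), the solution stays in $e \geq 0$ for some short positive time. At such a state, instantiate $\LocLip{\eta}$ with $\delta$ chosen so that $\delta^2$ exceeds the current $\tau$, and with centre $a$ equal to the current value of $h$. This yields a radius $r$ and a constant $k$. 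By continuity, for a short time both $h$ and $z$ stay in the ball $B_r(a)$ and $\tau$ stays below $\delta^2$. Locally this can be enforced by a differential cut with a strict inequality that holds initially and is preserved for positive time. On that neighbourhood, $\norm{\eta(h,\tau) - \eta(z,\tau)}^2 \leq k^2 e^2$ gives $\eta(h,\tau) - \eta(z,\tau) \geq -\sqrt{k^2}\,\lvert e\rvert$. Writing $\lvert e\rvert$ and $\sqrt{k^2}$ in first-order form, this is handled by \lref{Real}, and while $e \geq 0$ holds it yields $e' \geq -\sqrt{k^2}\, e$.

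On that neighbourhood, $e \geq 0$ is then a Darboux-type inequality $e' \geq -c\, e$ with constant $c$. I would prove it by the standard differential ghost trick with \lref{ax:DG}: add the linear ghost $y' = \tfrac{c}{2} y$ with $y^2 e \cdot \ldots$, or more simply $y' = c\,y$ with $y > 0$ initially, and show $y e \geq 0$ by a differential invariant. Here $(ye)' = cye + ye' \geq cye - cye = 0$, and $y > 0$ is itself an invariant via a second linear ghost. Finally, I would combine with $\exists y$ and \lref{ax:DG} to drop the ghost.

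I expect the main obstacle to be the localization step, because \dL has no direct syntax for ``for a short time''. The Lipschitz constant must be fixed before entering the ODE, and the solution must be kept inside $B_r(a)$ and inside $\tau \leq \delta^2$. This is exactly where the time variable $\tau$ from Remark~\ref{rem:time} is needed. My first attempt would use the local progress axiom $\langle\cdot\rangle$ with strict domain formulas $\norm{h - a}^2 < r^2 \Land \norm{z - a}^2 < r^2 \Land \tau < \delta^2$, which are open and true initially, so they persist briefly. If that is awkward, the fallback is to prove the bounded-horizon box $\fBox[\ldots \& Q \Land \tau \leq \delta^2 \Land B_r(a)][e \geq 0]$ and glue the pieces together with the real-induction axiom.
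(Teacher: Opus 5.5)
\textbf{Gap: the Darboux step.} Your route differs from the paper's. The paper introduces $z$ with \textsf{BDG}, discharges a boundedness side condition, and then runs a single global differential invariant on $\exp(k\tau)(h-z)\geq 0$ with one Lipschitz constant. Two of your choices improve on that. First, the ghost direction you need (adding $z'=\eta(z,\tau)$ to a box in the antecedent) is sound with no boundedness premise at all. Second, you correctly see that $k$ can only be obtained locally.

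However, the step that does the actual work fails. You derive $e'\geq -c\,e$ for $e=h-z$ only ``while $e\geq 0$ holds''. A differential invariant for $ye\geq 0$ needs $(ye)'\geq 0$ at every state of the (localized) evolution domain, and a Darboux argument likewise needs its cofactor inequality there. That includes states with $e<0$. At those states the Lipschitz bound only gives $e'\geq -c\lvert e\rvert = c\,e$. Hence $(ye)'=y(ce+e')$ is only bounded below by $2cye<0$, and the premise does not close. Cutting $e\geq 0$ into the domain first would be circular.

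No continuous cofactor rescues this. From $e'\geq -c\lvert e\rvert$ alone, a cofactor $g$ with $e'\geq g\,e$ would need $g\leq -c$ just above $e=0$ and $g\geq c$ just below. The paper's derivation contains the same step: it asserts $k(h-z)+\eta(h,\tau)-\eta(z,\tau)\geq 0$, which follows from the Lipschitz bound only when $h\geq z$. So following the paper does not supply the missing idea.

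\textbf{A possible repair.} You need to use the Lipschitz bound only at the boundary, where the sign of $e$ is known.
\begin{itemize}
\item On your localized domain, take $c=k^2+1$, so that $\lvert\eta(h,\tau)-\eta(z,\tau)\rvert\leq c\lvert e\rvert$ there.
\item Add the linear ghost $y'=2cy$ with $y=1$ initially, so $y>0$ throughout.
\item For fixed $\epsilon>0$, prove $w\equiv e+\epsilon y>0$ by a strict barrier argument. Initially $w\geq\epsilon>0$. Whenever $w=0$ you have $e=-\epsilon y<0$, hence $\lvert e\rvert=\epsilon y$ and $w'\geq -c\epsilon y+2c\epsilon y=c\epsilon y>0$.
\item This strict boundary condition is what real induction, which you already invoke, turns into invariance.
\item Since $\epsilon$ is not bound by the ODE, move $\forall\epsilon>0$ inside the box and conclude $e\geq 0$ by real arithmetic.
\end{itemize}

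\textbf{Two smaller points in the localization.} Centring the ball at the current $h$ need not put $z$ in it unless $e=0$, so handle states with $e>0$ separately by continuity. Also, the local Lipschitz formula only constrains $t\geq 0$, so you need $\tau\geq 0$ at the localization state. The antecedent of the axiom does not provide this; the paper's differential cut of $\tau\geq 0$ has the same issue.
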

\begin{proof}[Proof (Sketch)]
  The axiom is proven syntactically in \dL. The full proof can be found in Appendix~\ref{appendix:comparison}. 
  At a high level, the proof utilizes axiom \lref{ax:BDG} to introduce the comparison 
  system $z' = \eta(z,\tau)$ in the antecedent to match the succedent. This 
  enables proving the new invariant $\exp(k\tau)\cdot(h-z) \geq 0$ using 
  axioms \lref{ax:DC} and \lref{ax:DI_inequal}. 
  Where $\exp(\cdot)$ is the real exponential function, $k$ is the Lipschitz constant, and 
  $\tau$ is a variable for tracking time with the associated ODE $\tau' = 1$. This new invariant paired with 
  the fact $\eta(\cdot)$ being locally Lipschitz, i.e., \LocLip{$\eta$}, basically says 
  $\exp(k\tau)\cdot(h-z)$ is monotonically increasing, hence proving the original invariant 
  $h \geq z$.
\end{proof}

Axiom \lref{ax:SCP} in the above lemma internalizes Corollary~\ref{cor:scalar-comparison} 
in \dL, which enables sound utilization of the comparison principle in various invariance 
proofs. At a high level, axiom \lref{ax:SCP} says for all runs of $\odes{x' = \mathbf{f}(x), \tau' = 1, z' = \eta(z,\tau)}[Q]$, 
$h$ is lower-bounded by $z$, if $h \geq z$ was true initially, and for all runs of $\odes{x' = \mathbf{f}(x), \tau' = 1}[Q]$, 
the inequality $\dot{h} \geq \eta(h,\tau)$ holds true, for locally Lipschitz $\eta(\cdot)$.
Next we will generalize to vectorial case.

\subsection{Vector Comparison}

Similar to the scalar comparison principle, the \emph{vector comparison principle} 
relates the vectorial differential inequality $\mathbf{h}' \geq \boldsymbol{\eta}(\mathbf{h})$ 
to the comparison system $z' = \boldsymbol{\eta}(z)$. Here $z = (z_{1}, \dots, z_{n})^{\top}$ is fresh vector of 
variables and $\boldsymbol{\eta}(\cdot)$ is a 
locally Lipschitz vector-valued function. Consequently, one might think Theorem~\ref{thm:scalar-comparison} 
and Corollary~\ref{cor:scalar-comparison} can be na\"{\i}vely carried over to the 
vectorial case to prove $\mathbf{h} \geq z$ with component-wise ordering, i.e., $\bigwedge^{n}_{i=1} h_{i} \geq z_{i}$. Unfortunately, 
this is not true in general and more regularity assumptions are required on $\boldsymbol{\eta}(\cdot)$ \cite[\S10, Supplement I]{walter2013ordinary}. 
In order to better understand why that is the case consider the following trivial counterexample.

\begin{counterexample}
  For the system $x' = \mathbf{f}(x)$, where $x \in \R^{2}$ and $\mathbf{f}(x) = (0, 0)^{\top}$, 
  let $\mathbf{h}(x) = (0, 0)^{\top}$ and $\boldsymbol{\eta}(y_{1}, y_{2}) = (-y_{2}, 0)^{\top}$, 
  where $\mathbf{h}: \R^{2} \to \R^{2}$ and $\boldsymbol{\eta}: \R^{2} \to \R^{2}$. 
  Clearly $\boldsymbol{\eta}(\cdot)$ is a vector of polynomials, hence, locally 
  Lipschitz, also the following inequality holds
  \[\dot{\mathbf{h}}(x) = (0, 0)^{\top}, \quad \boldsymbol{\eta}(\mathbf{h}) = (0, 0)^{\top}, \quad \text{hence,} \quad \dot{\mathbf{h}} \geq \boldsymbol{\eta}(\mathbf{h}).\]
  Now consider the comparison system $z' = \boldsymbol{\eta}(z)$, where $z \in \R^{2}$ is 
  fresh, with the initial condition $z(0) = (0, -1)^{\top}$, this implies 
  $\mathbf{h}(x(0)) \geq z(0)$ holds. Next compute the solutions of the comparison 
  system
  \[z' = \boldsymbol{\eta}(z), \ \text{i.e.,} \ z_{1}' = -z_{2}, \ z_{2}' = 0, \quad \text{whose solutions are,} \quad z(t) = (t, -1)^{\top}.\]
  From this we can clearly see for all $t > 0$ and any initial value $x(0)$ that 
  $\mathbf{h}(x(t)) \ngeq z(t)$, which shows $\boldsymbol{\eta}(\cdot)$ just being 
  locally Lipschitz is not enough.
\end{counterexample}

The missing property is, \emph{if and only if} the locally Lipschitz vector-valued function $\boldsymbol{\eta}(\cdot)$ 
is \emph{quasimonotone increasing} then Theorem~\ref{thm:scalar-comparison} and 
Corollary~\ref{cor:scalar-comparison} can be lifted to the vectorial case \cite[\S10.XII]{walter2013ordinary}. 
This property can be encoded as a logical formula and is defined as follows.

\begin{definition}[Quasimonotone Increasing {\cite[\S10.XII]{walter2013ordinary}}]\label{def:quasi-mon}
  A vector-valued function $\boldsymbol{\eta}(\cdot)$ is called \emph{quasimonotone increasing} 
  if the following formula is true.
  \begin{equation*}
    \QuasiMon{\boldsymbol{\eta}} \defequiv \forall t \forall x \forall y \left(x \geq y \to \bigwedge^{n}_{i=1}\left(x_{i} = y_{i} \to \eta_{i}(x,t) \geq \eta_{i}(y,t)\right)\right)
  \end{equation*}
  Where $x \geq y$ is component-wise ordering i.e., $x_{i} \geq y_{i}$ for $i = 1, \dots, n$.
\end{definition}

This quasimonotonicity fact combined with the local Lipschitz condition can now be 
utilized to lift the scalar comparison theorem to the vectorial case as stated in the following 
theorem.

\begin{theorem}[Vector Comparison Theorem {\cite[\S10.XII]{walter2013ordinary}}]\label{thm:vector-comparison}
  Let $\boldsymbol{\eta}: D \to \R^{n}$ be quasimonotone 
  increasing and locally Lipschitz for some real domain $D$, and let $\mathbf{v}$ 
  and $\mathbf{w}$ be differentiable on $I = [\xi, \xi + \delta] \subseteq \R$. If 
  $\mathbf{v}(\xi) \geq \mathbf{w}(\xi)$ and $\mathbf{v}' - \boldsymbol{\eta}(\mathbf{v},t) \geq \mathbf{w}' - \boldsymbol{\eta}(\mathbf{w},t)$ 
  in $I$, then $\mathbf{v} \geq \mathbf{w}$ in $I$.
\end{theorem}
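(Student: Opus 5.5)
We prove Theorem~\ref{thm:vector-comparison} by the usual perturbation argument combined with a first-crossing-time argument, using quasimonotonicity at the crossing point.

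Set $\mathbf{d} = \mathbf{v} - \mathbf{w}$. By continuity, the trajectories $\mathbf{v}(I)$ and $\mathbf{w}(I)$ lie in a compact subset of $D$. Covering this compact set by the balls from local Lipschitz continuity yields a single constant $L$ such that, on a neighborhood of both trajectories,
\[
\norm{\boldsymbol{\eta}(\mathbf{x},t) - \boldsymbol{\eta}(\mathbf{y},t)}_{\infty} \leq L \norm{\mathbf{x} - \mathbf{y}}_{\infty}.
\]

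\textbf{Step 1 (strict perturbation).} For small $\epsilon > 0$ define
\[
\mathbf{w}_{\epsilon}(t) = \mathbf{w}(t) - \epsilon e^{2L(t-\xi)}\mathbf{1},
\]
where $\mathbf{1} = (1,\dots,1)^{\top}$. Then $\mathbf{v}(\xi) > \mathbf{w}_{\epsilon}(\xi)$ componentwise. Moreover, for each $i$,
\[
w_{\epsilon,i}' - \eta_i(\mathbf{w}_{\epsilon},t) = w_i' - \eta_i(\mathbf{w},t) - 2L\epsilon e^{2L(t-\xi)} + \bigl(\eta_i(\mathbf{w},t) - \eta_i(\mathbf{w}_{\epsilon},t)\bigr).
\]
The last bracket is at most $L\epsilon e^{2L(t-\xi)}$ by the Lipschitz bound. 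Hence
\[
w_{\epsilon,i}' - \eta_i(\mathbf{w}_{\epsilon},t) < w_i' - \eta_i(\mathbf{w},t) \leq v_i' - \eta_i(\mathbf{v},t).
\]
So $\mathbf{w}_{\epsilon}$ is a strict subsolution relative to $\mathbf{v}$.

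\textbf{Step 2 (first contact).} We claim $\mathbf{v} > \mathbf{w}_{\epsilon}$ on all of $I$. Suppose not, and let $t_0 > \xi$ be the infimum of the times where some component fails. By continuity, $\mathbf{v}(t_0) \geq \mathbf{w}_{\epsilon}(t_0)$ and $v_i(t_0) = w_{\epsilon,i}(t_0)$ for some $i$. Since $v_i - w_{\epsilon,i} > 0$ on $[\xi,t_0)$ and vanishes at $t_0$, we get
\[
v_i'(t_0) \leq w_{\epsilon,i}'(t_0).
\]
On the other hand, quasimonotonicity (Definition~\ref{def:quasi-mon}), applied with $\mathbf{x} = \mathbf{v}(t_0)$ and $\mathbf{y} = \mathbf{w}_{\epsilon}(t_0)$, gives $\eta_i(\mathbf{v},t_0) \geq \eta_i(\mathbf{w}_{\epsilon},t_0)$. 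Combining this with the strict inequality from Step~1 gives
\[
v_i'(t_0) > w_{\epsilon,i}'(t_0) + \eta_i(\mathbf{v},t_0) - \eta_i(\mathbf{w}_{\epsilon},t_0) \geq w_{\epsilon,i}'(t_0),
\]
which contradicts $v_i'(t_0) \leq w_{\epsilon,i}'(t_0)$.

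\textbf{Step 3 (limit).} Letting $\epsilon \to 0$ in $\mathbf{v} > \mathbf{w}_{\epsilon}$ yields $\mathbf{v} \geq \mathbf{w}$ on $I$.

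\textbf{Main obstacle.} The delicate point is securing a uniform Lipschitz constant along a neighborhood of both trajectories. This requires that $\mathbf{w}_{\epsilon}$ stay inside the region where local Lipschitz continuity applies, which holds for $\epsilon$ small because a compact set has positive distance from the boundary of the neighborhood. It also requires the one-sided derivative argument at $t_0$; if $t_0$ is an endpoint, the same inequality is taken as a left derivative.
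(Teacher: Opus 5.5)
Your proof is correct, and it is the classical argument of the source the paper cites for this theorem (Walter, \S10.XII). The paper states the result without proof. The only in-paper argument to compare with is therefore its syntactic derivation of the vector comparison axiom in Lemma~\ref{lem:vcp}, and that derivation takes a genuinely different route.

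There, each $\exp(k\tau)\cdot(h_i - z_i)$ is shown to be nondecreasing by a differential invariant. This requires the pointwise estimate $k(h_i - z_i) + \eta_i(\mathbf{h},\tau) - \eta_i(z,\tau) \geq 0$ at every state of the evolution domain. The benefit is that everything stays inside the calculus, with no infimum over times and no limit $\epsilon \to 0$. The cost is that quasimonotonicity is hard to use there. Its hypothesis $z \leq \tilde{z}$ (that is, $z_j \leq h_j$ for $j \neq i$) is only known on the set being proved invariant. The same holds for the sign of $h_i - z_i$, which is needed to turn the Lipschitz bound into $-k(h_i - z_i)$.

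Your first-contact argument sidesteps exactly this. Quasimonotonicity is invoked only at $t_0$, where $\mathbf{v}(t_0) \geq \mathbf{w}_{\epsilon}(t_0)$ and $v_i(t_0) = w_{\epsilon,i}(t_0)$ hold by construction. The Lipschitz constant is fixed once, on a compact neighbourhood of the trajectories, before $\epsilon$ is chosen. The perturbation then supplies the strict inequality that contradicts $v_i'(t_0) \leq w_{\epsilon,i}'(t_0)$.

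One small repair: take $L > 0$, since for $L = 0$ the inequality in Step~1 is not strict. Enlarging a Lipschitz constant is harmless, so this costs nothing.
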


Similar to the scalar case, the above theorem is a general version for differentiable 
functions on some compact interval. Following is a direct corollary to 
Theorem~\ref{thm:vector-comparison}, which is more suited to relate properties 
of ODEs.

\begin{corollary}\label{cor:vector-comparison}
  For some differentiable vector-valued functions $\mathbf{h}$ and $z$, let $\boldsymbol{\eta}(\cdot)$ 
  be quasimonotone increasing and locally Lipschitz. If $\mathbf{h}' \geq \boldsymbol{\eta}(\mathbf{h},t)$ 
  and $z' = \boldsymbol{\eta}(z,t)$ hold, along with $\mathbf{h}(0) \geq z(0)$ on 
  $I = [0, T] \subset \R$ for some $T > 0$. Then for all $t \in I$ we have $\mathbf{h} \geq z$.
\end{corollary}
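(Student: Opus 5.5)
This corollary follows from Theorem~\ref{thm:vector-comparison} in the same way that Corollary~\ref{cor:scalar-comparison} follows from Theorem~\ref{thm:scalar-comparison}. We take $\mathbf{v} := \mathbf{h}$ and $\mathbf{w} := z$ on the interval $I = [0,T]$, which is the case $\xi = 0$ and $\delta = T$. The hypotheses of the corollary already supply the standing assumptions of the theorem: $\boldsymbol{\eta}(\cdot)$ is quasimonotone increasing and locally Lipschitz, and both $\mathbf{h}$ and $z$ are differentiable on $I$. The initial condition $\mathbf{v}(0) \geq \mathbf{w}(0)$ is exactly the assumption $\mathbf{h}(0) \geq z(0)$, read componentwise.

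The only hypothesis of Theorem~\ref{thm:vector-comparison} that still needs checking is the differential inequality $\mathbf{v}' - \boldsymbol{\eta}(\mathbf{v},t) \geq \mathbf{w}' - \boldsymbol{\eta}(\mathbf{w},t)$ on $I$. Since $z' = \boldsymbol{\eta}(z,t)$, we have $z' - \boldsymbol{\eta}(z,t) = \mathbf{0}$ componentwise. Since $\mathbf{h}' \geq \boldsymbol{\eta}(\mathbf{h},t)$ componentwise, we have $\mathbf{h}' - \boldsymbol{\eta}(\mathbf{h},t) \geq \mathbf{0}$. Combining the two gives
\[
\mathbf{h}' - \boldsymbol{\eta}(\mathbf{h},t) \;\geq\; \mathbf{0} \;=\; z' - \boldsymbol{\eta}(z,t)
\]
in each component $i = 1,\dots,n$ and for every $t \in I$. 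Theorem~\ref{thm:vector-comparison} then yields $\mathbf{h}(t) \geq z(t)$ for all $t \in I$.

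Only one step needs care: the domain $D$ of $\boldsymbol{\eta}$ in the theorem must contain the pairs $(\mathbf{h}(t),t)$ and $(z(t),t)$ for $t \in I$. Otherwise the inequality is not even meaningful there. We take $D = \R^{n} \times I$, which is how $\boldsymbol{\eta}$ is implicitly given in the corollary, so this containment holds automatically. With that settled, the argument above goes through without further difficulty.
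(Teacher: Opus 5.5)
Your proof is correct and matches the paper's argument: you derive $\mathbf{h}' - \boldsymbol{\eta}(\mathbf{h},t) \geq \mathbf{0} = z' - \boldsymbol{\eta}(z,t)$ componentwise and then apply Theorem~\ref{thm:vector-comparison} with $\mathbf{v} = \mathbf{h}$ and $\mathbf{w} = z$ on $[0,T]$. Your extra remark that the domain $D$ must contain the pairs $(\mathbf{h}(t),t)$ and $(z(t),t)$ is a harmless bookkeeping point that the paper leaves implicit.
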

\begin{proof}
  $\mathbf{h}' \geq \boldsymbol{\eta}(\mathbf{h},t)$ and $z' = \boldsymbol{\eta}(z,t)$ 
  imply $\mathbf{h}' - \boldsymbol{\eta}(\mathbf{h},t) \geq z' - \boldsymbol{\eta}(z,t)$. 
  Rest by Theorem~\ref{thm:vector-comparison}.
\end{proof}

This observation termed as the \emph{vector comparison principle} is axiomatized 
in \dL, which helps in invariance proofs for vectorial postconditions. The following 
lemma shows the axiom \lref{ax:VCP} that internalizes Corollary~\ref{cor:vector-comparison} 
and is derived syntactically in \dL's proof calculus.

\begin{lemma}[Vector Comparison Principle]\label{lem:vcp}
  The following axiom derives syntactically in \dL and enables reasoning about ODE properties using the vector comparison 
  principle. Where $\mathbf{h}$ represents $\mathbf{h}(x)$, and $\mathbf{h}$, 
  $z$, and $\boldsymbol{\eta}(\cdot)$ are vectors.
  \begin{align*}
    \axtag{ax:VCP} \quad & \mathbf{h} \geq z \Land \QuasiMon{\boldsymbol{\eta}} \Land \LocLip{\boldsymbol{\eta}} \to \\
    & (\fBox[\odes{x' = \mathbf{f}(x), \tau' = 1}[Q]][\dot{\mathbf{h}} \geq \boldsymbol{\eta}(\mathbf{h},\tau)] \to \fBox[\odes{x' = \mathbf{f}(x), \tau' = 1, z' = \boldsymbol{\eta}(z,\tau)}[Q]][\mathbf{h} \geq z])
  \end{align*}
\end{lemma}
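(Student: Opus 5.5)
The derivation of \lref{ax:VCP} will follow the same pattern as the proof sketched for Lemma~\ref{lem:scp}, with one extra ingredient: quasimonotonicity. I will first use \lref{ax:BDG} to add the ghost system $z' = \boldsymbol{\eta}(z,\tau)$ to the antecedent $\fBox[\odes{x' = \mathbf{f}(x), \tau' = 1}[Q]][\dot{\mathbf{h}} \geq \boldsymbol{\eta}(\mathbf{h},\tau)]$, so that it matches the ODE in the succedent. The side condition of \lref{ax:BDG} requires a bound $\norm{z}^{2} \leq p(x,\tau)$ along the augmented system. I intend to discharge it in the same way as in the scalar case. Local Lipschitzness $\LocLip{\boldsymbol{\eta}}$ gives linear growth on the compact time window $\tau \leq \delta^{2}$. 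This yields an estimate $(\norm{z}^{2})' \leq c(\norm{z}^{2} + q(x,\tau))$, which bounds $\norm{z}^{2}$ by an exponential-type term in $\tau$ via \lref{ax:DI_inequal} and \lref{ax:DC}. The time variable $\tau$ is exactly what makes this boundedness argument syntactically available (Remark~\ref{rem:time}).

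With the comparison system in place, I will prove the vector invariant $\bigwedge_{i=1}^{n} h_{i} \geq z_{i}$. In the scalar case a single weighted difference $\exp(k\tau)(h-z)$ sufficed. Here I will work componentwise with $e_{i} := \exp(k\tau)(h_{i}-z_{i})$, where $k$ is the Lipschitz constant from $\LocLip{\boldsymbol{\eta}}$. The exponential would be introduced as a differential ghost $w' = k w$ via \lref{ax:DG}, which is linear and hence unproblematic. The derivative computation gives
\[e_{i}' = \exp(k\tau)\bigl(k(h_{i}-z_{i}) + \dot h_{i} - \eta_{i}(z,\tau)\bigr) \geq \exp(k\tau)\bigl(k(h_{i}-z_{i}) + \eta_{i}(\mathbf{h},\tau) - \eta_{i}(z,\tau)\bigr),\]
where the inequality uses the antecedent, made available through \lref{ax:DC}.

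The crux is to show that the set $\{\mathbf{h} \geq z\}$ cannot be left. Differential invariants alone do not suffice, because $\eta_{i}(\mathbf{h},\tau) - \eta_{i}(z,\tau)$ depends on the other components. Quasimonotonicity enters exactly at the boundary: if $\mathbf{h} \geq z$ and $h_{i} = z_{i}$, then $\QuasiMon{\boldsymbol{\eta}}$ gives $\eta_{i}(\mathbf{h},\tau) \geq \eta_{i}(z,\tau)$. Since the conjunction is a semialgebraic-style closed set, my first attempt is to prove invariance of $\bigwedge_{i} h_{i} \geq z_{i}$ by a real-induction / barrier-style argument on the minimum component. Write $\mathbf{h} - z \geq -\epsilon$ for a perturbed family and use the Lipschitz estimate $\eta_{i}(\mathbf{h}) - \eta_{i}(z) \geq -k \sum_{j}\max(0, z_j - h_j)$. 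This estimate comes from combining quasimonotonicity, applied to the componentwise maximum of $\mathbf{h}$ and $z$, with Lipschitzness. The result is a scalar inequality for $m := \sum_{j} \max(0, z_{j}-h_{j})^{2}$, or for the minimum. Concretely, I expect to derive $m' \leq c\, m$ with $m = 0$ initially, so that the scalar \lref{ax:SCP}, or a Darboux-style differential inequality with ghost, yields $m = 0$ throughout, i.e.\ $\mathbf{h} \geq z$.

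The main obstacle is this step. The $\max$ terms are not polynomial, so the argument must either be phrased through the sum of squares of negative parts using case splits in \lref{Real}, or through the \dL real-induction axioms for semianalytic invariants with the quasimonotonicity formula supplying the needed local progress at each boundary point. Finally, the invariant $\mathbf{h} \geq z$ is obtained, and \lref{ax:BDG} transfers it back to give the stated succedent.
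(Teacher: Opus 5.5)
Your opening matches the paper's: \lref{ax:BDG} to add $z'=\boldsymbol{\eta}(z,\tau)$ to the antecedent, \lref{ax:DC} of $\dot{\mathbf{h}}\geq\boldsymbol{\eta}(\mathbf{h},\tau)$, then componentwise $\exp(k\tau)(h_i-z_i)$. Your diagnosis that separate differential invariants cannot close is correct, and it applies to the paper's own step.

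The paper applies \lref{ax:DI_inequal} to each $\exp(k\tau)(h_i-z_i)\geq 0$ and closes the arithmetic with $\tilde z=(h_1,\dots,h_{i-1},z_i,h_{i+1},\dots,h_n)^\top$. It uses $\QuasiMon{\boldsymbol{\eta}}$ to get $\eta_i(z,\tau)\leq\eta_i(\tilde z,\tau)$, which needs $z\leq\tilde z$. It uses $\LocLip{\boldsymbol{\eta}}$ to get $\eta_i(\mathbf{h},\tau)-\eta_i(\tilde z,\tau)\geq -k(h_i-z_i)$, which needs $h_i\geq z_i$. Together these presuppose $\mathbf{h}\geq z$, which is not in the context after \lref{dW}.

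Your quantity $m=\sum_j\max(0,z_j-h_j)^2$ avoids this circularity, and the semantic estimate is right once restricted to indices with $z_j\geq h_j$. For $h_j>z_j$ your stated inequality can fail, but those terms have weight $0$ in $m'$. With $w=\max(\mathbf{h},z)$ componentwise you have $w\geq z$ and $w_j=z_j$, hence $\eta_j(z,\tau)-\eta_j(\mathbf{h},\tau)\leq\eta_j(w,\tau)-\eta_j(\mathbf{h},\tau)\leq k\|w-\mathbf{h}\|=k\sqrt{m}$. This gives $m'\leq 2k\sqrt{n}\,m$ on the whole strengthened domain, not just on $\mathbf{h}\geq z$. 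Then \lref{ax:DI_inequal} applied to $\exp(-2k\sqrt{n}\,\tau)\,m\leq 0$ would finish.

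The genuine gap is that the proposal stops exactly at this step, and the step does not go through in \dL as stated. The function $\max$ is neither a polynomial nor a differentiable function symbol of the term language. So $(m)'$, \lref{ax:DI_inequal} and Darboux reasoning cannot be applied to $m$. Case splits in \lref{Real} do not help, because the differential axioms need a single differentiable term. The real-induction alternative is not free either: at a boundary point, quasimonotonicity only gives $(h_i-z_i)'\geq 0$, which does not establish local progress.

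There are two further problems. First, $\LocLip{\boldsymbol{\eta}}$ provides a constant $k$ only per centre $a$ and a possibly tiny radius $r$. A single constant in $m'\leq c\,m$ therefore needs an additional argument, and the paper's $k$ in $\exp(k\tau)$ has the same issue. Second, your \lref{ax:BDG} side condition rests on a false claim: local Lipschitzness does not give linear growth. For example, $\eta(z)=z^2$ is locally Lipschitz, yet $z'=z^2$ blows up in finite time, so no bound $\|z\|^2\leq p(x,\tau)$ is provable in general; the paper's $\|z\|^2\leq c\tau$ fails likewise. You only need the direction that transfers a $z$-free property from the $(x,\tau)$-system to the augmented system. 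That direction is sound for arbitrary ghost dynamics, so that step should be justified without discharging any bound.
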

\begin{proof}[Proof (Sketch)]
  The proof follows similar to the scalar case by introducing the comparison system $z' = \boldsymbol{\eta}(z,\tau)$ 
  using axiom \lref{ax:BDG}, which enables proving the new invariant $\exp(k\tau)\cdot (\mathbf{h} - z) \geq 0$ 
  using \lref{ax:DC} and \lref{ax:DI_inequal}. But unlike the scalar case, this time it has to be shown 
  that each component of $z$ is upper-bounded by the respective 
  component of $\mathbf{h}$. In order to show this first $\exp(k\tau)\cdot (\mathbf{h} - z) \geq 0$ 
  is equivalently rewritten as $\bigwedge^{n}_{i=1} (\exp(k\tau)\cdot (h_{i} - z_{i}) \geq 0)$, 
  then we construct an intermediate 
  vector $\tilde{z} = (h_{1}, \dots, h_{i-1}, z_{i}, h_{i+1}, \dots, h_{n})^{\top}$ that 
  agrees with every component of $\mathbf{h}$ except for the $i$-th, where it agrees with 
  $z$, this enables showing $\norm{\mathbf{h} - \tilde{z}}^{2} = (h_{i} - z_{i})^{2}$. This paired with the fact that $\boldsymbol{\eta}(\cdot)$ is locally Lipschitz and 
  more importantly quasimionotone increasing enables proving $\exp(k\tau)\cdot (\mathbf{h} - z) \geq 0$.
  The full proof can be found in Appendix~\ref{proof:vcp}.
\end{proof}

\section{Control Invariance}\label{sec:control-invariance}

\begin{wrapfigure}{r}{0.45\textwidth}
  \centering
  \vspace{-25pt}
  \centering
  \includegraphics[width=0.45\textwidth]{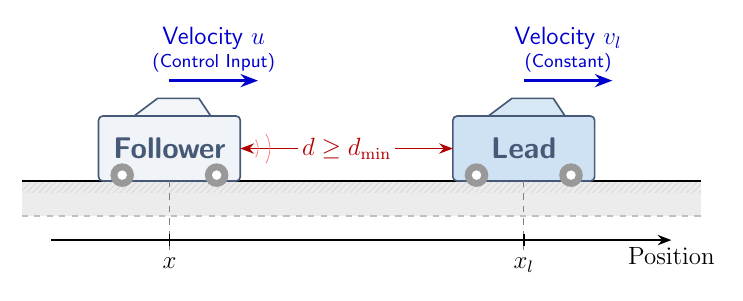}
  \vspace{-25pt}
  \caption{Adaptive cruise control system as modeled in Example~\ref{ex:acc}.}
  \label{fig:acc}
  \vspace{-10pt}
\end{wrapfigure}

With the two comparison principle axioms \lref{ax:SCP} and \lref{ax:VCP} in place 
we can move forward with our axiomatic analysis of control invariance.

\begin{example}[Adaptive Cruise Control (ACC)]\label{ex:acc}
  The following ODE models the dynamics of a vehicle following another vehicle in 
  straight line with the explicit time variable $\tau \in \R$
  \[\alpha_{\text{acc}} \equiv \odes{x' = u, x_{l}' = v_{l}, \tau' = 1}[Q].\]
  Here $x, x_{l} \in \R$ are the positions of the follower and lead vehicle, respectively.
  Variable $u \in \mathcal{U}$ is the 
  control input, where $\mathcal{U} \subseteq \R$ is some compact set, $v_{l}$ 
  is the constant velocity of the lead vehicle, and 
  $Q \equiv v_{l} \geq 0 \Land T \geq \tau$ is the domain constraint. For dynamics 
  $\alpha_{\text{acc}}$ we can model an adaptive cruise control system, as shown in 
  Figure~\ref{fig:acc}, in \dL, which 
  maintains a safe distance from the lead vehicle as follows
  \begin{equation}\label{eq:acc}
    \text{Safe} \to [(\text{ctrl};\alpha_{\text{acc}})^{*}]\text{Safe}
  \end{equation}
  where $\text{Safe} \equiv x_{l} - x \geq d_{\text{min}}$. Taking a closer look, 
  the formula Safe encodes the safety condition that says the distance between the 
  lead vehicle and the follower vehicle must be greater than or equal to the minimum 
  distance $(d_{\text{min}})$. 
  Formula (\ref{eq:acc}) can be proven safe in \dL for 
  the right implementation of the controller (ctrl). But, before one can construct such a 
  controller one would like to prove whether a control input $u \in \mathcal{U}$ 
  even exists that ensures safety, essentially asking the more general question 
  of control invariance, which is formally defined next.
\end{example}

\begin{definition}[Control Invariance]
  Given a control system $x' = \mathbf{f}(x,u)$, where $x \in \R^{n}$ and $u \in \mathcal{U} \subseteq \R^{m}$ 
  with $\mathbf{f}: \R^{n} \times \R^{m} \to \R^{n}$ locally Lipschitz, and some set $S \subseteq \R^{n}$, 
  \emph{control invariance} is the property that ensures there exists some control input $u \in \mathcal{U}$ such that 
  when starting from the set $S$, the solution of the control system remains in the set $S$ for 
  all future times in the domain of existence i.e., $\varphi(t) \in S, \ \forall t \in [0,T]$, for 
  some $T > 0$. The set $S$ is also known as a \emph{control invariant set}. 
\end{definition}

Informally, control invariance of $S$ is a property that the system admits, from
every state in $S$, at least one admissible control input that keeps the system
inside $S$ as long as the solution exists. One can observe that control 
invariance asks a more general question of ``whether a control input even exists'' such 
that a controller can be constructed for enforcing control invariance.
Control invariance properties can be modeled in \dL by existentially quantifying 
over the control input $u$ for the control system $x' = \mathbf{f}(x,u)$ i.e.,
\begin{equation}\label{eq:control-inv-dl}
  S \to \exists u \fBox[\ode[x]{\mathbf{f}(x,u)}[Q]][S].
\end{equation}
The validity of formulas of the form (\ref{eq:control-inv-dl}) proves that there always exist some control 
input that keeps $S$ invariant, implying control invariance. 
This shows that proving 
control invariance is more general than proving a control system stays invariant for a 
concrete controller implementation, like the adaptive cruise control hybrid program 
(\ref{eq:acc}). Now we move on to developing sound axioms and proof rules, 
which enable verifying control invariance in \dL, this is made possible by comparison 
invariants, which will be delved into next.

\subsection{Comparison Invariants}

We now introduce \emph{comparison invariants}, which enable proving control invariance 
by proving a functional inequality on their Lie derivative. Comparison invariants 
are based on the ideas of the comparison principles with an additional regularity 
enforcement on the comparison functions.

\begin{definition}[Class $\mathcal{G}$ Functions {\cite{kellett2014compendium}}]\label{def:class-g}
  A scalar function $\eta(\cdot)$ and a vector-valued function $\boldsymbol{\eta}(\cdot)$ 
  are called (extended) \emph{class $\mathcal{G}$} if the functions are zero at zero and 
  nondecreasing. They are characterized by the following logical formulas.
  \[\G{\eta} \defequiv \forall t \forall x \forall y (\eta(0,t) = 0 \Land (x \geq y \to \eta(x,t) \geq \eta(y,t))) \qquad \vecG{\boldsymbol{\eta}} \defequiv \bigwedge^{n}_{i = 1} \G{\eta_{i}}.\]
\end{definition}

The following simple counterexample demonstrates why extra care needs to be taken when 
handling control systems and the reason comparison invariants require the added 
regularity over just locally Lipschitz comparison functions even for the scalar case.

\begin{counterexample}\label{exp:counter-2}
  Consider the control system $x' = u$, where $x \in \R$ and $u \in \mathcal{U} = [-1,0]$, with the 
  requirement that $h(x) \geq 0$, with $h(x) := x$, stays invariant. In order to 
  show invariance of $h(x) \geq 0$, by the comparison principle from Corollary~\ref{cor:scalar-comparison}, 
  the system needs to satisfy the inequality
  \[\dot{h}(x) - \eta(h,t) \geq 0.\]
  Now choosing $\eta(h,t) := -1$, which is locally Lipschitz, but does not satisfy 
  $\eta(0,t) = 0$ to be a class $\mathcal{G}$ function as per Definition~\ref{def:class-g}. 
  We get the inequality $u + 1 \geq 0$, since $\dot{h}(x) = x' = u$. This 
  shows that every admissible control in $\mathcal{U}$ satisfies this inequality, but 
  when the system starts at $x(0) = 0$, and the controller chooses $u = -1$, we get 
  $x(t) = -t < 0$, meaning $h(x) \geq 0$ is no longer invariant, and the system 
  is unsafe. This demonstrates the need for $\eta(0,t) = 0$, to make sure trajectories 
  cannot escape at the boundary. Now choosing $\eta(h,t) := -h$, which is locally Lipschitz 
  and also satisfies $\eta(0,t) = 0$, but is decreasing and hence is not class $\mathcal{G}$. 
  Now the inequality becomes $u + h \geq 0$, although this condition still prevents 
  an immediate boundary crossing, its dependence on $h$ is reversed: increasing 
  the value of $h(x)$ decreases $\eta(h,t)$ and changes the comparison bound in 
  the opposite direction. Consequently, the ordering of the comparison condition 
  is not consistent with the ordering of the $h(x)$ values. By requiring $\eta(\cdot)$ 
  to be nondecreasing, class $\mathcal{G}$ functions rule out this reversed dependence, 
  which is useful for construction of practical controllers.
\end{counterexample}

\begin{definition}[Comparison Invariant]
  For a control system $x' = \mathbf{f}(x,u)$ with a control admissible set $\mathcal{U} \subseteq \R^{m}$, 
  and a set $S = \{x \in \R^{n} : h \succcurlyeq 0\}$, 
  where $h: \R^{n} \to \R$ is a scalar function and $\succcurlyeq$ is either 
  $\geq$ or $>$, $h$ is said to be a \emph{comparison invariant} if there exists a 
  comparison function $\eta(\cdot)$ that is locally 
  Lipschitz and class $\mathcal{G}$ such that it satisfies the inequality
  \[\forall x \in S. \exists u \in \mathcal{U}.(\Lie{\mathbf{f}(x,u)}{h} + \eta(h,t) \geq 0)\]
  implying control invariance, essentially $h \succcurlyeq 0$ stays invariant, 
  where $t \in [0, T]$ for some $T > 0$.
\end{definition}

A key requirement for comparison invariants is the existence of a comparison function 
that is not just locally Lipschitz but also of class $\mathcal{G}$.
As shown by Counterexample~\ref{exp:counter-2}, this added regularity ensures that the Lie derivative of the comparison invariant 
$h$ does not get too negative, while at the same time allowing the solution of the 
control system to, in a sense, move ``forward.'' The comparison function $\eta(\cdot)$ being 
class $\mathcal{G}$ means whenever $h = 0$, the Lie derivative becomes $\dot{h} \geq 0$. 
Semantically, this invariance argument is corroborated by the comparison principle, 
which ensures $h$ is lower bounded by the solution of the comparison system $z' = \eta(z)$, 
whenever $h(0) \geq z(0) \geq 0$. Next we axiomatize scalar comparison invariants.

\begin{theorem}[Scalar Comparison Invariants]\label{thm:scalar-ci}
  The scalar comparison invariants axiom \lref{ax:SCI} and proof rule \lref{pr:sCI} are 
  derivable in \emph{\dL}. Here $\mathcal{U}$ characterizes a compact 
  control input set. ($\succcurlyeq$ is either $\geq$ or $>$.)
  \begin{equation*}
    \begin{array}{@{}r@{\quad}l@{}}
      \axtag{ax:SCI}
      &
      \LocLip{\eta} \Land \G{\eta} \Land \fBox[\odes{x' = \mathbf{f}(x,u), \tau' = 1}[Q]][\exists u (\mathcal{U} \Land \dot{h} + \eta(h,\tau) \geq 0)] \to \\
    & (h \succcurlyeq 0 \to \exists u\fBox[\odes{x' = \mathbf{f}(x,u), \tau' = 1}[Q]][h \succcurlyeq 0]) \\ [2ex]
      \axtag{pr:sCI}
      &
      \vcenter{\hbox{%
        \AxiomC{$\vdash \LocLip{\eta} \Land \G{\eta}$}
        \AxiomC{$Q \vdash \exists u (\mathcal{U} \Land \dot{h} + \eta(h,\tau) \geq 0)$}
        \BinaryInfC{$h \succcurlyeq 0 \vdash \exists u\fBox[\odes{x' = \mathbf{f}(x,u), \tau' = 1}[Q]][h \succcurlyeq 0]$}
        \DisplayProof
      }}
    \end{array}
  \end{equation*}
\end{theorem}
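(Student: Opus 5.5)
The plan is to reduce \lref{ax:SCI} to the scalar comparison principle \lref{ax:SCP} with the comparison function $-\eta$. The derived rule \lref{pr:sCI} should then follow from the axiom by standard steps.

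\paragraph{Fixing the control and restating the hypothesis.} After the usual implication steps, assume $\LocLip{\eta}$, $\G{\eta}$, the boxed hypothesis, and $h \succcurlyeq 0$.
\begin{itemize}
\item Since $\tau$ and $x$ start in the initial state, which is reached by the ODE at time $0$, the boxed hypothesis yields a witness $u$ with $\mathcal{U}$ holding initially.
\item Instantiate the succedent's $\exists u$ with this same $u$.
\item Along the resulting ODE (where $u$ is constant), the boxed hypothesis then supplies $\fBox[\odes{x' = \mathbf{f}(x,u), \tau' = 1}[Q]][\dot{h} \geq -\eta(h,\tau)]$.
\end{itemize}
I expect this to be the main obstacle. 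The $u$ inside the postcondition is rebound by $\exists u$, so one must argue via monotonicity (\lref{ax:DC}/\lref{ax:DW}-style reasoning with the chosen constant witness) that the inequality holds for the fixed control along the whole trajectory. If the purely syntactic route fails, I would fall back on reading the premise as quantifying over the control used in the ODE, and make that identification explicit.

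\paragraph{The non-strict case $h \geq 0$.}
\begin{itemize}
\item Local Lipschitz continuity is preserved under negation, so $\LocLip{-\eta}$ follows by real arithmetic from $\LocLip{\eta}$.
\item Introduce a ghost $z$ with initial value $0$ via \lref{ax:BDG}, with ODE $z' = -\eta(z,\tau)$. Since $\eta(0,\tau) = 0$ by $\G{\eta}$, the formula $z = 0$ is invariant by \lref{ax:DI_inequal} for equalities. This also supplies the bound $z^{2} \leq 0$ that \lref{ax:BDG} requires.
\item Apply \lref{ax:SCP} with $h \geq z$ initially to obtain $\fBox[\dots, z' = -\eta(z,\tau)][h \geq z]$.
\item Combining this with $z = 0$ by \lref{ax:DC} gives $h \geq 0$.
\item Finally, remove the ghost using the \lref{ax:BDG} equivalence.
\end{itemize}

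\paragraph{The strict case $h > 0$.} Choose the ghost's initial value $z = h$ instead, so $h \geq z$ holds initially and \lref{ax:SCP} again gives $h \geq z$. It then remains to show that $z > 0$ is invariant for $z' = -\eta(z,\tau)$.
\begin{itemize}
\item Using $\LocLip{\eta}$ together with $\eta(0,\tau)=0$ and monotonicity, one has $\eta(z,\tau) \leq k z$ for $z \geq 0$ on the relevant compact time/state region. Compactness of time comes from $\tau$, as in Remark~\ref{rem:time}.
\item Hence $(\exp(k\tau) z)' = \exp(k\tau)(kz - \eta(z,\tau)) \geq 0$.
\item So $\exp(k\tau) z > 0$ is preserved, via the same Darboux-style argument used in the proof of \lref{ax:SCP}.
\item Boundedness of $z$ for \lref{ax:BDG} follows from $0 < z \leq h$, e.g.\ $z^{2} \leq h^{2}$.
\end{itemize}

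\paragraph{The rule \lref{pr:sCI}.} Take the first premise as $\LocLip{\eta} \Land \G{\eta}$. Lift the second premise $Q \vdash \exists u(\dots)$ to the boxed formula by \lref{ax:DW} and the G\"odel/monotonicity rule. Then cut in \lref{ax:SCI} and close with $h \succcurlyeq 0$ from the antecedent.
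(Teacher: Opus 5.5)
\paragraph{The gap.} The step that fails is the one you flagged under \emph{Fixing the control}, and it cannot be patched. The hypothesis $[x'=\mathbf{f}(x,u),\tau'=1\,\&\,Q]\,\exists u(\mathcal{U}\wedge\dot h+\eta(h,\tau)\ge 0)$ has two problems for your use of it:
\begin{enumerate}
\item It is about the trajectory driven by the current value of the free $u$, not by your witness $u_0$.
\item At each state on that trajectory, the inner $\exists u$ may pick a different value.
\end{enumerate}
No monotonicity, \lref{ax:DC} or \lref{ax:DW} reasoning turns $\exists u\,\psi(u)$ into $\psi(u_0)$ along a flow.

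In fact \lref{ax:SCI} and \lref{pr:sCI} are not valid as written. Take $x_1'=u$, $x_2'=1$, $h=1-(x_1-x_2^2)^2$, $\eta(v,t)=0\cdot v$, $\mathcal{U}\equiv 0\le u\le 8$ and $Q\equiv 0\le x_2\le 4$. Then $\dot h=-2(x_1-x_2^2)(u-2x_2)$, so the pointwise choice $u=2x_2$ makes both premises of \lref{pr:sCI} valid. Yet from $x_1=x_2=0$ (so $h=1$), every constant $u$ gives $h(t)=1-(ut-t^2)^2$ on $[0,4]$. Then $h(2)\ge 0$ forces $u\le 5/2$ and $h(4)\ge 0$ forces $u\ge 15/4$, for $\geq$ and for $>$ alike. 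Pointwise existence of a good control value only gives a feedback law, while $\exists u[\ldots]$ asks for a single constant. The paper's appendix derivation breaks at the same point. Its \lref{exR} step appeals to a Skolem ``function'' of the state, and its \lref{pr:M[']} step strips the inner $\exists u$ from the antecedent. So the paper offers no way around this either.

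\paragraph{Your fallback and the rest of the argument.} Your fallback is the honest repair: assume the inequality for the control that actually drives the ODE, $[\ldots](\mathcal{U}\wedge\dot h+\eta(h,\tau)\ge 0)$ with $u$ free, and instantiate the succedent's $\exists u$ by that same $u$. However, this proves a weaker statement. The rule then needs a uniform premise such as $\vdash\exists u\,\forall x\,\forall\tau\,(Q\to\mathcal{U}\wedge\dot h+\eta(h,\tau)\ge 0)$ rather than $Q\vdash\exists u(\ldots)$.

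Under that reading your reduction follows the paper's route: a ghost $z'=-\eta(z,\tau)$ (you use the correct sign), \lref{ax:SCP} for $h\ge z$, positivity of $z$, and ghost removal by \lref{ax:BDG}. Starting the ghost at $0$ in the $\geq$ case is a nice simplification, since it makes the \lref{ax:BDG} bound trivial. Two local fixes are needed.
\begin{itemize}
\item $z=0$ is not provable by differential invariance for equalities, which would need $\eta(z,\tau)=0$ for every $z$. Use \lref{ax:DI_inequal} on $z^2\le 0$ instead: $\G{\eta}$ gives $z\,\eta(z,\tau)\ge 0$, hence $(z^2)'=-2z\,\eta(z,\tau)\le 0$ everywhere.
\item In the $>$ case, $\exp(k\tau)(kz-\eta(z,\tau))\ge 0$ only holds where $z\ge 0$; for $z<0$ one has $kz\le\eta(z,\tau)\le 0$, so the derivative can be negative. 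You must first establish $z\ge 0$, for instance by comparing $z$ with a second ghost $w\equiv 0$ via \lref{ax:SCP} as in your $\geq$ case. Then add $z\ge 0$ to the domain with \lref{ax:DC} before applying \lref{ax:DI_inequal}.
\end{itemize}
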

\begin{proof}[Proof (Sketch)]
  The axiom \lref{ax:SCI} derives syntactically in \dL's proof calculus, the full proof 
  can be found in Appendix~\ref{appendix:comparison-invariants}. The proof utilizes  
  axiom \lref{ax:BDG} to introduce the comparison system $z' = \eta(z,\tau)$, 
  where $z$ is a fresh scalar variable and $\eta(\cdot)$ is a locally Lipschitz class $\mathcal{G}$ 
  function. Then the scalar comparison principle axiom \lref{ax:SCP} finishes the proof. 
  Proof rule \lref{pr:sCI} is also derived syntactically as follows.
  \begin{sequentproof}
    \ax[]{}{\LocLip{\eta} \Land \G{\eta}}
    \ax[]{Q}{\exists u (\mathcal{U} \Land \dot{h} + \eta(h,\tau) \geq 0)}
    \bi[andR,dW]{}{\LocLip{\eta} \Land \G{\eta} \Land \fBox[\odes{x' = \mathbf{f}(x,u), \tau' = 1}[Q]][\exists u (\mathcal{U} \Land \dot{h} + \eta(h,\tau) \geq 0)]}
    \un[ax:SCI]{h \succcurlyeq 0}{\exists u\fBox[\odes{x' = \mathbf{f}(x,u), \tau' = 1}[Q]][h \succcurlyeq 0]}
  \end{sequentproof}
  The proof utilizes axiom \lref{ax:SCI}, followed by \lref{andR} and \lref{dW} resulting 
  in the required premises.
\end{proof}

From Theorem~\ref{thm:scalar-ci}, proof rule \lref{pr:sCI} enables proving control 
invariance by reducing the box modality to two open premises with purely real arithmetic 
formulas. Importantly, when the comparison function $\eta(\cdot)$ is a polynomial 
the two resulting premises are formulas of first-order logic over the reals (FOL$_\R$) 
and are, hence, decidable by \emph{quantifier elimination} \cite{bochnak2013real}.

\begin{example}[ACC Revisited]
  Now we take another look at the adaptive cruise control system from Example~\ref{ex:acc}. 
  Recall $\alpha_{\text{acc}} \equiv \odes{x' = u, x_{l}' = v_{l}, \tau' = 1}[Q]$ with 
  $Q \equiv v_{l} \geq 0 \Land T \geq \tau$. Also recall, $\text{Safe} \equiv x_{l} - x \geq d_{\text{min}}$, 
  we can encode this safety condition as $h := x_{l} - x - d_{\text{min}}$. 
  Now we would like to prove control invariance for this system by proving the 
  following \dL formula
  \[h \geq 0 \to \exists u[\alpha_{\text{acc}}]h \geq 0.\]
  From our previous developments we know we can prove such formulas using the scalar comparison 
  invariants proof rule \lref{pr:sCI}:
  \begin{sequentproof}
    \ax[]{}{\LocLip{\eta} \Land \G{\eta}}
    \ax[]{Q}{\exists u(\mathcal{U} \Land \dot{h} + \eta(h,\tau) \geq 0)}
    \bi[pr:sCI]{h \geq 0}{\exists u\fBox[\odes{x' = u, x_{l}' = v_{l}, \tau' = 1}[Q]][h \geq 0]}
  \end{sequentproof}
  The Lie derivative of $h$ is computed as $\dot{h} = (x_{l})' - (x)' - (d_{\text{min}})' = v_{l} - u$, 
  observe $d_{\text{min}}$ is a constant. Now in order 
  to complete the proof we need to choose a class $\mathcal{G}$ function $\eta(\cdot)$, 
  a sufficient choice is the linear function $\eta(h,\tau) := \lambda h$, for some 
  positive $\lambda$. This function is a polynomial, hence locally Lipschitz, and satisfies the conditions 
  of class $\mathcal{G}$ from Definition~\ref{def:class-g}.
  \begin{sequentproof}
    \ax*[Real]{\lambda > 0}{\LocLip{\lambda h} \Land \G{\lambda h}}
    \un[cut,exL]{}{\LocLip{\eta} \Land \G{\eta}}
    \ax[]{d_{\text{min}} > 0, \lambda > 0, Q}{\exists u(\mathcal{U} \Land \dot{h} + \lambda h \geq 0)}
    \un[cut,exL]{Q}{\exists u(\mathcal{U} \Land \dot{h} + \eta(h,\tau) \geq 0)}
    \bi[pr:sCI]{h \geq 0}{\exists u\fBox[\odes{x' = u, x_{l}' = v_{l}, \tau' = 1}[Q]][h \geq 0]}
  \end{sequentproof}
  Now continuing with the right branch with the unfolded definition of $Q$, this 
  resulting branch only involves first-order logic formulas over the reals, and hence 
  is decidable \cite{bochnak2013real}. Depending on the specification of the control 
  constraint set $\mathcal{U}$ the branch can be closed by real arithmetic \lref{Real}, 
  proving control invariance.
  \begin{sequentproof}
    \ax*[Real]{d_{\text{min}} > 0, \lambda > 0, v_{l} \geq 0, T \geq \tau \geq 0}{\exists u(\mathcal{U} \Land v_{l} - u + \lambda h \geq 0)}
    \un[]{d_{\text{min}} > 0, \lambda > 0, v_{l} \geq 0, T \geq \tau \geq 0}{\exists u(\mathcal{U} \Land \dot{h} + \lambda h \geq 0)}
  \end{sequentproof}
  For the control constraint set $\mathcal{U} \equiv u_{max} \geq u \Land u \geq 0$, 
  control invariance of the system can be proven, as $u = 0$ is a valid control input 
  that maintains the safe distance, moreover, the upper bound $u_{max}$ can be computed 
  from the Lie derivative as $v_{l} + \lambda h \geq u$.
\end{example}

Now, similar to how the scalar comparison principle was lifted to the vectorial case, 
scalar comparison invariants can also be lifted to the so-called \emph{vector comparison invariants}.

\begin{definition}[Vector Comparison Invariants]
  For a control system $x' = \mathbf{f}(x,u)$ with a control admissible set $\mathcal{U} \subseteq \R^{m}$, 
  and a set $S = \{x \in \R^{n} : \mathbf{h} \succcurlyeq 0\}$, 
  where $\mathbf{h}: \R^{n} \to \R^{n}$ is a vector-valued function and $\succcurlyeq$ is either 
  $\geq$ or $>$ with component-wise ordering i.e., $h_{i} \succcurlyeq 0$ for $i = 1, \dots, n$, 
  $\mathbf{h}$ is said to be a comparison invariant if there exists a 
  vector-valued comparison function $\boldsymbol{\eta}(\cdot)$ that is locally 
  Lipschitz and class $\mathcal{G}$ such that it satisfies the following inequality
  \[\forall x \in S. \exists u \in \mathcal{U}.(\Lie{\mathbf{f}(x,u)}{\mathbf{h}} + \boldsymbol{\eta}(\mathbf{h},t) \geq 0)\]
  that implies control invariance, hence $\mathbf{h} \succcurlyeq 0$ stays invariant, 
  where $t \in [0, T]$ for some $T > 0$.
\end{definition}

As we can see from the above definition, vector comparison invariants lift scalar 
comparison invariants to $n$-dimensional vectors with component-wise ordering of the 
inequalities. Consequently, scalar comparison invariants are a special case of vector 
comparison invariants when $n = 1$. Next we axiomatize vector comparison invariants 
for vectorial control invariance proofs.

\begin{theorem}[Vector Comparison Invariants]\label{thm:vector-ci}
  The vector comparison invariants axiom \lref{ax:VCI} and proof rule \lref{pr:vCI} 
  derive syntactically in \emph{\dL}. The compact control input set is represented by 
  the formula $\mathcal{U}$. ($\succcurlyeq$ is either $\geq$ or $>$.)
  \begin{equation*}
    \begin{array}{@{}r@{\quad}l@{}}
      \axtag{ax:VCI}
      &
      \LocLip{\boldsymbol{\eta}} \Land \vecG{\boldsymbol{\eta}} \Land \fBox[\odes{x' = \mathbf{f}(x,u), \tau' = 1}[Q]][\exists u(\mathcal{U} \Land \dot{\mathbf{h}} + \boldsymbol{\eta}(\mathbf{h},\tau) \geq 0)] \to \\
      & (\mathbf{h} \succcurlyeq 0 \to \exists u\fBox[\odes{x' = \mathbf{f}(x,u), \tau' = 1}[Q]][\mathbf{h} \succcurlyeq 0]) \\ [2ex]
      \axtag{pr:vCI}
      &
      \vcenter{\hbox{%
        \AxiomC{$\vdash \LocLip{\boldsymbol{\eta}} \Land \vecG{\boldsymbol{\eta}}$}
        \AxiomC{$Q \vdash \exists u(\mathcal{U} \Land \dot{\mathbf{h}} + \boldsymbol{\eta}(\mathbf{h},\tau) \geq 0)$}
        \BinaryInfC{$\mathbf{h} \succcurlyeq 0 \vdash \exists u\fBox[\odes{x' = \mathbf{f}(x,u), \tau' = 1}[Q]][\mathbf{h} \succcurlyeq 0]$}
        \DisplayProof
      }}
    \end{array}
  \end{equation*}
\end{theorem}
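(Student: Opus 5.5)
The plan mirrors the proof of Theorem~\ref{thm:scalar-ci}, replacing \lref{ax:SCP} with \lref{ax:VCP}. Once \lref{ax:VCI} is derived, the rule \lref{pr:vCI} follows from it exactly as \lref{pr:sCI} followed from \lref{ax:SCI}. One applies \lref{ax:VCI}, splits the conjunction with \lref{andR}, and discharges the box postcondition $\exists u(\mathcal{U} \Land \dot{\mathbf{h}} + \boldsymbol{\eta}(\mathbf{h},\tau) \geq 0)$ by \lref{dW}. This leaves the two arithmetic premises. So the real work is the axiom.

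For \lref{ax:VCI}, assume $\LocLip{\boldsymbol{\eta}}$, $\vecG{\boldsymbol{\eta}}$, the box hypothesis, and $\mathbf{h} \succcurlyeq 0$. Pick the control witness $u$ from the box hypothesis, exactly as in the scalar derivation.

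\emph{Step 1.} Derive $\QuasiMon{-\boldsymbol{\eta}}$, or more precisely quasimonotonicity of the comparison function $\tilde{\boldsymbol{\eta}}(z,\tau) := -\boldsymbol{\eta}(z,\tau)$ used in \lref{ax:VCP}. Here there is a catch: $\vecG{\boldsymbol{\eta}}$ makes each $\eta_i$ nondecreasing, so $-\eta_i$ is nonincreasing. I would therefore read $\G{\eta_i}$ componentwise as a function of the $i$-th argument only, i.e.\ $\eta_i(\mathbf{h},\tau)=\eta_i(h_i,\tau)$. In that reading quasimonotonicity is trivial: if $x \geq y$ and $x_i = y_i$, then $-\eta_i(x_i,\tau) = -\eta_i(y_i,\tau)$. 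This closes by \lref{Real} and first-order reasoning. Local Lipschitzness of $-\boldsymbol{\eta}$ follows from $\LocLip{\boldsymbol{\eta}}$, since $\norm{-\boldsymbol{\eta}(x)+\boldsymbol{\eta}(y)}^2=\norm{\boldsymbol{\eta}(x)-\boldsymbol{\eta}(y)}^2$.

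\emph{Step 2.} Use \lref{ax:BDG} to introduce the comparison system $z' = -\boldsymbol{\eta}(z,\tau)$ with initial value $z = 0$, introduced by an existential ghost/assignment.
\begin{itemize}
\item The side condition of \lref{ax:BDG} is a bound $\norm{z}^2 \leq p(x,\tau)$. I would discharge it by showing $z \equiv 0$ along the flow: $\G{\eta_i}$ gives $\eta_i(0,\tau)=0$, so $z=0$ is an equilibrium and $\norm{z}^2 = 0$ is a differential invariant (via \lref{ax:DI_inequal} on $\norm{z}^2\le 0$ together with local Lipschitzness, or directly by uniqueness).
\item Then apply \lref{ax:VCP} with $\mathbf{h} \geq z$ initially. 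The hypothesis $\dot{\mathbf{h}} \geq -\boldsymbol{\eta}(\mathbf{h},\tau)$ is a rearrangement of the box hypothesis under \lref{ax:DMP}/monotonicity. This yields $\fBox[\cdots][\mathbf{h} \geq z]$.
\item Combining with the differential cut (\lref{ax:DC}) of $z = 0$ gives $\mathbf{h} \geq 0$, and \lref{ax:BDG} removes the ghost.
\end{itemize}

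\emph{Step 3: the strict case.} For $\mathbf{h} > 0$, use instead the initial value $z_i := h_i/2$, or simply $z := \mathbf{h}(x_0)$ frozen. Then $z$ solves $z'=-\boldsymbol{\eta}(z,\tau)$ with $z_i(0)>0$. Positivity of $z_i$ is preserved because $-\eta_i(z_i,\tau)$ vanishes at $z_i=0$, so uniqueness (local Lipschitz) prevents crossing. Syntactically, this is again a Darboux-style argument with \lref{ax:DC} and \lref{ax:DI_inequal} on $z_i>0$, applied componentwise. One also needs a bound for \lref{ax:BDG}; here I would use that $\tau$ is bounded on compact time (Remark~\ref{rem:time}) together with a Gr\"onwall-style bound from $\LocLip{\boldsymbol{\eta}}$.

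\emph{Main obstacle.} I expect the hardest step to be Step 3, namely obtaining the \lref{ax:BDG} norm bound and strict positivity of $z$ for nonzero initial values. A fallback is to pick $z$ with $0<z\le \mathbf{h}$ and argue $z>0$ via \lref{ax:DG} with a linear ghost $y$ satisfying $z_i y_i^2=1$, using a Lipschitz bound on $\eta_i(z_i,\tau)/z_i$. A secondary subtlety is the quasimonotonicity of $-\boldsymbol{\eta}$ in Step 1, which needs the componentwise dependence reading of $\vecG{\boldsymbol{\eta}}$.
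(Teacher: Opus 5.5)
The step that fails is ``pick the control witness $u$ from the box hypothesis'', together with the rearrangement built on it that produces the \lref{ax:VCP} hypothesis $[x'=\mathbf{f}(x,u),\tau'=1\,\&\,Q]\,\dot{\mathbf{h}}\geq-\boldsymbol{\eta}(\mathbf{h},\tau)$. In $[x'=\mathbf{f}(x,u),\tau'=1\,\&\,Q]\,\exists u(\mathcal{U}\land\dot{\mathbf{h}}+\boldsymbol{\eta}(\mathbf{h},\tau)\geq 0)$ the inner $u$ is a fresh bound variable, and its witness may change from state to state along the run. The $u$ in the ODE, by contrast, is a single constant. Nothing in the calculus takes you from $[\alpha]\exists u\,\phi$ to $\exists u\,[\alpha]\phi$, and the premise $Q\vdash\exists u(\ldots)$ of \lref{pr:vCI} is equally pointwise. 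This cannot be patched inside the stated theorem, because the statement is then false already for $n=1$. Take $x'=uy-y^{2}$, $y'=1$, $Q\equiv -1\leq y\leq 1$, $\mathcal{U}\equiv -1\leq u\leq 1$, $\boldsymbol{\eta}=0$, $\mathbf{h}=x$. Every state in $Q$ admits $u=\pm 1$ with $uy-y^{2}\geq 0$, so all hypotheses and both rule premises hold. Yet from $x=0$, $y=-1$, every constant $u$ gives $x=\int_{-1}^{1}(us-s^{2})\,ds=-2/3$ when $y$ reaches $1$. The paper's derivation makes the same move: it instantiates the existential ``with some Skolem function'' and later uses \lref{pr:M[']} for ``resolving the existential quantifier''. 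So it gives you no cover. A sound version needs the control quantifier outside the box, e.g.\ the hypothesis $\exists u\,[\alpha](\mathcal{U}\land\dot{\mathbf{h}}+\boldsymbol{\eta}(\mathbf{h},\tau)\geq 0)$, or for the rule a premise $\exists u\,\forall x\,\forall\tau\,(Q\to\mathcal{U}\land\dot{\mathbf{h}}+\boldsymbol{\eta}(\mathbf{h},\tau)\geq 0)$.

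Your Step 1 is not a reading of $\vecG{\boldsymbol{\eta}}$ but an extra hypothesis, and a necessary one. As defined, $\G{\eta_{i}}$ makes $\eta_{i}$ nondecreasing in the componentwise order of its whole vector argument, and then the statement fails even without controls. Take $x_{1}'=-x_{2}$, $x_{2}'=0$, $\mathbf{h}=x$, $\boldsymbol{\eta}(\mathbf{h},\tau)=(h_{2},0)^{\top}$, with any nonempty $\mathcal{U}$. This satisfies $\LocLip{\boldsymbol{\eta}}$, $\vecG{\boldsymbol{\eta}}$ and $\dot{\mathbf{h}}+\boldsymbol{\eta}(\mathbf{h},\tau)=0$. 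But from $x=(0,1)$, or $(\epsilon,1)$ for the strict case, $h_{1}=x_{1}$ decreases with slope $-1$ and becomes negative.

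With the correct ghost $z'=-\boldsymbol{\eta}(z,\tau)$, \lref{ax:VCP} needs $\QuasiMon{-\boldsymbol{\eta}}$. Together with $\vecG{\boldsymbol{\eta}}$ this forces each $\eta_{i}$ to depend only on $h_{i}$ and $\tau$. So your diagnosis is exactly right, and sharper than the paper's proof. The paper ghosts $z'=\boldsymbol{\eta}(z,\tau)$, which has the wrong sign for $\dot{\mathbf{h}}\geq-\boldsymbol{\eta}(\mathbf{h},\tau)$, and then invokes $\vecG{\boldsymbol{\eta}}\to\QuasiMon{\boldsymbol{\eta}}$. Make the restriction an explicit hypothesis.

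Under that hypothesis, and with the control quantifier repaired, your zero-initialized ghost is a genuine simplification of the paper's route, which initializes $z=\mathbf{h}$ and proves $z\succcurlyeq 0$ through $\exp(k\tau)\cdot z$. Since $(\norm{z}^{2})'=-2\sum_{i}z_{i}\eta_{i}(z_{i},\tau)\leq 0$ by $\G{\eta_{i}}$, \lref{ax:DI_inequal} gives the \lref{ax:BDG} bound $\norm{z}^{2}\leq 0$ outright.

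Step 3 is still only a sketch:
\begin{itemize}
\item Plain \lref{ax:DI_inequal} on $z_{i}>0$ fails, because $z_{i}'=-\eta_{i}(z_{i},\tau)\leq 0$ there.
\item The \lref{ax:BDG} bound is better obtained the same way, as $\norm{z}^{2}\leq\norm{c}^{2}$ for a frozen copy $c$ of the initial $z$. Your appeal to ``$\tau$ bounded'' only works if $Q$ bounds $\tau$.
\item Positivity of $z_{i}$ needs a Darboux or uniqueness argument with a bound $\eta_{i}(z_{i},\tau)\leq kz_{i}$. $\LocLip{\boldsymbol{\eta}}$ supplies such a $k$ only on bounded $z$ and for $\tau\leq\delta^{2}$. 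The paper's $\exp(k\tau)$ step uses a single $k$ globally and has the same weakness.
\end{itemize}
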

\begin{proof}[Proof (Sketch)] 
  Similar to the proofs of the previous axioms, the proof of axiom \lref{ax:VCI} uses 
  axiom \lref{ax:BDG} to introduce the comparison system $z' = \boldsymbol{\eta}(z,\tau)$, 
  where $z$ is a fresh vector of variables, 
  then the proof is finished by axiom \lref{ax:VCP}. Similarly, the proof rule 
  with two open premises is derived as follows. The full proof can be found in Appendix~\ref{proof:vci}.
  \begin{sequentproof}
    \ax[]{}{\LocLip{\boldsymbol{\eta}} \Land \vecG{\boldsymbol{\eta}}}
    \ax[]{Q}{\exists u(\mathcal{U} \Land \dot{\mathbf{h}} + \boldsymbol{\eta}(\mathbf{h},\tau) \geq 0)}
    \bi[andR,dW]{}{\LocLip{\boldsymbol{\eta}} \Land \vecG{\boldsymbol{\eta}} \Land \fBox[\odes{x' = \mathbf{f}(x,u), \tau' = 1}[Q]][\exists u(\mathcal{U} \Land \dot{\mathbf{h}} + \boldsymbol{\eta}(\mathbf{h},\tau) \geq 0)]}
    \un[ax:VCI]{\mathbf{h} \succcurlyeq 0}{\exists u\fBox[\odes{x' = \mathbf{f}(x,u), \tau' = 1}[Q]][\mathbf{h} \succcurlyeq 0]}
  \end{sequentproof}
  The required premises are derived using \lref{ax:VCI}, \lref{andR}, and \lref{dW}.
\end{proof}

With this we successfully axiomatized \emph{scalar} and \emph{vector comparison invariants} in 
\dL.

Comparison invariants are an important and general class of invariants, which will 
be further showcased in their use for axiomatizing control barrier functions next.

\subsection{Control Barrier Functions}

First introduced by Ames \emph{et al.} \cite{ames2014control,ames2016control,ames2019control}, 
control barrier functions (CBFs) have risen to popularity in the past decade as a 
go-to method for synthesizing safe controllers. CBF-based quadratic programming 
as an optimal control method provides an efficient way for enforcing safety with 
formal guarantees \cite{ames2019control}.
However, these formal safety guarantees 
are only enforced by the synthesized controller if the CBF used to synthesize it 
was a valid CBF in the first place, and verifying which is a nontrivial task \cite{clark2024semi}.

\begin{definition}[Class $\mathcal{K}$ Functions {\cite{kellett2014compendium}}]\label{def:class-k}
  A function is called an (extended) class $\mathcal{K}$ function 
  if it is zero at zero and strictly increasing.
  Class $\mathcal{K}$ functions are encoded as the following.
  \[\K{\eta} \defequiv \forall t \forall x \forall y(\eta(0,t) = 0 \Land (x > y \to \eta(x,t) > \eta(y,t))) \qquad \vecK{\boldsymbol{\eta}} \defequiv \bigwedge^{n}_{i = 1} \K{\eta_{i}}\]
  Where $\eta(\cdot)$ and $\boldsymbol{\eta}(\cdot)$ are scalar and vector-valued functions, respectively.
\end{definition}

Observe that class $\mathcal{K}$ functions have a stronger 
requirement of strictly increasing as compared to class $\mathcal{G}$, which are 
nondecreasing, meaning every class $\mathcal{K}$ function is also class $\mathcal{G}$, 
but not the other way around.

\begin{definition}[Control Barrier Function \cite{ames2016control}]\label{def:cbf}
  For a sublevel set $S = \{x \in \R^{n} : h(x) \geq 0\}$, where $h: \R^{n} \to \R$ is 
  a continuously differentiable scalar function, $h(x)$ is called a \emph{control barrier function} 
  for the control system $x' = \mathbf{f}(x,u)$ with the control admissible set $\mathcal{U}$ 
  if there exists a class $\mathcal{K}$ function $\gamma(\cdot)$ 
  such that the following inequality holds:
  \[\forall x \in S. \exists u \in \mathcal{U}. (\Lie{\mathbf{f}(x,u)}{h(x)} + \gamma(h(x)) \geq 0)\]
  implying $h(x) \geq 0$ stays invariant along $x' = \mathbf{f}(x,u)$.
\end{definition}

\begin{remark}
  Definition~\ref{def:cbf} gives the standard definition of CBFs as found in the 
  literature \cite{ames2014control,ames2016control,ames2019control,clark2024semi}, 
  where the class $\mathcal{K}$ function $\gamma(\cdot)$ is the special time-independent 
  case of Definition~\ref{def:class-k}. In order to reason about CBFs in \dL, however, 
  we require the use of Definition~\ref{def:class-k} with the special time-tracking 
  variable $\tau$ in order to facilitate proofs, as previously mentioned in Remark~\ref{rem:time}.
\end{remark}

We can already see that CBFs are essentially (scalar) comparison invariants with 
the added regularity that the comparison function $\eta(\cdot)$ is class $\mathcal{K}$, 
rather than the more general class $\mathcal{G}$. The key distinction is that 
class $\mathcal{K}$ functions are strictly increasing, so the allowable system 
behavior changes predictably with the CBF value. In practice, this gives CBFs 
additional structure for balancing safety with other control objectives, such as 
stability, tracking, or performance.

\begin{corollary}[Control Barrier Functions]\label{cor:cbf}
  The following axiom and proof rule derive syntactically in \emph{\dL} and can be 
  utilized to prove the validity of scalar control barrier functions. Here $\mathcal{U}$ 
  characterizes a compact set.
  \begin{equation*}
    \begin{array}{@{}r@{\quad}l@{}}
      \axtag{ax:CBF}
      &
      \LocLip{\eta} \Land \K{\eta} \Land \fBox[\odes{x' = \mathbf{f}(x,u), \tau' = 1}[Q]][\exists u (\mathcal{U} \Land \dot{h} + \eta(h,\tau) \geq 0)] \to \\
    & (h \geq 0 \to \exists u\fBox[\odes{x' = \mathbf{f}(x,u), \tau' = 1}[Q]][h \geq 0]) \\ [2ex]
      \axtag{pr:cbf}
      &
      \vcenter{\hbox{%
        \AxiomC{$\vdash \LocLip{\eta} \Land \K{\eta}$}
        \AxiomC{$Q \vdash \exists u (\mathcal{U} \Land \dot{h} + \eta(h,\tau) \geq 0)$}
        \BinaryInfC{$h \geq 0 \vdash \exists u\fBox[\odes{x' = \mathbf{f}(x,u), \tau' = 1}[Q]][h \geq 0]$}
        \DisplayProof
      }}
    \end{array}
  \end{equation*}
\end{corollary}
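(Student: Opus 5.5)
The plan is to obtain \lref{ax:CBF} as the instance of \lref{ax:SCI} (Theorem~\ref{thm:scalar-ci}) with $\succcurlyeq$ taken to be $\geq$. The only difference between the two axioms is that \lref{ax:CBF} assumes $\K{\eta}$ where \lref{ax:SCI} assumes $\G{\eta}$. So the key step is the purely first-order implication
\[\K{\eta} \to \G{\eta},\]
and once it is established the rest is propositional reasoning.

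To prove this implication, I unfold both definitions (Definition~\ref{def:class-g} and Definition~\ref{def:class-k}). I then use \lref{allR} to fix arbitrary $t,x,y$ and instantiate $\K{\eta}$ at the same $t,x,y$ with \lref{allL}.
\begin{itemize}
\item The conjunct $\eta(0,t)=0$ is shared by both definitions.
\item For monotonicity, assume $x \geq y$ and split into the cases $x>y$ and $x=y$.
\item If $x>y$, strict monotonicity gives $\eta(x,t)>\eta(y,t)$, which implies $\eta(x,t)\geq\eta(y,t)$.
\item If $x=y$, substituting equals for equals gives $\eta(x,t)=\eta(y,t)$, so again $\eta(x,t)\geq\eta(y,t)$.
\end{itemize}
The only point that needs care is the case $x=y$, which is not covered by strict monotonicity. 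It is closed by equality reasoning on the function symbol $\eta$ (congruence of function application), not by \lref{Real} alone when $\eta$ is uninterpreted. For polynomial instances of $\eta$ the whole implication is a FOL$_\R$ validity and \lref{Real} closes it directly.

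With the implication available, derive \lref{ax:CBF} by the following steps.
\begin{enumerate}
\item Assume the antecedent $\LocLip{\eta} \Land \K{\eta} \Land \fBox[\odes{x' = \mathbf{f}(x,u), \tau' = 1}[Q]][\exists u (\mathcal{U} \Land \dot{h} + \eta(h,\tau) \geq 0)]$ and $h\geq0$.
\item Use \lref{cut} with the lemma above to obtain $\G{\eta}$.
\item Apply \lref{ax:SCI} with $\succcurlyeq$ instantiated as $\geq$, which yields $\exists u\fBox[\odes{x' = \mathbf{f}(x,u), \tau' = 1}[Q]][h \geq 0]$.
\end{enumerate}

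The proof rule \lref{pr:cbf} is then derived exactly as \lref{pr:sCI} was in the proof of Theorem~\ref{thm:scalar-ci}:
\begin{enumerate}
\item Apply \lref{ax:CBF} to the conclusion.
\item Split the resulting conjunction with \lref{andR}.
\item Reduce the box subgoal to the real-arithmetic premise $Q \vdash \exists u (\mathcal{U} \Land \dot{h} + \eta(h,\tau) \geq 0)$ by \lref{dW}.
\end{enumerate}
The remaining subgoal $\vdash \LocLip{\eta}\Land\K{\eta}$ is the left premise.

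Alternatively, \lref{pr:cbf} follows from \lref{pr:sCI} directly. In that route the left premise $\LocLip{\eta}\Land\K{\eta}$ yields $\LocLip{\eta}\Land\G{\eta}$ through the same lemma, using \lref{cut} and \lref{andR}. I expect no real obstacle beyond handling the case $x=y$ cleanly in the lemma.
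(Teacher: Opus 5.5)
Your proposal is correct and follows the paper's route. You cut in $\G{\eta}$ using $\K{\eta} \to \G{\eta}$ and then invoke \lref{ax:SCI} with $\succcurlyeq$ taken as $\geq$, and you obtain \lref{pr:cbf} from \lref{ax:CBF} by \lref{andR} and \lref{dW}. Your version is slightly tighter in two respects: you apply \lref{ax:SCI} directly to the existential box formula, whereas the paper first rewrites the antecedent with \lref{pr:M[']} and splits it with \lref{ax:box_and}, and you close the $x=y$ case of $\K{\eta}\to\G{\eta}$ explicitly by congruence rather than appealing to \lref{Real}.
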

\begin{proof}[Proof (Sketch)]
  Axiom \lref{ax:CBF} is proven syntactically in \dL using \lref{ax:SCI}. The full proof 
  can be found in Appendix~\ref{appendix:cbf}. The proof rule \lref{pr:cbf} is derived as follows.
  \begin{sequentproof}
    \ax[]{}{\LocLip{\eta} \Land \K{\eta}}
    \ax[]{Q}{\exists u (\mathcal{U} \Land \dot{h} + \eta(h,\tau) \geq 0)}
    \bi[andR,dW]{}{\LocLip{\eta} \Land \K{\eta} \Land \fBox[\ode{x' = \mathbf{f}(x,u), \tau' = 1}[Q]][\exists u (\mathcal{U} \Land \dot{h} + \eta(h,\tau) \geq 0)]}
    \un[ax:CBF]{h \geq 0}{\exists u\fBox[\odes{x' = \mathbf{f}(x,u), \tau' = 1}[Q]][h \geq 0]}
  \end{sequentproof}
  The required two open premises are derived using axiom \lref{ax:CBF}, followed 
  by \lref{andR} and \lref{dW}.
\end{proof}

The above corollary shows how CBFs can be axiomatized in \dL, and happen to be a special 
instance of comparison invariants. The resulting axiom 
\lref{ax:CBF} and proof rule \lref{pr:cbf} enable proving control invariance 
using a valid CBF and conversely verify the validity of candidate CBFs.

As per Definition~\ref{def:cbf}, standard CBFs are scalar functions, in many instances, 
however, these are not enough to enforce safety, especially when multiple safety 
constraints need to be satisfied at the same time. To this end, 
the so-called \emph{vector control barrier functions} extend the standard scalar 
CBFs to vector-valued functions \cite{allibhoy2023control,ren2023vector}. Similar to 
scalar CBFs, these vector CBFs enforce safety by ensuring $\mathbf{h}(x) \geq 0$, 
where the inequality is component-wise i.e., $h_{i}(x) \geq 0$ for $i = 1, \dots, n$, 
stays invariant. The following corollary axiomatizes vector CBFs in \dL as a special 
instance of vector comparison invariants.

\begin{corollary}[Vector Control Barrier Functions]\label{cor:vcbf}
  The following axiom and proof rule can be used to verify vector control barrier 
  functions, and are derivable in \emph{\dL}. Here the formula $\mathcal{U}$ characterizes 
  a compact set.
  \begin{equation*}
    \begin{array}{@{}r@{\quad}l@{}}
      \axtag{ax:VCBF}
      &
      \LocLip{\boldsymbol{\eta}} \Land \vecK{\boldsymbol{\eta}} \Land \fBox[\odes{x' = \mathbf{f}(x,u), \tau' = 1}[Q]][\exists u(\mathcal{U} \Land \dot{\mathbf{h}} + \boldsymbol{\eta}(\mathbf{h},\tau) \geq 0)] \to \\
      & (\mathbf{h} \geq 0 \to \exists u\fBox[\odes{x' = \mathbf{f}(x,u), \tau' = 1}[Q]][\mathbf{h} \geq 0]) \\ [2ex]
      \axtag{pr:vcbf}
      &
      \vcenter{\hbox{%
        \AxiomC{$\vdash \LocLip{\boldsymbol{\eta}} \Land \vecK{\boldsymbol{\eta}}$}
        \AxiomC{$Q \vdash \exists u(\mathcal{U} \Land \dot{\mathbf{h}} + \boldsymbol{\eta}(\mathbf{h},\tau) \geq 0)$}
        \BinaryInfC{$\mathbf{h} \geq 0 \vdash \exists u\fBox[\odes{x' = \mathbf{f}(x,u), \tau' = 1}[Q]][\mathbf{h} \geq 0]$}
        \DisplayProof
      }}
    \end{array}
  \end{equation*}
\end{corollary}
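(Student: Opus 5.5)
The plan is to derive \lref{ax:VCBF} directly from the vector comparison invariants axiom \lref{ax:VCI} of Theorem~\ref{thm:vector-ci}, exactly as \lref{ax:CBF} was obtained from \lref{ax:SCI}. The two axioms differ only in the regularity assumption on $\boldsymbol{\eta}$: \lref{ax:VCBF} assumes $\vecK{\boldsymbol{\eta}}$ where \lref{ax:VCI} assumes $\vecG{\boldsymbol{\eta}}$. Its conclusion is also just the instance of the \lref{ax:VCI} conclusion in which $\succcurlyeq$ is $\geq$. So the whole derivation reduces to the implication $\vecK{\boldsymbol{\eta}} \to \vecG{\boldsymbol{\eta}}$, followed by modus ponens with \lref{ax:VCI}.

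First, I would show the componentwise fact $\K{\eta_{i}} \to \G{\eta_{i}}$ for each $i = 1, \dots, n$.
\begin{itemize}
  \item The conjunct $\eta_{i}(0,t) = 0$ is shared by both definitions.
  \item For monotonicity, fix $t, x, y$ with $x \geq y$ and split on $x > y \Lor x = y$. If $x > y$, strict increase gives $\eta_{i}(x,t) > \eta_{i}(y,t)$, hence $\geq$. If $x = y$, then $\eta_{i}(x,t) = \eta_{i}(y,t)$ by congruence of the function symbol, so $\geq$ holds again.
\end{itemize}
This is pure first-order reasoning with equality: instantiate the universal quantifiers (\lref{allL}), case split (\lref{orL}), and finish with \lref{Real} and equality substitution. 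Conjoining over $i$ yields $\vecK{\boldsymbol{\eta}} \to \vecG{\boldsymbol{\eta}}$.

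Then, in the derivation of \lref{ax:VCBF}, I would use the antecedent $\vecK{\boldsymbol{\eta}}$ together with the lemma above to \lref{cut} in $\vecG{\boldsymbol{\eta}}$. The hypotheses $\LocLip{\boldsymbol{\eta}}$ and the box formula $\fBox[\odes{x' = \mathbf{f}(x,u), \tau' = 1}[Q]][\exists u(\mathcal{U} \Land \dot{\mathbf{h}} + \boldsymbol{\eta}(\mathbf{h},\tau) \geq 0)]$ carry over unchanged. Applying \lref{ax:VCI} with $\succcurlyeq$ instantiated to $\geq$ then closes the goal $\mathbf{h} \geq 0 \to \exists u\fBox[\odes{x' = \mathbf{f}(x,u), \tau' = 1}[Q]][\mathbf{h} \geq 0]$.

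The proof rule \lref{pr:vcbf} follows the pattern of \lref{pr:vCI}. Apply \lref{ax:VCBF} to the conclusion and split with \lref{andR}. Close the box conjunct by \lref{dW}, which reduces it to $Q \vdash \exists u(\mathcal{U} \Land \dot{\mathbf{h}} + \boldsymbol{\eta}(\mathbf{h},\tau) \geq 0)$. This leaves exactly the two stated premises.

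There is no substantive obstacle. The only point needing care is the equality case $x = y$ in the $\mathcal{K}$-to-$\mathcal{G}$ step, since strict monotonicity says nothing there and nondecreasingness must come from congruence. A secondary check is that \lref{ax:VCI} was indeed stated without requiring $\QuasiMon{\boldsymbol{\eta}}$ as a hypothesis, so nothing beyond class $\mathcal{G}$ needs to be supplied.
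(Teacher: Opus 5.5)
Your overall route is the paper's: cut $\vecG{\boldsymbol{\eta}}$ into the antecedent using $\vecK{\boldsymbol{\eta}}\to\vecG{\boldsymbol{\eta}}$, apply \lref{ax:VCI} with $\succcurlyeq$ read as $\geq$, and obtain \lref{pr:vcbf} from \lref{ax:VCBF} by \lref{andR} and \lref{dW}. Applying \lref{ax:VCI} directly to the unchanged box hypothesis is, if anything, cleaner than the paper. The paper first removes the $\exists u$ inside the antecedent box and splits it with \lref{ax:box_and}, and \lref{ax:VCI} needs neither step.

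The gap is your proof of $\K{\eta_i}\to\G{\eta_i}$. In this paper each $\eta_i$ takes the whole vector as its first argument. It appears as $\eta_i(\mathbf{h},\tau)$, the proof of \lref{ax:VCI} applies $\G{\eta_i}$ to the vector hypothesis $0\le\tilde z$, and $\vecG{\boldsymbol{\eta}}$ is claimed to give $\QuasiMon{\boldsymbol{\eta}}$. Vector $\geq$ and $>$ are componentwise. Then $x\geq y$ does not split into $x>y \lor x=y$. For $x=(1,0)^\top$ and $y=(0,0)^\top$ neither disjunct holds, strict monotonicity says nothing, and congruence does not apply. (Your split would work if $x>y$ meant $x\geq y\land x\neq y$, but that is not the paper's convention.)

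The implication is genuinely not a first-order consequence of the two definitions. Take $n=2$ and let $\eta_1(x,t)=x_1+x_2+10$ if $x_2>0$ or ($x_2=0$ and $x_1<0$), and $\eta_1(x,t)=x_1+x_2$ otherwise. Then $\eta_1(0,t)=0$ and $\eta_1$ is strictly increasing for the componentwise-strict order. Yet $\eta_1((0,0)^\top,t)=0<9=\eta_1((-1,0)^\top,t)$, so $\G{\eta_1}$ fails.

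What is missing is a continuity argument. For $\varepsilon>0$ you have $x+\varepsilon\mathbf{1}>y$ componentwise, so $\eta_i(x+\varepsilon\mathbf{1},t)>\eta_i(y,t)$. Instantiating $\LocLip{\boldsymbol{\eta}}$ at $a=x$ gives $\eta_i(x,t)\geq\eta_i(x+\varepsilon\mathbf{1},t)-|k|\sqrt{n}\,\varepsilon$ for all sufficiently small $\varepsilon$. Letting $\varepsilon\to 0$ yields $\eta_i(x,t)\geq\eta_i(y,t)$.

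So the cut of $\vecG{\boldsymbol{\eta}}$ has to draw on $\LocLip{\boldsymbol{\eta}}$, not only on $\vecK{\boldsymbol{\eta}}$. Since $\LocLip{\boldsymbol{\eta}}$ only constrains $t\geq 0$, the instances of $\G{\eta_i}$ with $t<0$ need continuity of the function symbol itself. Alternatively, you can note that the derivation of \lref{ax:VCI} only ever uses $\G{\eta_i}$ at $t=\tau\geq 0$.

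The paper only asserts that $\bigwedge_i\K{\eta_i}\to\bigwedge_i\G{\eta_i}$ is provable in real arithmetic, so it does not supply this step either. Your case analysis is still not a valid substitute: the equality case you singled out is the easy one, and the mixed case is where the work lies.
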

\begin{proof}[Proof (Sketch)]
  Both the axiom \lref{ax:VCBF} and proof rule \lref{pr:vcbf} are derived syntactically 
  in \dL. In particular, the proof of axiom \lref{ax:VCBF} utilizes the vector 
  comparison invariants axiom \lref{ax:VCI}. The complete proof can be found in 
  Appendix~\ref{proof:vcbf}. The proof rule derives from axiom \lref{ax:VCBF} as 
  follows.
  \begin{sequentproof}
    \ax[]{}{\LocLip{\boldsymbol{\eta}} \Land \vecK{\boldsymbol{\eta}}}
    \ax[]{Q}{\exists u(\mathcal{U} \Land \dot{\mathbf{h}} + \boldsymbol{\eta}(\mathbf{h},\tau))}
    \bi[andR,dW]{}{\LocLip{\boldsymbol{\eta}} \Land \vecK{\boldsymbol{\eta}} \Land \fBox[\odes{x' = \mathbf{f}(x,u), \tau' = 1}[Q]][\exists u(\mathcal{U} \Land \dot{\mathbf{h}} + \boldsymbol{\eta}(\mathbf{h},\tau))]}
    \un[ax:VCBF]{\mathbf{h} \geq 0}{\exists u \fBox[\odes{x' = \mathbf{f}(x,u), \tau' = 1}[Q]][\mathbf{h} \geq 0]}
  \end{sequentproof}
\end{proof}

This axiomatization shows one can prove \emph{control invariance} properties in \dL 
using a valid control barrier function with the right choice of comparison function. 
Consequently, it is also possible to verify a candidate control barrier function in 
\dL for a particular control system. In the next section we will see more special 
instances of comparison invariants.

\section{Special Instances}

We have already seen in the previous section how the axiomatization of control barrier 
functions, both scalar and vector, was made possible using comparison invariants. 
This section gives some more special cases of comparison invariants including \emph{Darboux 
invariants} and \emph{differential invariants}, further bolstering their versatility.

\subsection{Darboux Invariants}

Here we look at an important class of invariants known as \emph{Darboux invariants} 
as they are inspired by Darboux polynomials \cite[Section~3]{DBLP:journals/jacm/PlatzerT20}. 
At a high-level any polynomial $p$ that satisfies the polynomial inequality $\dot{p} \geq gp$ 
for some cofactor polynomial $g$ ensures $p \succcurlyeq 0$ stays invariant. 

Semantically this invariance property is ensured by Gr\"{o}nwall's lemma \cite[\S29.VI]{DBLP:journals/jacm/PlatzerT20,walter2013ordinary}. 
It will be shown that the same invariance observation can be drawn using the comparison 
principle, specifically, comparison invariants. Moreover, the two existing \dL axioms 
\lref{ax:DBX_inequal} and \lref{ax:VDBX} internalizing Darboux invariants are derived 
syntactically using comparison invariants.

\begin{corollary}[Scalar Darboux {\cite[Lemma~3.1]{DBLP:journals/jacm/PlatzerT20}}]\label{cor:scalar-darboux}
  The existing \emph{\dL} axiom \lref{ax:DBX_inequal} for scalar Darboux invariants 
  is derivable using axiom \lref{ax:SCI}. ($\succcurlyeq$ is either $\geq$ or $>$.)
  \[\axtag{ax:DBX_inequal} \quad \fBox[\odes{x' = \mathbf{f}(x,u), \tau' = 1}[Q]][(p)' \geq gp] \to (p \succcurlyeq 0 \to \exists u\fBox[\odes{x' = \mathbf{f}(x,u), \tau' = 1}[Q]][p \succcurlyeq 0])\]
\end{corollary}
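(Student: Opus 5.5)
The plan is to bring the Darboux inequality $(p)' \geq gp$ into the form required by \lref{ax:SCI} by multiplying out the cofactor $g$ with a \emph{linear} differential ghost. After that step, the trivial comparison function $\eta \equiv 0$ suffices. Note that $gp$ cannot be used directly as $\eta(p,\tau)$, because $g = g(x)$ depends on the full state and not only on $h$ and $\tau$. I therefore replace $p$ by the scaled quantity $py$, where the ghost $y$ solves $y' = -gy$ with $y = 1$ initially.

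\textbf{Step 1 (ghost).} Fix the control value $u$ for which the antecedent $\fBox[\odes{x' = \mathbf{f}(x,u), \tau' = 1}[Q]][(p)' \geq gp]$ is assumed. This same $u$ will be the witness for the existential in the succedent, so inside the argument $u$ is just a parameter of the ODE. Apply \lref{ax:DG} with $a(x) := -g$ and $b(x) := 0$, and instantiate the resulting $\exists y$ with $y = 1$. This introduces $y' = -gy$ on both sides: for the assumed box it is a harmless strengthening, since its postcondition does not mention $y$.

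\textbf{Step 2 (positivity of $y$).} Add $y > 0$ to the domain by \lref{ax:DC}. To prove it, use the standard ghost argument: a second ghost $w' = \tfrac{g}{2}w$, introduced via \lref{ax:DG} with $w = 1$ initially, makes $yw^2 = 1$ provable by \lref{ax:DI_inequal} (equational form). The derivative check is
\[
(yw^2)' = -gyw^2 + 2yw\cdot\tfrac{g}{2}w = 0 .
\]
From $yw^2 = 1$, real arithmetic gives $y > 0$. Under the domain $Q \Land y > 0$ we obtain
\[
(py)' = (p)'y - gpy = y\,\bigl((p)' - gp\bigr) \geq 0 ,
\]
using the assumed Darboux inequality. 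Hence $\dot{h} + \eta(h,\tau) \geq 0$ holds for $h := py$ and $\eta \equiv 0$. The premises $\LocLip{0}$ and $\G{0}$ close immediately by \lref{Real}.

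\textbf{Step 3 (apply \lref{ax:SCI}).} Take $\mathcal{U} \equiv true$, and discharge the inner $\exists u$ in the \lref{ax:SCI} premise with the fixed $u$ itself; this uses monotonicity of the box together with the box fact from Step 2. Since $py = p$ initially, the precondition $p \succcurlyeq 0$ yields $py \succcurlyeq 0$ initially. \lref{ax:SCI} then gives $\fBox[\cdots][py \succcurlyeq 0]$ along the ghosted ODE. Combining this with the domain fact $y > 0$ (via \lref{dW}), real arithmetic yields $p \succcurlyeq 0$. Finally, the ghosts are removed by \lref{ax:DG} read right-to-left, since the postcondition no longer mentions $y$ or $w$, and $\exists u$ is introduced with the fixed $u$ as witness.

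\textbf{Main obstacle.} The delicate step is the bookkeeping of quantifiers around the ghost. \lref{ax:SCI} is stated for the ODE in $x,\tau$ only, with $\exists u$ outside the box. I will apply it with the augmented state $(x,y,w)$ regarded as the ODE variables, which is legitimate because the ghost equations are of the same autonomous form $x' = \mathbf{f}(x,u)$. I also need to keep the outer $\exists y$ from \lref{ax:DG} and the outer $\exists u$ correctly ordered. This is done by fixing $u$ first, so that the box formulas are over a fixed control and the equivalence in \lref{ax:DG} applies as stated.
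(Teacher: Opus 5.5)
\textbf{Comparison with the paper's route.} Your route differs from the paper's. The paper derives \lref{ax:DBX_inequal} in one step:
it instantiates \lref{ax:SCI} with $\eta(p,\tau)=gp$, closes $\LocLip{\eta}\Land\G{\eta}$ by real arithmetic on the grounds that polynomials are locally Lipschitz and nondecreasing, and matches the remaining premise to the Darboux hypothesis by monotonicity with $\mathcal{U}\equiv true$.

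Your remark that $gp$ is not a legitimate $\eta(h,\tau)$ identifies exactly why that instantiation is delicate. First, $g$ may depend on the evolving state. \lref{ax:SCI} needs $\eta$ to be a function of $(h,\tau)$ alone, so that $\LocLip{\eta}$ and $\G{\eta}$ are closed facts that stay true along the flow. Second, under the convention $\dot h+\eta(h,\tau)\geq 0$, the hypothesis $\dot p\geq gp$ actually calls for $\eta=-gp$. Even for a state-independent cofactor, this is nondecreasing only if $g\leq 0$; for $g=1$ one gets the decreasing $\eta(p)=-p$ of Counterexample~\ref{exp:counter-2}.

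Your linear ghost $y'=-gy$, kept positive via the invariant $yw^2=1$, absorbs the cofactor, so only $\eta\equiv 0$ is needed. This handles arbitrary cofactors $g(x,\tau)$. It is essentially the classical ghost-plus-differential-invariant proof of Darboux inequalities, with \lref{ax:SCI} used only in the degenerate form of Corollary~\ref{cor:dI}. Where its side conditions hold, the paper's one-liner would show Darboux invariants directly as comparison invariants. In your derivation the comparison function does no work; the ghost carries the argument.

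\textbf{A step that fails as written.} \lref{ax:SCI} concludes $\exists u\,[\cdots]\,(h\succcurlyeq 0)$, not the box for the control you fixed in Step~1. So it cannot close a fixed-$u$ goal, such as the one into whose domain you cut $y>0$. Nor can you finish by witnessing $\exists u$ with your fixed $u$. The repair uses only your own ingredients:
\begin{itemize}
\item Use the fixed $u$ only to prove the antecedent of \lref{ax:SCI}. That is where the Darboux hypothesis and $y>0$ enter, and the inner $\exists u$ is witnessed by the fixed $u$.
\item Take the witness $u^*$ that \lref{ax:SCI} returns via \lref{exL}. Re-establish $y>0$ along the $u^*$-trajectory by the same $yw^2=1$ argument; it never uses the Darboux hypothesis, so it holds for every control.
\item Conclude $p\succcurlyeq 0$ from $py\succcurlyeq 0$, drop the ghost, and witness the outer $\exists u$ by $u^*$.
\end{itemize}

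Relatedly, in Step~1 you cannot instantiate an $\exists y$ that arises on the antecedent side. Transferring the assumed box to the ghosted ODE at the particular value $y=1$ needs the universal variant $[x'=\mathbf{f}(x)\,\&\,Q]\psi\to\forall y\,[x'=\mathbf{f}(x),y'=a(x)y+b(x)\,\&\,Q]\psi$ of \lref{ax:DG}. That variant is sound, because linear ghosts exist as long as $x$ does, but it is not the form stated in the paper.
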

\begin{proof}[Proof (Sketch)]
  The axiom is derived when the comparison function is defined as $\eta(p,\tau) = g \cdot p$ 
  for some cofactor polynomial $g = g(x,\tau)$. The full derivation can be found in Appendix~\ref{appendix:special-instances}.
\end{proof}

The corollary above showed the derivation of scalar Darboux invariants using scalar 
comparison invariants. Similarly, vectorial Darboux invariants can also be derived 
using vector comparison invariants as shown in the following corollary.

\begin{corollary}[Vectorial Darboux {\cite[Lemma~4.1]{DBLP:journals/jacm/PlatzerT20}}]\label{cor:vector-darboux}
  The existing \emph{\dL} axiom \lref{ax:VDBX} for vectorial Darboux 
  invariants is derivable using axiom \lref{ax:VCI}.
  \[\axtag{ax:VDBX} \quad \fBox[\odes{x' = \mathbf{f}(x,u), \tau' = 1}[Q]][(\mathbf{p})' = G\mathbf{p}] \to (\mathbf{p} = 0 \to \exists u\fBox[\odes{x' = \mathbf{f}(x,u), \tau' = 1}[Q]][\mathbf{p} = 0])\]
\end{corollary}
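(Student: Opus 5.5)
The plan is to reduce the vectorial equational invariant to a component-wise inequational one and then instantiate axiom \lref{ax:VCI}, mirroring the derivation of \lref{ax:DBX_inequal} from \lref{ax:SCI} in Corollary~\ref{cor:scalar-darboux}.

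\emph{Step 1: equality as two inequalities.} Since VCI only speaks about component-wise $\geq$, I first rewrite $\mathbf{p} = 0$ as $\mathbf{p} \geq 0 \Land -\mathbf{p} \geq 0$ by real arithmetic, in both the antecedent and the postcondition. I then work with the stacked $2n$-dimensional vector $\mathbf{h} := (\mathbf{p}, -\mathbf{p})^{\top}$. By the derivation axioms, $\dot{\mathbf{h}} = ((\mathbf{p})', -(\mathbf{p})')^{\top}$. Using the assumption $[\ldots](\mathbf{p})' = G\mathbf{p}$ together with monotonicity of the box modality, this becomes $\dot{\mathbf{h}} = (G\mathbf{p}, -G\mathbf{p})^{\top}$ along all runs of the ODE.

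\emph{Step 2: choice of comparison function.} I pick $\boldsymbol{\eta}(\mathbf{h},\tau) := -\tilde G\,\mathbf{h}$ with $\tilde G = \mathrm{diag}(G, G)$, exactly as $\eta(p,\tau) = g\cdot p$ in the scalar corollary. With this choice $\dot{\mathbf{h}} + \boldsymbol{\eta}(\mathbf{h},\tau) = 0 \geq 0$ holds identically. Hence the box premise $[\ldots]\exists u(\mathcal{U} \Land \dot{\mathbf{h}} + \boldsymbol{\eta}(\mathbf{h},\tau) \geq 0)$ of \lref{ax:VCI} follows from the assumed box by \lref{dW}/monotonicity, provided $\mathcal{U}$ is nonempty so that a witness for $u$ exists. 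Applying \lref{ax:VCI} then yields $\exists u[\ldots]\mathbf{h} \geq 0$, which unfolds back to $\mathbf{p} = 0$ via Step 1.

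\emph{Step 3 (main obstacle): the side conditions $\LocLip{\boldsymbol{\eta}} \Land \vecG{\boldsymbol{\eta}}$.}

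\begin{itemize}
\item \textbf{Local Lipschitz.} Since $\boldsymbol{\eta}$ is linear in $\mathbf{h}$ with polynomial coefficients, $\LocLip{\boldsymbol{\eta}}$ should close by \lref{Real} on bounded balls in the style of Definition~\ref{def:local-lip}.
\item \textbf{Class $\mathcal{G}$: zero at zero.} Zero at zero is immediate.
\item \textbf{Class $\mathcal{G}$: monotonicity.} This is the delicate part, because the cofactor $G$ has off-diagonal and sign-indefinite entries. Here the stacking is what I expect to help. Writing $\mathbf{p} = \mathbf{h}^{+} = -\mathbf{h}^{-}$, each row of $G\mathbf{p}$ can be split as a combination of the form $\sum_j \bigl(\max(G_{ij},0) h^{\pm}_j + \max(-G_{ij},0) h^{\mp}_j\bigr)$, whose coefficients are nonnegative. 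I would therefore first try this rewriting, which uses the redundancy $\mathbf{h}^{-} = -\mathbf{h}^{+}$ to obtain a nondecreasing (hence quasimonotone) linear comparison function.
\item \textbf{Fallback.} If that rewriting does not fit the formal $\vecG{\cdot}$ formula, I would follow the scalar proof's treatment of $g$ verbatim. That is, I would treat the cofactor entries as terms fixed in the parameters of $\boldsymbol{\eta}$ and discharge monotonicity by the same arithmetic argument used in Corollary~\ref{cor:scalar-darboux}, applied row-wise.
\end{itemize}
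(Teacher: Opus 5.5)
Your proposal is correct as a derivation from \lref{ax:VCI}. At the top level it matches the paper's proof: rewrite $\mathbf{p} = 0$ as $\mathbf{p} \geq 0 \Land -\mathbf{p} \geq 0$, turn the assumed box into $\dot{\mathbf{h}} + \boldsymbol{\eta}(\mathbf{h},\tau) \geq 0$ by monotonicity, and apply \lref{ax:VCI} with a linear comparison function built from $G$.

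The genuine difference is how you discharge $\vecG{\boldsymbol{\eta}}$. The paper takes $\boldsymbol{\eta}(\mathbf{p},\tau) = G\mathbf{p}$ and stipulates that $G$ be Metzler. That hypothesis is not part of \lref{ax:VDBX} as stated. It also does not yield $\vecG{\cdot}$, which for a linear map requires all entries nonnegative, diagonal included.

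Your positive/negative splitting removes the restriction. The split map coincides with $-\mathrm{diag}(G,G)\,\mathbf{h}$ on the subspace $\{(\mathbf{p},-\mathbf{p})\}$, which is the only place the box premise evaluates it. So your Step~2 identity $\dot{\mathbf{h}} + \boldsymbol{\eta}(\mathbf{h},\tau) = 0$ survives unchanged. At the same time every coefficient $\max(\pm G_{ij},0)$ is nonnegative, so $\vecG{\boldsymbol{\eta}}$ holds literally whatever the signs of the $G_{ij}$. $\LocLip{\boldsymbol{\eta}}$ is immediate for a linear map with bounded coefficients.

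This relies on your single $2n$-dimensional instance of \lref{ax:VCI}. The coupling between the $\mathbf{p}$ and $-\mathbf{p}$ halves is what permits nonnegative coefficients; an instance on $\mathbf{p}$ alone has no redundant copy to split against. Your route therefore yields the axiom without the paper's extra assumption on $G$.

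Three caveats.

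\emph{The fallback does not work.} The scalar corollary's claim that $g\cdot p$ is nondecreasing in $p$ for every polynomial $g$ is false (take $g = -1$). Copying it row-wise would not discharge $\vecG{\boldsymbol{\eta}}$; the splitting is the argument that actually works.

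\emph{State-dependent cofactors.} Your split, like the paper's $G\mathbf{p}$, is an admissible instance of $\boldsymbol{\eta}(\cdot,\tau)$ only if $G$ does not depend on the state $x$. Otherwise the ghost $z' = \boldsymbol{\eta}(z,\tau)$ and the quantified formulas $\LocLip{\boldsymbol{\eta}}$, $\vecG{\boldsymbol{\eta}}$ would capture $x$. For state-dependent cofactors you inherit the paper's informality.

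\emph{Soundness of \lref{ax:VCI} itself.} The derivation is only as sound as \lref{ax:VCI}, and with the $+\boldsymbol{\eta}$ convention that axiom fails. A counterexample in the spirit of the paper's first one: take $x_1' = -x_2$, $x_2' = 0$, $\mathbf{h} = x$, $\boldsymbol{\eta}(y,\tau) = (y_2,0)^{\top}$, started at $(0,1)^{\top}$. All hypotheses hold, yet $x_1$ turns negative. A repaired axiom needs $-\boldsymbol{\eta}$, the right-hand side of the comparison system $z' = -\boldsymbol{\eta}(z,\tau)$, to be quasimonotone, as \lref{ax:VCP} demands. Your splitting still goes through there if you make the coefficients of $-\boldsymbol{\eta}$, rather than of $\boldsymbol{\eta}$, nonnegative.
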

\begin{proof}[Proof (Sketch)]
  The axiom is derived by first rewriting $\mathbf{p} = 0 \Liff -\mathbf{p} \geq 0 \Land \mathbf{p} \geq 0$, 
  after which axiom \lref{ax:VCI} is applied with comparison functions, $\boldsymbol{\eta}(\mathbf{p},\tau) = G\cdot\mathbf{p}$. 
  Where $G$ is an essentially positive matrix or \emph{Metzler matrix} \cite{DBLP:conf/fm/SogokonGTP18,walter2013ordinary}. 
  The full proof can be found in Appendix~\ref{appendix:special-instances}.
\end{proof}

As we saw from Corollary~\ref{cor:scalar-darboux} and Corollary~\ref{cor:vector-darboux}, 
comparison invariants generalize Darboux invariants. Next we will see another 
special instance of comparison invariants, namely, differential invariants.

\subsection{Differential Invariants}

Another simple consequence of (scalar) comparison invariants are differential invariants. 
Differential invariants relate invariance of scalar inequalities of the form $p \succcurlyeq 0$ to 
their Lie derivative inequality $\dot{p} \geq 0$ over the entire domain $Q$. The following 
shows the derivation of the differential invariants proof rule \lref{dI_inequal} 
tailored to control systems.

\begin{corollary}[Differential Invariants \cite{Platzer18}]\label{cor:dI}
  The existing differential invariants proof rule \lref{dI_inequal} is a special 
  instance of comparison invariants and is derivable using rule \lref{pr:sCI}. 
  ($\succcurlyeq$ is either $\geq$ or $>$.)
  \begin{prooftree}
    \AxiomC{$Q \vdash \dot{p} \geq 0$}
    \LeftLabel{\axtag{dI_inequal} }
    \UnaryInfC{$p \succcurlyeq 0 \vdash \exists u \fBox[\odes{x' = \mathbf{f}(x,u), \tau' = 1}[Q]][p \succcurlyeq 0]$}
  \end{prooftree}
\end{corollary}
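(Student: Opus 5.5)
We derive \lref{dI_inequal} as an instance of rule \lref{pr:sCI}, choosing the comparison function to be identically zero, $\eta(p,\tau) := 0$. With this choice the second premise of \lref{pr:sCI}, namely $Q \vdash \exists u(\mathcal{U} \Land \dot{p} + \eta(p,\tau) \geq 0)$, becomes $Q \vdash \exists u(\mathcal{U} \Land \dot{p} \geq 0)$. This is essentially the premise $Q \vdash \dot{p} \geq 0$ of \lref{dI_inequal}, where $u$ is the control symbol occurring in $\dot{p}$. The conclusion $p \succcurlyeq 0 \vdash \exists u\fBox[\odes{x' = \mathbf{f}(x,u), \tau' = 1}[Q]][p \succcurlyeq 0]$ is literally the conclusion of \lref{pr:sCI} with $h := p$.

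The key steps, in order, are as follows.

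First, we apply \lref{pr:sCI} with $h := p$ and $\eta := 0$. Concretely, as in the ACC example, we use \lref{cut} to introduce $\exists \eta\,(\forall p\forall\tau\,\eta(p,\tau)=0)$, which is provable by \lref{Real}, and then skolemize with \lref{exL}. Alternatively, we instantiate the function symbol $\eta$ directly by uniform substitution with the constant term $0$.

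Second, we close the left premise $\vdash \LocLip{0} \Land \G{0}$ by \lref{Real}. For $\LocLip{0}$, the body $0 \leq k^{2}\norm{x-y}^{2}$ is trivially true for any witnesses $r,k$, e.g.\ $r=k=1$. For $\G{0}$, both $0=0$ and $x\geq y \to 0 \geq 0$ are valid. Both are FOL$_\R$ formulas and hence decidable.

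Third, we reduce the right premise to the premise of \lref{dI_inequal}. Since $\dot{p} + 0 \geq 0 \Liff \dot{p} \geq 0$ by \lref{Real}, it remains to show $Q \vdash \exists u(\mathcal{U} \Land \dot{p}\geq 0)$ from $Q \vdash \dot{p} \geq 0$.

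The main obstacle is this existential over $u$ together with the constraint $\mathcal{U}$. The premise $Q \vdash \dot{p}\geq 0$ is read with $u$ free, meaning it is valid for every value of $u$. To obtain the existential, we first pick a witness via the right existential rule; if $\mathcal{U}$ is nonempty, this gives $\mathcal{U} \Land \dot{p}\geq0$. So the plan is to read the premise as $Q \vdash \forall u(\mathcal{U}\to\dot{p}\geq 0)$ (valid since $u$ is free) and combine it with nonemptiness of the compact admissible set, $\exists u\, \mathcal{U}$, which is a standing assumption on control sets. Then \lref{Real} closes $Q, \forall u(\mathcal{U}\to\dot p\ge0), \exists u\,\mathcal{U} \vdash \exists u(\mathcal{U}\Land \dot{p}\geq 0)$. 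If one prefers to avoid the nonemptiness assumption, the alternative is to state the premise of \lref{dI_inequal} as $Q \vdash \exists u(\mathcal{U}\Land\dot p\geq 0)$; the derivation via \lref{pr:sCI} with $\eta:=0$ is then immediate. Both the $\geq$ and $>$ cases of $\succcurlyeq$ are inherited uniformly from \lref{pr:sCI}.
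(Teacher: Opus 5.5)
Your plan matches the paper's derivation: apply \lref{pr:sCI} with $h := p$ and the zero comparison function (the paper writes $\eta(p,\tau) = 0\cdot p$), close $\vdash \LocLip{\eta} \Land \G{\eta}$ by \lref{Real}, and reduce the other premise to $Q \vdash \dot{p} \geq 0$. The step that does not go through as written is the one you flagged yourself, the $\exists u(\mathcal{U} \Land \dot{p} \geq 0)$ obligation. There are three problems:
\begin{itemize}
\item Your main route uses $\exists u\,\mathcal{U}$. This is neither a premise of \lref{dI_inequal} nor provable for an arbitrary formula $\mathcal{U}$, so a ``standing assumption'' cannot be invoked inside a syntactic derivation of the rule.
\item Turning $Q \vdash \dot{p} \geq 0$ into $Q \vdash \forall u(\mathcal{U} \to \dot{p} \geq 0)$ by the $\forall$-right rule requires that $u$ not occur free in $Q$.
\item Your fallback, replacing the premise by $Q \vdash \exists u(\mathcal{U} \Land \dot{p} \geq 0)$, derives a different rule (just \lref{pr:sCI} with $\eta := 0$), not the stated one.
\end{itemize}

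The missing observation is that $\mathcal{U}$ does not occur in \lref{dI_inequal} at all. It is therefore a parameter of \lref{pr:sCI} that you may instantiate freely. The paper takes $\mathcal{U} \equiv true$. Then \lref{exR}, with the free $u$ itself as the witness, reduces $Q \vdash \exists u(true \Land \dot{p} + 0\cdot p \geq 0)$ to $Q \vdash true \Land \dot{p} + 0\cdot p \geq 0$. This follows from the premise $Q \vdash \dot{p} \geq 0$ by \lref{Real}. It needs no nonemptiness assumption and no universal reading of $u$, and it works even if $u$ occurs in $Q$. The choice $true$ is at odds with the informal compactness requirement on $\mathcal{U}$. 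However, the paper's derivation of \lref{ax:SCI} discards the box formula with postcondition $\mathcal{U}$ by weakening and never uses compactness, and the Darboux derivation makes the same choice.
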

\begin{proof}[Proof (Sketch)]
  The proof is straight forward when the comparison function is the zero function, 
  i.e., $\eta(p,\tau) = 0 \cdot p$. The full derivation can be found in Appendix~\ref{appendix:special-instances}.
\end{proof}

These special instances showcase the versatility of comparison invariants as the 
central unifying idea behind various proof techniques for invariance verification. 
It also shows that the chosen comparison systems in the comparison principle axioms 
\lref{ax:SCP} and \lref{ax:VCP} are very general, and can relate various invariance 
properties with the right choice of the scalar and vector comparison functions $\eta(\cdot)$ and $\boldsymbol{\eta}(\cdot)$, 
respectively.

\section{Related Work}
\emph{Invariance in Dynamical Systems.}
Checking whether a given set is invariant under the flow of an 
ODE has been a central interest in the field of 
dynamical systems \cite{blanchini2008set,walter2013ordinary,mejstrik2012some,sastry2013nonlinear},  
going all the way back to the theorem of Nagumo, which 
gives conditions for proving (positive) invariance of closed sets \cite{mejstrik2012some,blanchini2008set}. 
The comparison principle, a prominent technique for relating solutions of two 
differential equations using differential inequalities, provides an alternative 
method for proving invariance of both open and closed sets \cite{walter2013ordinary}. 
More (relatively) recently barrier certificates, first introduced by Prajna \emph{et al.} 
\cite{prajna2004safety}, gained popularity as a Lyapunov-like technique for verifying 
invariance \cite{DBLP:conf/cdc/PanjaP25}. Vector barrier certificates introduced by Sogokon \emph{et al.} \cite{DBLP:conf/fm/SogokonGTP18}, 
generalizing scalar barrier certificates, primarily utilized the comparison 
principle as the basis.

\emph{Control Invariance.}
A similar question of whether a set is invariant under the flow of a control system 
has also been a central concern in control theory \cite{blanchini2008set,chicone2006ordinary,sastry2013nonlinear}. 
In the field of reachability analysis geometric set representations 
like zonotopes and ellipsoids have been utilized to characterize control invariant 
sets \cite{schafer2023scalable,blanchini2008set}. First introduced by Ames \emph{et al.} 
\cite{ames2014control}, control barrier functions 
have gained immense popularity in the past decade for being an efficient means for 
synthesizing safety enforcing controllers \cite{ames2016control,ames2019control,xiao2021high}. 
Recently, Clark \cite{clark2024semi} gave a semialgebraic framework 
using semidefinite programming for verifying control barrier functions.

\emph{Deductive Verification.}
Unlike the numerical methods \cite{schafer2023scalable,blanchini2008set,clark2024semi}, 
deductive verification methods utilize a symbolic and logical approach 
to proving invariance of dynamical systems. To this end Taly \emph{et al.} \cite{taly2009deductive}, 
and Ghorbal \emph{et al.} \cite{DBLP:journals/cl/GhorbalSP17} have shown ODE invariance 
can be proven using a hierarchy of sound but incomplete proof rules. In the logic 
\dL, completeness of algebraic and semialgebraic invariance has been shown by 
Platzer and Tan \cite{DBLP:conf/lics/PlatzerT18}, with later extensions to Noetherian functions \cite{DBLP:journals/jacm/PlatzerT20}. 
Synthesis of control invariants, specifically control envelopes, have been successfully 
undertaken by Kabra \emph{et al.} \cite{kabra2024cesar}, using differential game 
logic (\textsf{dGL}). Recently Hellwig \emph{et al.} \cite{hellwig2025zonotopes}, 
showed the verification of zonotope-based control envelopes in \dL. Control invariance 
has also been studied by Platzer \cite{DBLP:journals/tocl/Platzer17} 
from the lens of differential hybrid games.

\section{Conclusion}
In this paper we presented an axiomatization for verifying control invariance in 
differential dynamic logic (\dL). This was made possible by first axiomatizing the \emph{scalar} and 
\emph{vector comparison principles}, which served as the central proof principles 
underlying our proofs. We introduced \emph{scalar} 
and \emph{vector comparison invariants} leading to a dedicated set of axioms and 
proof rules that reduce proving control invariance to proving a functional inequality 
on its Lie derivative. Finally, we axiomatized control barrier functions, which 
along with Darboux and differential invariants were shown to be special instances of comparison invariants. 

For future work we will be implementing the presented axioms and proof rules 
in the theorem prover \keymaera \cite{DBLP:conf/cade/FultonMQVP15}. Since 
all the presented axioms and proof rules were derived syntactically in \dL, they 
can be implemented as tactics on top of a soundness-critical prover kernel \cite{DBLP:conf/cade/FultonMQVP15}. 
We will further be looking at deriving completeness results for proving control invariance 
in \dL.

\paragraph{Acknowledgements.} We thank the anonymous reviewers for their helpful 
feedback. We also thank Adrian Kulmburg for his feedback on an earlier version of this work 
including his suggestion of utilizing the fact that locally Lipschitz functions are 
bounded on a compact domain in the proofs. This work was funded by the Deutsche 
Forschungsgemeinschaft (DFG, German Research Foundation) – SFB 1608 – 501798263.

%%
%% Bibliography
%%

%% Please use bibtex, 

\bibliographystyle{plainurl}
\bibliography{reference}

\ifprintappendix

\appendix

\section{\dL Proof Calculus}

The lemmas listed in this Appendix give a brief list of \dL's sound axioms and proof rules for 
reference. We primarily look at the axioms and proof rules 
for reasoning about differential equations. The complete proof calculus 
for hybrid systems along with its soundness proofs can be found elsewhere \cite{DBLP:conf/lics/Platzer12b,DBLP:journals/jar/Platzer17,Platzer18,DBLP:journals/jacm/PlatzerT20}.

\subsection{Base Calculus}\label{appendix:base-calculus}

The following lemma states the base axioms and proof rules of \dL, which are standard 
for modal and dynamic logics.

\begin{lemma}[Base Axioms and Proof Rules]\label{lemma:base-calculus}
  \begin{align*}
    \axtag{ax:K} \quad & \lquote{ax:K} && \axtag{ax:V} \quad \lquote{ax:V} \quad (\lside{ax:V}) \\
    \axtag{ax:DiaDual} \quad & \lquote{ax:DiaDual} 
    && \axtag{pr:G} \hspace{4pt}
    \vcenter{\hbox{%
    \AxiomC{$\vdash \phi$}
    \UnaryInfC{$\Gamma \vdash [\alpha]\phi$}
    \DisplayProof
    }}
  \end{align*}
\end{lemma}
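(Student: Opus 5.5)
The plan is to prove soundness of each listed axiom and rule directly from the semantics of \dL, rather than deriving them from anything developed in this paper. These are the standard modal principles of \dL, and the paper already cites them. Throughout I use the transition semantics: a hybrid program $\alpha$ denotes a reachability relation $\sem{\alpha} \subseteq \States \times \States$. For ODEs this relation is induced by the solution $\varphi$ exactly as in the box semantics given earlier. Then $\nu \in \sem{\fBox[\alpha][\phi]}$ iff $\omega \in \sem{\phi}$ for all $(\nu,\omega) \in \sem{\alpha}$. The diamond $\fDia[\alpha][\phi]$ holds iff some such $\omega$ exists. Each item then reduces to a short argument about this relation.

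\textbf{Axiom K.} Fix $\nu$ satisfying both $\fBox[\alpha][(\phi \to \psi)]$ and $\fBox[\alpha][\phi]$, and take any $(\nu,\omega) \in \sem{\alpha}$. Then $\omega$ satisfies $\phi \to \psi$ and $\phi$, so $\omega \in \sem{\psi}$. Since $\omega$ was arbitrary, $\nu \in \sem{\fBox[\alpha][\psi]}$, which gives validity of K.

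\textbf{Diamond duality.} Unfold both sides of $\fDia[\alpha][\phi] \leftrightarrow \neg\fBox[\alpha][\neg\phi]$. The right side holds iff not every $\alpha$-successor satisfies $\neg\phi$, i.e.\ iff some successor satisfies $\phi$, which is the left side. This is purely classical reasoning about a single quantifier over $\sem{\alpha}$.

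\textbf{Gödel rule G.} Assume the premise $\vdash \phi$ is valid, so every state lies in $\sem{\phi}$. In particular every $\alpha$-successor of any $\nu$ does. Hence $\fBox[\alpha][\phi]$ holds everywhere, and so does $\Gamma \to \fBox[\alpha][\phi]$.

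\textbf{Vacuous axiom V.} This is the only step requiring more than unfolding definitions, and I expect it to be the main obstacle. Its side condition says that no free variable of $\phi$ is bound (written) by $\alpha$. I would invoke two standard lemmas of \dL, taken as given from the cited literature. The \emph{bound effect lemma} says that $(\nu,\omega) \in \sem{\alpha}$ implies $\nu$ and $\omega$ agree outside the bound variables of $\alpha$. For ODEs this includes the differential symbols $x'$, and I must make sure these, together with $\tau$, are counted as bound. The \emph{coincidence lemma} says that states agreeing on the free variables of $\phi$ agree on the truth of $\phi$. Combining them: if $\nu \models \phi$, then every successor $\omega$ agrees with $\nu$ on the free variables of $\phi$, so $\omega \models \phi$, which yields $\nu \models \fBox[\alpha][\phi]$. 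The care needed here is in checking that the side condition exactly matches the bound-variable sets of the programs used in this paper, in particular $x' = \mathbf{f}(x,u), \tau' = 1$ with $u$ free but not bound.
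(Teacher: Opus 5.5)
Your proposal is correct and matches what the paper relies on. The paper gives no argument of its own but defers to Platzer's JAR 2017 article and \emph{Logical Foundations of Cyber-Physical Systems}, where K, the diamond duality and G are justified by exactly this unfolding of the transition semantics, and V (in its side-condition form) rests on the bound effect and coincidence lemmas as you describe. You are also right to count the differential symbols $x'$ among the bound variables of an ODE: otherwise an instance such as $x' = 5 \to [x' = 1]\,x' = 5$ would be admitted, and it fails because ODE transitions overwrite $x'$.
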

\begin{proof}
  The soundness proofs of these axioms and proof rules can be found elsewhere \cite{DBLP:journals/jar/Platzer17,Platzer18}.
\end{proof}

Lemma~\ref{lemma:differential-axioms} lists the \emph{differential equation axioms} of \dL. 
These sound axioms enable deductive verification of ODE properties, which serve as 
an essential tool for the analytic completeness of ODE invariance in \dL \cite{DBLP:journals/jacm/PlatzerT20}.

\begin{lemma}[Differential axioms]\label{lemma:differential-axioms}
  The following axioms enable sound reasoning of differential equations.
  \begin{align*}
    \axtag{ax:DW} \quad & \lquote{ax:DW} \\
    \axtag{ax:DX} \quad & \lquote{ax:DX} \qquad (\lside{ax:DX}) \\
    \axtag{ax:DMP} \quad & \lquote{ax:DMP} \\
    \axtag{ax:DI_equal} \quad & \lquote{ax:DI_equal} \\
    \axtag{ax:DI_inequal} \quad & \lquote{ax:DI_inequal} \\
    \axtag{ax:DC} \quad & \lquote{ax:DC} \\
  \end{align*}
\end{lemma}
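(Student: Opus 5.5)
These axioms are the standard differential equation axioms of \dL, so the plan is not to rederive them from scratch. Instead I would establish their soundness semantically, following the pattern of the cited soundness proofs \cite{DBLP:journals/jar/Platzer17,Platzer18,DBLP:journals/jacm/PlatzerT20}. Throughout, I fix an initial state $\nu$ and the unique right-maximal solution $\varphi:[0,T)\to\States$ of $x'=\mathbf{f}(x)$ with $\varphi(0)=\nu$. Existence and uniqueness hold because the right-hand sides are locally Lipschitz, e.g.\ polynomials or the continuous function symbols assumed in Section~\ref{syntax:terms}. I then unfold the box semantics of ODEs given above.

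\textbf{Purely semantic axioms.} I would discharge these first.
\begin{itemize}
\item For \lref{ax:DW}, every state $\varphi(t)$ considered in the box semantics satisfies $Q$ by the definition itself, so the postcondition $Q$ holds trivially.
\item For \lref{ax:DX}, the reflexive direction uses the time point $t=0$, where $\varphi(0)=\nu$ and the differential symbols take the values $\mathbf{f}(x)$. This is exactly where the side condition on $x'$ not occurring free is needed.
\item For \lref{ax:DMP}, I use monotonicity. If every reachable state within $Q$ satisfies $Q\to R$, then any run staying in $Q$ also stays in $R$. Hence $[x'=\mathbf{f}(x)\,\&\,R]P$ implies $[x'=\mathbf{f}(x)\,\&\,Q]P$, since every $Q$-run is an $R$-run.
\item For \lref{ax:DC}, I use that the set of admissible $t$ is prefix-closed, meaning the domain must hold on all of $[0,t]$. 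If $C$ holds along every $Q$-run, then the $Q$-runs coincide with the $(Q\wedge C)$-runs, which gives the equivalence.
\end{itemize}

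\textbf{Differential invariant axioms.} The substantive step is \lref{ax:DI_equal} and \lref{ax:DI_inequal}. The key ingredient is the \emph{differential lemma}: along $\varphi$, the valuation of the differential term $(p)'$ at $\varphi(\zeta)$, with $x'$ interpreted as $\mathbf{f}(\varphi(\zeta))$, equals $\frac{d}{dt}\,\varphi(t)\sem{p}$ at $t=\zeta$. I would prove this by the chain rule, using the definition $(p)'=\sum_i \frac{\partial p}{\partial x_i}x_i'$ together with differentiability of $\varphi$.

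With the lemma in hand, the argument runs as follows.
\begin{itemize}
\item Assume $[x'=\mathbf{f}(x)\,\&\,Q](p)'\ge 0$ (respectively $=0$), so the derivative of $t\mapsto\varphi(t)\sem{p}$ is nonnegative (respectively zero) on $[0,t]$.
\item The mean value theorem then gives $\varphi(t)\sem{p}\ge\nu\sem{p}$ (respectively equality).
\item Consequently $p\succcurlyeq 0$ (respectively $p=0$) is preserved along the flow, which yields the implications stated in the axioms.
\end{itemize}
The case $t=0$ is handled separately through \lref{ax:DX}-style reasoning on the initial state.

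\textbf{Main obstacle.} I expect the differential lemma to be the main difficulty. It requires care with the semantics of differential symbols $x'$: they are only meaningful in states along the ODE solution, where they are bound to $\mathbf{f}$. It also requires that the derivative exist up to the endpoint $t$. For the endpoint case $t=0$, where $[0,t]$ degenerates, I would fall back on the one-sided derivative. Everything else is routine unfolding of the semantics, so I would simply cite the detailed arguments in \cite{DBLP:journals/jacm/PlatzerT20,Platzer18}.
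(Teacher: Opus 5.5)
Your proposal is correct and follows essentially the paper's route: the paper gives no argument of its own and defers to the standard semantic soundness proofs in \cite{DBLP:journals/jar/Platzer17,Platzer18,DBLP:journals/jacm/PlatzerT20}. Your outline (unfolding the box semantics for DW, DX, DMP and DC, and using the differential lemma plus the mean value theorem for the two DI axioms) is exactly the argument given there. One small inaccuracy: merely continuous function symbols are neither locally Lipschitz nor differentiable, so your uniqueness claim and chain-rule step should instead rest on the smooth interpretation of function symbols used in those references; uniqueness is not actually needed anyway, since each of your arguments works solution by solution.
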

\begin{proof}
  Soundness proofs of differential axioms of \dL are presented elsewhere \cite{DBLP:journals/jar/Platzer17,Platzer18,DBLP:journals/jacm/PlatzerT20}.
\end{proof}

\subsection{Derived Axioms and Rules}

Lemma~\ref{lem:derived-axioms} lists some standard derived \dL axioms and proof rules that are 
derived from the previously stated base axioms. These will come in handy throughout 
the proofs and derivations presented in this paper.

\begin{lemma}[Derived Axioms and Proof Rules]\label{lem:derived-axioms}
  \begin{align*}
    & \axtag{dW} \hspace{4pt}
    \vcenter{\hbox{%
    \AxiomC{$Q \vdash \phi$}
    \UnaryInfC{$\Gamma \vdash \fBox[\ode[x]{f(x)}[Q]][\phi]$}
    \DisplayProof
    }} 
    && \axtag{ax:box_and} \quad \lquote{ax:box_and} \\
    & \axtag{pr:M[']} \hspace{4pt}
    \vcenter{\hbox{%
    \AxiomC{$Q, \psi \vdash \phi$}
    \AxiomC{$\Gamma \vdash \fBox[\ode[x]{f(x)}[Q]][\psi]$}
    \BinaryInfC{$\Gamma \vdash \fBox[\ode[x]{f(x)}[Q]][\phi]$}
    \DisplayProof
    }}
  \end{align*}
\end{lemma}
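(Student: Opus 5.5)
These derived axioms and rules follow from the base calculus by standard arguments, so I would derive each one separately from the axioms of Lemma~\ref{lemma:base-calculus} and Lemma~\ref{lemma:differential-axioms}.

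\emph{Rule \lref{dW}.} Starting from the premise $Q \vdash \phi$, apply propositional reasoning to get $\vdash Q \to \phi$. Rule \lref{pr:G} (Gödel generalization) then gives $\Gamma \vdash \fBox[\ode[x]{f(x)}[Q]][(Q \to \phi)]$. Axiom \lref{ax:DW} states that $\fBox[\ode[x]{f(x)}[Q]][\phi]$ is equivalent to $\fBox[\ode[x]{f(x)}[Q]][(Q \to \phi)]$, since the domain constraint holds throughout every evolution. Rewriting with this equivalence closes the goal.

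\emph{Axiom \lref{ax:box_and}.} This is the standard normal-modal-logic fact that box distributes over conjunction.
\begin{itemize}
\item For the direction from $[\alpha](\phi\Land\psi)$ to $[\alpha]\phi \Land [\alpha]\psi$: use \lref{pr:G} on the tautology $\phi\Land\psi \to \phi$, then use \lref{ax:K} to turn $[\alpha](\phi\Land\psi\to\phi)$ into $[\alpha](\phi\Land\psi)\to[\alpha]\phi$. Do the same for $\psi$.
\item For the converse: use \lref{pr:G} on $\phi \to (\psi \to \phi\Land\psi)$ and apply \lref{ax:K} twice.
\end{itemize}

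\emph{Rule \lref{pr:M[']}.} From the premise $Q,\psi \vdash \phi$, propositional reasoning gives $\vdash Q \to (\psi \to \phi)$. Applying \lref{pr:G} yields $[x'=f(x)\&Q](Q\to(\psi\to\phi))$, and \lref{ax:DW} turns this into $[x'=f(x)\&Q](\psi\to\phi)$. Then \lref{ax:K} (or equivalently \lref{ax:DMP}) together with the second premise $\Gamma\vdash[x'=f(x)\&Q]\psi$ gives $\Gamma\vdash[x'=f(x)\&Q]\phi$ by modus ponens.

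\emph{Main obstacle.} The only delicate point is using \lref{ax:DW} correctly. Because the domain $Q$ may mention $x$, $Q$ cannot be used as context outside the modality. It has to be internalized inside the box via \lref{ax:DW} rather than through \lref{ax:V}. If \lref{ax:DW} is stated in the form $[x'=f(x)\&Q]Q$, then I would combine it with \lref{ax:K} and \lref{ax:box_and} in place of the equivalence used above.
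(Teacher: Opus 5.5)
Your derivations are correct, and they are the standard ones: \lref{dW} from \lref{ax:DW} and \lref{pr:G}, \lref{ax:box_and} from \lref{ax:K} and \lref{pr:G}, and \lref{pr:M[']} from the \lref{dW}-style step plus \lref{ax:K}; the paper gives no argument of its own and simply cites Platzer's textbook and Platzer--Tan, where derivations of this kind appear. One small slip is that \lref{ax:DMP} is not an equivalent substitute for \lref{ax:K} in your last step, because it relates ODEs with different domain constraints rather than performing modus ponens on the postcondition, but \lref{ax:K} alone suffices there.
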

\begin{proof}
  The proof of these derived axioms and proof rules can be found elsewhere \cite{Platzer18,DBLP:journals/jacm/PlatzerT20}.
\end{proof}

\section{Proofs}

This appendix provides the complete proofs of the presented axioms and proof rules. All 
the derivations are conducted syntactically in the sound proof calculus of \dL.

\subsection{Comparison Principle}\label{appendix:comparison}
Here we provide the proofs of the axioms \lref{ax:SCP} and \lref{ax:VCP}, which 
internalize the \emph{scalar} and \emph{vector comparison principles} in \dL, respectively, 
and serve as the basis for the rest of the presented axioms and proof rules.

\begin{proof}[Proof (Lemma~\ref{lem:scp})]
  The axiom \lref{ax:SCP} is proven syntactically in \dL's proof calculus. The 
  following abbreviations are used throughout the proof.
  \begin{align*}
    \Gamma &\equiv h \geq z \Land \LocLip{\eta} \\
    Q_{\tau} & \equiv Q \Land \tau \geq 0 \\
    \alpha_{x\tau} &\equiv \odes{x' = \mathbf{f}(x), \tau' = 1} \\
    \alpha_{x\tau z} &\equiv \odes{x' = \mathbf{f}(x), \tau' = 1, z' = \eta(z,\tau)}
  \end{align*}
  Recall the following from Definition~\ref{def:local-lip}, which encodes the local Lipschitz 
  condition in real arithmetic, 
  where $B_{r}(a) \equiv t \geq 0 \Land \norm{x - a}^{2} < r^{2} \Land \norm{y - a}^{2} < r^{2}$.
  \[\LocLip{\eta} \equiv \forall \delta \forall a \exists r \exists k \forall t \forall x \forall y (t \leq \delta^{2} \Land B_{r}(a) \to \norm{\eta(x,t) - \eta(y,t)}^{2} \leq k^{2}\norm{x - y}^{2})\]

  After some propositional rewriting the proof begins with the addition of $\tau \geq 0$ 
  in the domain constraint by an application of \lref{ax:DC}.
  \begin{sequentproof}
    \ax[]{\Gamma, \fBox[\ode{\alpha_{x\tau}}[Q]][\dot{h} \geq \eta(h,\tau)]}{\fBox[\ode{\alpha_{x\tau z}}[Q_{\tau}]][h \geq z]}
    \un[ax:DC,dI_inequal]{\Gamma, \fBox[\ode{\alpha_{x\tau}}[Q]][\dot{h} \geq \eta(h,\tau)]}{\fBox[\ode{\alpha_{x\tau z}}[Q]][h \geq z]}
  \end{sequentproof}
  The resulting premise from \lref{ax:DC} is proved by an application of \lref{dI_inequal}, 
  because $Q \vdash \tau' \geq 0$ is provable as $\tau' = 1$.
  
  Continuing from there, the comparison system $z' = \eta(z,\tau)$ is introduced in the 
  antecedent by an application of \lref{ax:BDG}, resulting in two premises. The 
  left premise is abbreviated as \textcircled{a} and right premise as \textcircled{b}, 
  which will be proven separately.
  \begin{sequentproof}
    \ax[]{\Gamma, \fBox[\ode{\alpha_{x\tau z}}[Q]][\dot{h} \geq \eta(h,\tau)]}{\textcircled{a}}
    \ax[]{\Gamma}{\textcircled{b}}
    \bi[ax:BDG]{\Gamma, \fBox[\ode{\alpha_{x\tau}}[Q]][\dot{h} \geq \eta(h,\tau)]}{\fBox[\ode{\alpha_{x\tau z}}[Q_{\tau}]][h \geq z]}
  \end{sequentproof}

  Now continuing with the right premise \textcircled{b}. We need to prove that 
  the solution of the comparison system $z' = \eta(z,\tau)$ is bounded from above by  
  $c\tau$, where $\tau$ is the solution of the ODE $\tau' = 1$ and $c$ is some real constant. 
  Basically this implies that the solution 
  of the comparison system does not blow-up in the (compact) domain of existence. 
  In order to show this we first begin by applying axiom \lref{ax:DI_inequal} followed 
  by axiom \lref{()'}.
  \begin{sequentproof}
    \ax[]{\Gamma}{\fBox[\odes{x' = \mathbf{f}(x), \tau' = 1, z' = \eta(z,\tau)}[Q_{\tau}]][2zz' \leq c\tau']}
    \un[()']{\Gamma}{\fBox[\odes{x' = \mathbf{f}(x), \tau' = 1, z' = \eta(z,\tau)}[Q_{\tau}]][(\norm{z}^{2})' \leq (c\tau)']}
    \un[ax:DI_inequal]{\Gamma}{\fBox[\odes{x' = \mathbf{f}(x), \tau' = 1, z' = \eta(z,\tau)}[Q_{\tau}]][\norm{z}^{2} \leq c\tau]}
  \end{sequentproof}
  Continuing from there an application of axioms \lref{DE} and \lref{[:=]} replaces 
  the prime variables. After which an application of \lref{dW} removes the box modality. 
  Observe that the formula $\Gamma$ is kept soundly in the context as it is quantified.
  \begin{sequentproof}
    \ax[]{\LocLip{\eta}, Q, \tau \geq 0}{2z\cdot \eta(z,\tau) \leq c}
    \un[dW]{\Gamma}{\fBox[\odes{x' = \mathbf{f}(x), \tau' = 1, z' = \eta(z,\tau)}[Q_{\tau}]][2z\cdot\eta(z,\tau) \leq c]}
    \un[DE,[:=]]{\Gamma}{\fBox[\odes{x' = \mathbf{f}(x), \tau' = 1, z' = \eta(z,\tau)}[Q_{\tau}]][2zz' \leq c\tau']}
  \end{sequentproof}
  Now we are left to prove the remaining real arithmetic. To do so first we need to 
  unfold the definition of $\LocLip{\eta}$ with $B_{r}(a) \equiv t \geq 0 \Land \norm{x - a}^{2} < r^{2} \Land \norm{y - a}^{2} < r^{2}$ as follows.
  \[\forall \delta \forall a \exists r \exists k \forall t \forall x \forall y (t \leq \delta^{2} \Land B_{r}(a) \to \norm{\eta(x,t) - \eta(y,t)}^{2} \leq k^{2}\norm{x - y}^{2})\]
  Now instantiating the quantifiers as, $\delta \geq 0$, $a = 0$, $r > 0$, $k \geq 0$, 
  $t = \tau$, $x = z$, and $y = 0$, we get
  \[\tau \leq \delta^{2} \Land \norm{z}^{2} < r^{2} \to \norm{\eta(z,\tau) - \eta(0, \tau)}^{2} \leq k^{2} \norm{z}^{2}\]
  The premise on the left of the implication is provable in real arithmetic. Since 
  $z$ and $\eta(\cdot)$ are scalars we can rewrite the right of the above 
  implication without the norms, as following.
  \begin{equation}\label{eq:eta-inequality}
    (\eta(z,\tau) - \eta(0,\tau))^{2} \leq k^{2}z^{2}
  \end{equation}
  Now by rewriting the following inequality we can derive the required inequality 
  on $2z\cdot\eta(z,\tau)$ as follows.
  \begin{equation}\label{eq:z-tau-inequality}
    \begin{aligned}
      0 &\leq (z - \eta(z,\tau))^{2} \\
      2z\cdot\eta(z,\tau) &\leq z^{2} + \eta(z,\tau)^{2}
    \end{aligned}
  \end{equation}
  This shows in order to prove the given arithmetic we need to give an upper bound 
  on $z^{2}$ and $\eta(z,\tau)^{2}$, respectively. We already know $\norm{z}^{2} < r^{2}$, 
  which implies $z^{2} < r^{2}$, since $z$ is a scalar. Now we are left with determining 
  a bound on $\eta(z,\tau)^{2}$.
  Continuing from (\ref{eq:eta-inequality}) we can rearrange terms as following. 
  \begin{align*}
    \eta(z,\tau)^{2} &\leq k^{2}z^{2} + 2\eta(z,\tau)\cdot\eta(0,\tau) - \eta(0,\tau)^{2} + \eta(0,\tau)^{2} - \eta(0,\tau)^{2} \\
    &\leq k^{2}z^{2} + 2\eta(0,\tau)\cdot(\eta(z,\tau) - \eta(0,\tau)) + \eta(0,\tau)^{2} \\
    &\leq k^{2}z^{2} + ((\eta(z,\tau) - \eta(0,\tau)) +\eta(0,\tau))^{2} - (\eta(z,\tau) - \eta(0,\tau))^{2}
  \end{align*}
  First using (\ref{eq:eta-inequality}) we can replace some terms such that they 
  cancel out. Then we use the standard inequality $(a + b)^{2} \leq 2a^{2} + 2b^{2}$ 
  to get the following upper bound on $\eta(z,\tau)^{2}$.
  \begin{align*}
    \eta(z,\tau)^{2} &\leq \cancel{k^{2}z^{2}} - \cancel{k^{2}z^{2}} + ((\eta(z,\tau) - \eta(0,\tau)) +\eta(0,\tau))^{2} \\
    &\leq 2(\eta(z,\tau) - \eta(0,\tau))^{2} + 2\eta(0,\tau)^{2} \\
    &\leq 2k^{2}z^{2} + 2\eta(0,\tau)^{2}
  \end{align*}
  Recall from Section~\ref{syntax:terms}, according to the term syntax of \dL, fixed 
  function symbols like $\eta(\cdot)$ are continuous in their respective arguments. 
  Since we know $\eta(\cdot,\tau)$ is locally Lipschitz and continuous in $\tau$, therefore 
  it is bounded on a compact domain. From the context of the sequent we know $\tau \geq 0$ 
  and $\tau \leq \delta^{2}$, which means the domain of $\tau$ i.e., the interval 
  $[0,\delta^{2}]$ is compact. This implies there exists some $M \geq 0$, such that 
  $\eta(0,\tau)^{2} \leq M^{2}$ is true. This implies the following upper bound 
  on $\eta(z,\tau)^{2}$.
  \[\eta(z,\tau)^{2} \leq 2k^{2}z^{2} + 2M^{2}\]
  With that we have all the ingredients for cooking up our upper bound. Finally, 
  going back to inequality (\ref{eq:z-tau-inequality}) 
  we have
  \begin{align*}
    2z\cdot\eta(z,\tau) &\leq z^{2} + 2k^{2}z^{2} + 2M^{2} \\
    &\leq z^{2}(1 + 2k^{2}) + 2M^{2} \\
    &\leq r^{2}(1 + 2k^{2}) + 2M^{2}
  \end{align*}
  Now we can first cut in $M \geq 0$, whose square acts as the upper bound on $\eta(0,\tau)^{2}$, 
  and then we can cut in $c = r^{2}(1 + 2k^{2}) + 2M^{2}$. The branch closes with 
  the application of \lref{Real}.
  \begin{sequentproof}
    \ax*[Real]{\LocLip{\eta}, Q, M \geq 0, c = r^{2}(1 + 2k^{2}) + 2M^{2}, \tau \geq 0}{2z\cdot \eta(z,\tau) \leq c}
    \un[cut,exL]{\LocLip{\eta}, Q, \tau \geq 0}{2z\cdot \eta(z,\tau) \leq c}
  \end{sequentproof}

  Coming back to the left premise \textcircled{a} from the application of \lref{ax:BDG}, the 
  proof continuous with the addition of $\dot{h} \geq \eta(h,\tau)$ to the domain constraint 
  of the succedent by an application of \lref{ax:DC}. Next, \lref{pr:M[']} monotonically 
  rewrites the postcondition in the antecedent to $h - z \geq 0$.
  \begin{sequentproof}[align]
    \ax[]{\Gamma, \fBox[\ode{\alpha_{x\tau z}}[Q]][\dot{h} \geq \eta(h,\tau)]}{\fBox[\ode{\alpha_{x\tau z}}[Q_{\tau} \Land \dot{h} \geq \eta(h,\tau)]][h - z \geq 0]}
    \un[pr:M[']]{\Gamma, \fBox[\ode{\alpha_{x\tau z}}[Q]][\dot{h} \geq \eta(h,\tau)]}{\fBox[\ode{\alpha_{x\tau z}}[Q_{\tau} \Land \dot{h} \geq \eta(h,\tau)]][h \geq z]}
    \un[ax:DC]{\Gamma, \fBox[\ode{\alpha_{x\tau z}}[Q]][\dot{h} \geq \eta(h,\tau)]}{\fBox[\ode{\alpha_{x\tau z}}[Q_{\tau}]][h \geq z]}
  \end{sequentproof}
  Now we cut in the provable formula $\exp(k\tau)\cdot(h-z) \geq 0$ using \lref{cut} and \lref{Real}. 
  After that the postcondition in the succedent is monotonically replaced by $\exp(k\tau)\cdot(h-z) \geq 0$ 
  using \lref{pr:M[']}.
  \begin{sequentproof}
    \ax[]{\Gamma, \exp(k\tau)\cdot(h-z) \geq 0}{\fBox[\ode{\alpha_{x\tau z}}[Q_{\tau} \Land \dot{h} \geq \eta(h,\tau)]][\exp(k\tau)\cdot(h-z) \geq 0]}
    \un[pr:M['],Real]{\Gamma, \exp(k\tau)\cdot(h-z) \geq 0, \fBox[\ode{\alpha_{x\tau z}}[Q]][\dot{h} \geq \eta(h,\tau)]}{\fBox[\ode{\alpha_{x\tau z}}[Q_{\tau} \Land \dot{h} \geq \eta(h,\tau)]][h - z \geq 0]}
    \un[cut,Real]{\Gamma, \fBox[\ode{\alpha_{x\tau z}}[Q]][\dot{h} \geq \eta(h,\tau)]}{\fBox[\ode{\alpha_{x\tau z}}[Q_{\tau} \Land \dot{h} \geq \eta(h,\tau)]][h - z \geq 0]}
  \end{sequentproof}
  Now we proceed to proving this new invariant $\exp(k\tau)\cdot(h-z) \geq 0$. Where $\exp(\cdot)$ is 
  the real exponential function, $k$ is the Lipschitz constant, and $\tau$ is the 
  variable for tracking passage of time. The insight 
  here is that by proving this new invariant we will essentially prove $\exp(k\tau)\cdot(h-z)$ 
  is monotonically increasing, which would mean $h - z \geq 0$ is true for all time 
  as long as the solution of the ODEs exists. We continue by applying axiom \lref{ax:DI_inequal}.
  \begin{sequentproof}
    \ax[]{\Gamma}{\fBox[\ode{\alpha_{x\tau z}}[Q_{\tau} \Land \dot{h} \geq \eta(h,\tau)]][(\exp(k\tau)\cdot(h-z))' \geq 0]}
    \un[ax:DI_inequal]{\Gamma, \exp(k\tau)\cdot(h-z) \geq 0}{\fBox[\ode{\alpha_{x\tau z}}[Q_{\tau} \Land \dot{h} \geq \eta(h,\tau)]][\exp(k\tau)\cdot(h-z) \geq 0]}
  \end{sequentproof}
  
  All the resulting differential variables appearing in the postcondition after 
  applying \lref{()'} axiom are replaced by appropriate terms from the context of 
  the ODE $\alpha_{x\tau z}$ by \lref{DE} and \lref{[:=]}.
  \begin{sequentproof}
    \ax[]{\Gamma}{\fBox[\ode{\alpha_{x\tau z}}[Q_{\tau} \Land \dot{h} \geq \eta(h,\tau)]][\exp(k\tau)\cdot(k(h-z) + \dot{h} - \eta(z,\tau)) \geq 0]}
    \un[DE,[:=]]{\Gamma}{\fBox[\ode{\alpha_{x\tau z}}[Q_{\tau} \Land \dot{h} \geq \eta(h,\tau)]][(\exp(k\tau))'\cdot(h-z) + \exp(k\tau) \cdot (h' - z') \geq 0]}
    \un[()']{\Gamma}{\fBox[\ode{\alpha_{x\tau z}}[Q_{\tau} \Land \dot{h} \geq \eta(h,\tau)]][(\exp(k\tau)\cdot(h-z))' \geq 0]}
  \end{sequentproof}
  Continuing from there, we first unfold $\Gamma \equiv h \geq z \Land \LocLip{\eta}$ 
  then apply \lref{dW} to remove the box modality, again \LocLip{$\eta$} 
  is kept soundly in the context since it is quantified.
  \begin{sequentproof}
    \ax[]{\LocLip{\eta}, Q, \tau \geq 0, \dot{h} - \eta(h,\tau) \geq 0}{\exp(k\tau)\cdot(k(h-z) + \dot{h} - \eta(z,\tau)) \geq 0}
    \un[Real]{\LocLip{\eta}, Q, \tau \geq 0, \dot{h} - \eta(h,\tau) \geq 0}{\exp(k\tau)\cdot(k(h-z) + \dot{h} - \eta(z,\tau)) \geq 0}
    \un[dW]{h \geq z \Land \LocLip{\eta}}{\fBox[\ode{\alpha_{x\tau z}}[Q_{\tau} \Land \dot{h} \geq \eta(h,\tau)]][\exp(k\tau)\cdot(k(h-z) + \dot{h} - \eta(z,\tau)) \geq 0]}
  \end{sequentproof}
  Now observe that we have $\dot{h} \geq \eta(h,\tau)$ in the assumption, which implies the following
  \[\exp(k\tau)\cdot(k(h-z) + \dot{h} - \eta(z,\tau)) \geq \exp(k\tau)\cdot(k(h-z) + \eta(h,\tau) - \eta(z,\tau)) \geq 0\]
  We know that $\exp(k\tau) \geq 0$ always, this leaves us with proving $k(h-z) + \eta(h,\tau) - \eta(z,\tau) \geq 0$. 
  Now we first unfold \LocLip{$\eta$} and instantiate the quantifiers appropriately, 
  in particular $r > 0$, $k \geq 0$, $x = z$, and $y = h$.
  \[B_{r}(a) \to \norm{\eta(z,\tau) - \eta(h,\tau)}^{2} \leq k^{2}\norm{z - h}^{2}\]
  $B_{r}(a)$ is provable for some arbitrary $a$, implying $h$ and $z$ are in the 
  open ball of radius $r$. Continuing with the succedent.
  \begin{align*}
    \norm{\eta(z,\tau) - \eta(h,\tau)}^{2} &\leq k^{2}\norm{z - h}^{2} \\
    \norm{\eta(z,\tau) - \eta(h,\tau)} &\leq k\norm{z - h} \\
    \norm{\eta(h,\tau) - \eta(z,\tau)} &\geq -k\norm{h - z} \\
    \norm{\eta(h,\tau) - \eta(z,\tau)} + k\norm{h - z} &\geq 0
  \end{align*}
  Since $h$, $z$, and $\eta(\cdot)$ are scalar functions, hence we can rewrite it as follows
  \[k(h-z) + \eta(h,\tau) - \eta(z,\tau) \geq 0\]
  hence proving the required inequality, the branch closes by \lref{allL}, 
  \lref{exL}, \lref{allL}, and \lref{Real}.
  \begin{sequentproof}[align]
    \ax*[Real]{\LocLip{\eta}, Q, \tau \geq 0, \dot{h} - \eta(h,\tau) \geq 0}{\exp(k\tau)\cdot(k(h-z) + \dot{h} - \eta(z,\tau)) \geq 0}
    \un[allL]{\LocLip{\eta}, Q, \tau \geq 0, \dot{h} - \eta(h,\tau) \geq 0}{\exp(k\tau)\cdot(k(h-z) + \dot{h} - \eta(z,\tau)) \geq 0}
    \un[allL,exL]{\LocLip{\eta}, Q, \tau \geq 0, \dot{h} - \eta(h,\tau) \geq 0}{\exp(k\tau)\cdot(k(h-z) + \dot{h} - \eta(z,\tau)) \geq 0}
  \end{sequentproof}
\end{proof}

\begin{proof}[Proof (Lemma~\ref{lem:vcp})]\label{proof:vcp}
  Axiom \lref{ax:VCP} is proven syntactically using sound axioms of \dL. The following 
  abbreviations are used throughout the proof.
  \begin{align*}
    \Gamma &\equiv \mathbf{h} \geq z \Land \LocLip{\boldsymbol{\eta}} \Land \QuasiMon{\boldsymbol{\eta}} \\
    \alpha_{x\tau} &\equiv \odes{x' = \mathbf{f}(x), \tau' = 1} \\
    \alpha_{x\tau z} &\equiv \odes{x' = \mathbf{f}(x), \tau' = 1, z' = \boldsymbol{\eta}(z,\tau)}
  \end{align*}
  Recall \LocLip{$\boldsymbol{\eta}$} and \QuasiMon{$\boldsymbol{\eta}$} are unfolded as 
  following as per Definition~\ref{def:local-lip} and Definition~\ref{def:quasi-mon}, 
  respectively.
  \begin{align*}
    \LocLip{\eta} &\equiv \forall \delta \forall a \exists r \exists k \forall t \forall x \forall y (t \leq \delta^{2} \Land B_{r} \to \norm{\eta(x,t) - \eta(y,t)}^{2} \leq k^{2}\norm{x - y}^{2}) \\
    \QuasiMon{\boldsymbol{\eta}} &\equiv \forall t \forall x \forall y \left(x \geq y \to \bigwedge^{n}_{i=1}\left(x_{i} = y_{i} \to \eta_{i}(x,t) \geq \eta_{i}(y,t)\right)\right)
  \end{align*}
  Where $B_{r}(a) \equiv t \geq 0 \Land \norm{x - a}^{2} < r^{2} \Land \norm{y - a}^{2} < r^{2}$.

  After propositional rewriting the proof begins by adding $\tau \geq 0$ to the domain 
  constraint with the resulting premise from \lref{ax:DC} being proved by \lref{dI_inequal} 
  as $Q \vdash \tau' \geq 0$ is provable as $\tau' = 1$.
  \begin{sequentproof}
    \ax[]{\Gamma, \fBox[\ode{\alpha_{x\tau}}[Q]][\dot{\mathbf{h}} \geq \boldsymbol{\eta}(\mathbf{h},\tau)]}{\fBox[\ode{\alpha_{x\tau z}}[Q_{\tau}]][\mathbf{h} \geq z]}
    \un[ax:DC,dI_inequal]{\Gamma, \fBox[\ode{\alpha_{x\tau}}[Q]][\dot{\mathbf{h}} \geq \boldsymbol{\eta}(\mathbf{h},\tau)]}{\fBox[\ode{\alpha_{x\tau z}}[Q]][\mathbf{h} \geq z]}
  \end{sequentproof}

  Next we add the comparison system $z' = \boldsymbol{\eta}(z,\tau)$ to the antecedent 
  by an application of \lref{ax:BDG}, resulting in two premises with the left and right 
  premise abbreviated as \textcircled{a} and \textcircled{b}, respectively.
  \begin{sequentproof}
    \ax[]{\Gamma, \fBox[\ode{\alpha_{x\tau}}[Q]][\dot{\mathbf{h}} \geq \boldsymbol{\eta}(\mathbf{h},\tau)]}{\textcircled{a}}
    \ax[]{\Gamma}{\textcircled{b}}
    \bi[ax:BDG]{\Gamma, \fBox[\ode{\alpha_{x\tau}}[Q]][\dot{\mathbf{h}} \geq \boldsymbol{\eta}(\mathbf{h},\tau)]}{\fBox[\ode{\alpha_{x\tau z}}[Q_{\tau}]][\mathbf{h} \geq z]}
  \end{sequentproof}
  
  Continuing with the right premise \textcircled{b}. This open premise requires us 
  to prove that the solution of the newly introduced comparison system $z' = \boldsymbol{\eta}(z,\tau)$ 
  is upper bounded by the solution of the ODE $\tau' = 1$. Essentially saying that 
  the solution $z$ does not blow up in the compact domain of existence. This is 
  shown by applying axiom \lref{ax:DI_inequal} followed by \lref{dW}.
  \begin{sequentproof}
    \ax[]{\LocLip{\boldsymbol{\eta}}, \QuasiMon{\boldsymbol{\eta}}, Q, \tau \geq 0}{2z\cdot \boldsymbol{\eta}(z,\tau) \leq c}
    \un[dW]{\Gamma}{\fBox[\odes{x' = \mathbf{f}(x), \tau' = 1, z' = \boldsymbol{\eta}(z,\tau)}[Q_{\tau}]][2z\cdot \boldsymbol{\eta}(z,\tau) \leq c]}
    \un[ax:DI_inequal]{\Gamma}{\fBox[\odes{x' = \mathbf{f}(x), \tau' = 1, z' = \boldsymbol{\eta}(z,\tau)}[Q_{\tau}]][\norm{z}^{2} \leq c\tau]}
  \end{sequentproof}
  The resulting real arithmetic is proven as follows. First we unfold the definition of 
  $\LocLip{\boldsymbol{\eta}}$ with $B_{r}(a) \equiv t \geq 0 \Land \norm{x - a}^{2} < r^{2} \Land \norm{y - a}^{2} < r^{2}$.
  \[\forall \delta \forall a \exists r \exists k \forall t \forall x \forall y (t \leq \delta^{2} \Land B_{r}(a) \to \norm{\boldsymbol{\eta}(x,t) - \boldsymbol{\eta}(y,t)}^{2} \leq k^{2}\norm{x - y}^{2})\]
  Now instantiating the quantifiers as, $\delta \geq 0$, $a = 0$, $r > 0$, $k \geq 0$, 
  $t = \tau$, $x = z$, and $y = 0$, we get
  \[\tau \leq \delta^{2} \Land \norm{z}^{2} < r^{2} \to \norm{\boldsymbol{\eta}(z,\tau) - \boldsymbol{\eta}(0, \tau)}^{2} \leq k^{2} \norm{z}^{2}\]
  The premise on the left of the implication is provable in real arithmetic. Now using 
  Young's inequality i.e., $ab \leq \frac{a^{p}}{p} + \frac{b^{q}}{q}$, for $p = q = 2$, we get the 
  following relation.
  \[2\norm{z} \norm{\boldsymbol{\eta}(z,\tau)} \leq \norm{z}^{2} + \norm{\boldsymbol{\eta}(z,\tau)}^{2}\]
  Also using the standard inequality $(a + b)^{2} \leq 2a^{2} + 2b^{2}$ after rewriting 
  $\boldsymbol{\eta}(z,\tau) = (\boldsymbol{\eta}(z,\tau) - \boldsymbol{\eta}(0,\tau)) + \boldsymbol{\eta}(0,\tau)$ we get 
  the following relation.
  \begin{align*}
    \norm{\boldsymbol{\eta}(z,\tau)}^{2} &\leq 2\norm{\boldsymbol{\eta}(z,\tau) - \boldsymbol{\eta}(0,\tau)}^{2} + 2\norm{\boldsymbol{\eta}(0,\tau)}^{2} \\
    \norm{\boldsymbol{\eta}(z,\tau)}^{2} &\leq 2k^{2}\norm{z}^{2} + 2\norm{\boldsymbol{\eta}(0,\tau)}^{2}
  \end{align*}
  Now recall from Section~\ref{syntax:terms} the term syntax of \dL ensures that the 
  function $\boldsymbol{\eta}(\cdot, \tau)$ is continuous in all its arguments, and 
  we also know $\boldsymbol{\eta}(\cdot, \tau)$ is locally Lipschitz, which implies it 
  is bounded on a compact domain. Now from the context we have $\tau \geq 0$ and $\tau \leq \delta^{2}$, 
  which means the interval $[0, \delta^{2}]$ is compact. Hence, there exists a real 
  $M \geq 0$ such that $\norm{\boldsymbol{\eta}(0, \tau)}^{2} \leq M^{2}$. This gives us 
  the following inequality.
  \[\norm{\boldsymbol{\eta}(z,\tau)}^{2} \leq 2k^{2}\norm{z}^{2} + 2M^{2}\]
  We already know $\norm{z}^{2} \leq r^{2}$, combining this with the above inequalities we get,
  \begin{align*}
    2\norm{z} \norm{\boldsymbol{\eta}(z,\tau)} &\leq \norm{z}^{2} + 2k^{2}\norm{z}^{2} + 2M^{2} \\
    &\leq r^{2}(1 + 2k^{2}) + 2M^{2}
  \end{align*}
  Now using the Cauchy-Schwarz inequality i.e., $|v \cdot w| \leq \norm{v}\norm{w}$, we get an upper bound 
  on our required inequality.
  \[2z \cdot \boldsymbol{\eta}(z,\tau) \leq 2\norm{z} \norm{\boldsymbol{\eta}(z,\tau)} \leq r^{2}(1 + 2k^{2}) + 2M^{2}\]
  By cutting in $M \geq 0$ and $c = r^{2}(1 + 2k^{2}) + 2M^{2}$ and instantiating 
  all the quantifiers as showed above the branch closes with \lref{Real}.
  \begin{sequentproof}
    \ax*[Real]{\LocLip{\boldsymbol{\eta}}, \QuasiMon{\boldsymbol{\eta}}, Q, \tau \geq 0, M \geq 0, c = r^{2}(1 + 2k^{2}) + 2M^{2}}{2z\cdot \boldsymbol{\eta}(z,\tau) \leq c}
    \un[cut,Real]{\LocLip{\boldsymbol{\eta}}, \QuasiMon{\boldsymbol{\eta}}, Q, \tau \geq 0}{2z\cdot \boldsymbol{\eta}(z,\tau) \leq c}
  \end{sequentproof}

  Continuing with the open premise \textcircled{a}. We first add $\dot{\mathbf{h}} \geq \boldsymbol{\eta}(\mathbf{h},\tau)$ 
  to the domain constraint in the succedent. Then we rewrite the postcondition in 
  the succedent to $\mathbf{h} - z \geq 0$ using \lref{pr:M[']}.
  \begin{sequentproof}
    \ax[]{\Gamma, \fBox[\ode{\alpha_{x\tau z}}[Q]][\dot{\mathbf{h}} \geq \boldsymbol{\eta}(\mathbf{h},\tau)]}{\fBox[\ode{\alpha_{x\tau z}}[Q_{\tau} \Land \dot{\mathbf{h}} \geq \boldsymbol{\eta}(\mathbf{h},\tau)]][\mathbf{h} - z \geq 0]}
    \un[pr:M[']]{\Gamma, \fBox[\ode{\alpha_{x\tau z}}[Q]][\dot{\mathbf{h}} \geq \boldsymbol{\eta}(\mathbf{h},\tau)]}{\fBox[\ode{\alpha_{x\tau z}}[Q_{\tau} \Land \dot{\mathbf{h}} \geq \boldsymbol{\eta}(\mathbf{h},\tau)]][\mathbf{h} \geq z]}
    \un[ax:DC]{\Gamma, \fBox[\ode{\alpha_{x\tau z}}[Q]][\dot{\mathbf{h}} \geq \boldsymbol{\eta}(\mathbf{h},\tau)]}{\fBox[\ode{\alpha_{x\tau z}}[Q_{\tau}]][\mathbf{h} \geq z]}
  \end{sequentproof}
  Next we cut in the provable formula $\exp(k\tau)\cdot(\mathbf{h} - z) \geq 0$. 
  Where $\exp(\cdot)$ is the real exponential function and $k$ is the Lipschitz constant.
  Then the postcondition in the succedent is replaced by $\exp(k\tau)\cdot(\mathbf{h} - z) \geq 0$ 
  using \lref{pr:M[']}, which acts as the new invariant.
  \begin{sequentproof}
    \ax[]{\Gamma, \exp(k\tau)\cdot(\mathbf{h} - z) \geq 0}{\fBox[\ode{\alpha_{x\tau z}}[Q_{\tau} \Land \dot{\mathbf{h}} \geq \boldsymbol{\eta}(\mathbf{h},\tau)]][\exp(k\tau)\cdot(\mathbf{h} - z) \geq 0]}
    \un[pr:M[']]{\Gamma, \exp(k\tau)\cdot(\mathbf{h} - z) \geq 0, \fBox[\ode{\alpha_{x\tau z}}[Q]][\dot{\mathbf{h}} \geq \boldsymbol{\eta}(\mathbf{h},\tau)]}{\fBox[\ode{\alpha_{x\tau z}}[Q_{\tau} \Land \dot{\mathbf{h}} \geq \boldsymbol{\eta}(\mathbf{h},\tau)]][\mathbf{h} - z \geq 0]}
    \un[cut,Real]{\Gamma, \fBox[\ode{\alpha_{x\tau z}}[Q]][\dot{\mathbf{h}} \geq \boldsymbol{\eta}(\mathbf{h},\tau)]}{\fBox[\ode{\alpha_{x\tau z}}[Q_{\tau} \Land \dot{\mathbf{h}} \geq \boldsymbol{\eta}(\mathbf{h},\tau)]][\mathbf{h} - z \geq 0]}
  \end{sequentproof}
  Since all the vector inequality $\mathbf{h} - z \geq 0$ is component-wise and 
  $\exp(k\tau)$ is a scalar, the real arithmetic equivalence $\exp(k\tau)\cdot(\mathbf{h} - z) \geq 0 \Liff \bigwedge^{n}_{i=1}(\exp(k\tau)\cdot(h_{i} - z_{i}) \geq 0)$, 
  where each $h_{i}$ and $z_{i}$ are scalars. The postcondition in the succedent is rewritten 
  to this equivalence using \lref{pr:M[']}.
  \begin{sequentproof}
    \ax[]{\Gamma, \bigwedge^{n}_{i=1}(\exp(k\tau)\cdot(h_{i} - z_{i}) \geq 0)}{\fBox[\ode{\alpha_{x\tau z}}[Q_{\tau} \Land \dot{\mathbf{h}} \geq \boldsymbol{\eta}(\mathbf{h},\tau)]][\bigwedge^{n}_{i=1}(\exp(k\tau)\cdot(h_{i} - z_{i}) \geq 0)]}
    \un[pr:M['],Real]{\Gamma, \exp(k\tau)\cdot(\mathbf{h} - z) \geq 0}{\fBox[\ode{\alpha_{x\tau z}}[Q_{\tau} \Land \dot{\mathbf{h}} \geq \boldsymbol{\eta}(\mathbf{h},\tau)]][\exp(k\tau)\cdot(\mathbf{h} - z) \geq 0]}
  \end{sequentproof}
  Now we decompose the conjunctions in the postcondition using \lref{ax:box_and}, 
  which results in conjunctions of box modalities.
  \begin{sequentproof}
    \ax[]{\Gamma, \bigwedge^{n}_{i=1}(\exp(k\tau)\cdot(h_{i} - z_{i}) \geq 0)}{\bigwedge^{n}_{i=1}(\fBox[\ode{\alpha_{x\tau z}}[Q_{\tau} \Land \dot{\mathbf{h}} \geq \boldsymbol{\eta}(\mathbf{h},\tau)]][\exp(k\tau)\cdot(h_{i} - z_{i}) \geq 0])}
    \un[ax:box_and]{\Gamma, \bigwedge^{n}_{i=1}(\exp(k\tau)\cdot(h_{i} - z_{i}) \geq 0)}{\fBox[\ode{\alpha_{x\tau z}}[Q_{\tau} \Land \dot{\mathbf{h}} \geq \boldsymbol{\eta}(\mathbf{h},\tau)]][\bigwedge^{n}_{i=1}(\exp(k\tau)\cdot(h_{i} - z_{i}) \geq 0)]}
  \end{sequentproof}
  Now with the application of \lref{andR} we get $n$ stacked open premises, each of them 
  consisting a box modality with one of the scalar components of the original vector 
  inequality.
  \begin{sequentproof}
    \ax[]{\Gamma, \bigwedge^{n}_{i=1}(\exp(k\tau)\cdot(h_{i} - z_{i}) \geq 0)}{\fBox[\ode{\alpha_{x\tau z}}[Q_{\tau} \Land \dot{\mathbf{h}} \geq \boldsymbol{\eta}(\mathbf{h},\tau)]][\exp(k\tau)\cdot(h_{1} - z_{1}) \geq 0]}
    \noLine
    \UnaryInfC{$\vdots$}
    \noLine
    \un[]{\Gamma, \bigwedge^{n}_{i=1}(\exp(k\tau)\cdot(h_{i} - z_{i}) \geq 0)}{\fBox[\ode{\alpha_{x\tau z}}[Q_{\tau} \Land \dot{\mathbf{h}} \geq \boldsymbol{\eta}(\mathbf{h},\tau)]][\exp(k\tau)\cdot(h_{n} - z_{n}) \geq 0]}
    \un[andR]{\Gamma, \bigwedge^{n}_{i=1}(\exp(k\tau)\cdot(h_{i} - z_{i}) \geq 0)}{\bigwedge^{n}_{i=1}(\fBox[\ode{\alpha_{x\tau z}}[Q_{\tau} \Land \dot{\mathbf{h}} \geq \boldsymbol{\eta}(\mathbf{h},\tau)]][\exp(k\tau)\cdot(h_{i} - z_{i}) \geq 0])}
  \end{sequentproof}
  From here we will only focus on the $i$-th component, as the same result can be 
  followed for all the $n$ premises. Since we have a scalar inequality in the postcondition 
  we can continue by applying axiom \lref{ax:DI_inequal}.
  \begin{sequentproof}
    \ax[]{\Gamma}{\fBox[\ode{\alpha_{x\tau z}}[Q_{\tau} \Land \dot{h} \geq \boldsymbol{\eta}(\mathbf{h},\tau)]][(\exp(k\tau)\cdot(h_{i} - z_{i}))' \geq 0]}
    \un[ax:DI_inequal]{\Gamma, \bigwedge^{n}_{i=1}(\exp(k\tau)\cdot(h_{i} - z_{i}) \geq 0)}{\fBox[\ode{\alpha_{x\tau z}}[Q_{\tau} \Land \dot{h} \geq \boldsymbol{\eta}(\mathbf{h},\tau)]][\exp(k\tau)\cdot(h_{i} - z_{i}) \geq 0]}
  \end{sequentproof}
  
  All the resulting differential variables appearing in the postcondition after 
  applying \lref{()'} axiom are replaced by appropriate terms from the $i$-th component of 
  the ODE $\alpha_{x\tau z}$ by \lref{DE} and \lref{[:=]}.
  \begin{sequentproof}
    \ax[]{\Gamma}{\fBox[\ode{\alpha_{x\tau z}}[Q_{\tau} \Land \dot{h} \geq \boldsymbol{\eta}(\mathbf{h},\tau)]][\exp(k\tau)\cdot(k(h_{i}-z_{i}) + \dot{h_{i}} - \eta_{i}(z,\tau)) \geq 0]}
    \un[DE,[:=]]{\Gamma}{\fBox[\ode{\alpha_{x\tau z}}[Q_{\tau} \Land \dot{h} \geq \boldsymbol{\eta}(\mathbf{h},\tau)]][(\exp(k\tau))'\cdot(h_{i}-z_{i}) + \exp(k\tau) \cdot (h_{i}' - z_{i}') \geq 0]}
    \un[()']{\Gamma}{\fBox[\ode{\alpha_{x\tau z}}[Q_{\tau} \Land \dot{h} \geq \boldsymbol{\eta}(\mathbf{h},\tau)]][(\exp(k\tau)\cdot(h_{i}-z_{i}))' \geq 0]}
  \end{sequentproof}
  Continuing from there, we first unfold $\Gamma \equiv \mathbf{h} \geq z \Land \LocLip{\boldsymbol{\eta}} \Land \QuasiMon{\boldsymbol{\eta}}$ 
  then apply \lref{dW} to remove the box modality, $\LocLip{\boldsymbol{\eta}}$ and 
  $\QuasiMon{\boldsymbol{\eta}}$ are kept soundly in the context as they are quantified.
  \begin{sequentproof}
    \ax[]{\LocLip{\boldsymbol{\eta}}, \QuasiMon{\boldsymbol{\eta}}, Q, \tau \geq 0, \dot{\mathbf{h}} - \boldsymbol{\eta}(\mathbf{h},\tau) \geq 0}{\exp(k\tau)\cdot(k(h_{i}-z_{i}) + \dot{h_{i}} - \eta_{i}(z,\tau)) \geq 0}
    \un[dW]{\Gamma}{\fBox[\ode{\alpha_{x\tau z}}[Q_{\tau} \Land \dot{h} \geq \eta(h,\tau)]][\exp(k\tau)\cdot(k(h_{i}-z_{i}) + \dot{h_{i}} - \eta_{i}(z,\tau)) \geq 0]}
  \end{sequentproof}
  First we rewrite $\dot{\mathbf{h}} - \boldsymbol{\eta}(\mathbf{h},\tau) \geq 0 \Liff \bigwedge^{n}_{i=1}(\dot{h_{i}} - \eta_{i}(\mathbf{h},\tau) \geq 0)$, since 
  the vector inequality holds component-wise.
  Now observe that we have $\dot{h_{i}} \geq \eta_{i}(\mathbf{h},\tau)$ in the assumption, which implies the following
  \[\exp(k\tau)\cdot(k(h_{i}-z_{i}) + \dot{h_{i}} - \eta_{i}(z,\tau)) \geq \exp(k\tau)\cdot(k(h_{i}-z_{i}) + \eta_{i}(\mathbf{h},\tau) - \eta_{i}(z,\tau)) \geq 0\]
  We know that $\exp(k\tau) \geq 0$ always, this leaves us with proving $k(h_{i}-z_{i}) + \eta_{i}(\mathbf{h},\tau) - \eta_{i}(z,\tau) \geq 0$.
  We continue proving this inequality by first constructing the following vector, 
  \[\tilde{z} = (h_{1}, \dots, h_{i-1}, z_{i}, h_{i+1}, \dots, h_{n})^{\top}\]
  which says the vectors $\tilde{z}$ and $z$ agree on the $i$-th component, i.e., $\tilde{z}_{i} = z_{i}$, 
  and the vectors $\tilde{z}$ and $\mathbf{h}$ agree on every component except $i$-th, 
  i.e., $\tilde{z}_{j} = h_{j}$ for all $j \neq i$. Unfolding $\QuasiMon{\boldsymbol{\eta}}$, 
  we have by quasimonotonicity,
  \[z \leq \tilde{z} \to \bigwedge^{n}_{i=1}\left(z_{i} = \tilde{z}_{i} \to \eta_{i}(z,\tau) \leq \eta_{i}(\tilde{z},\tau)\right)\]
  we can rewrite the right side of the inner implication as, 
  \begin{align*}
    -\eta_{i}(z,\tau) &\geq -\eta_{i}(\tilde{z},\tau) \\
    \eta_{i}(\mathbf{h},\tau) -\eta_{i}(z,\tau) &\geq \eta_{i}(\mathbf{h},\tau) -\eta_{i}(\tilde{z},\tau)
  \end{align*}
  Now unfolding the $\LocLip{\boldsymbol{\eta}}$ with $a = \mathbf{h}$, $r > 0$, $k \geq 0$, 
  $t = \tau$, $x = \mathbf{h}$, and $y = \tilde{z}$, we get from $B_{r}(a)$
  \[\norm{\mathbf{h} - \mathbf{h}}^{2} < r^{2} \Land \norm{\tilde{z} - \mathbf{h}}^{2} < r^{2}\]
  Recall that the vectors $\tilde{z}$ and $\mathbf{h}$ agree on all components except 
  the $i$-th. Hence, we get this relation $\norm{\tilde{z} - h}^{2} = (h_{i} - z_{i})^{2} < r^{2}$. 
  From local Lipschitz we get,
  \[\norm{\boldsymbol{\eta}(\mathbf{h},\tau) - \boldsymbol{\eta}(\tilde{z},\tau)}^{2} \leq k^{2}\norm{\mathbf{h} - \tilde{z}}^{2}\]
  We know the $i$-th component is bounded by the norm,
  \[(\eta_{i}(\mathbf{h},\tau) - \eta_{i}(\tilde{z},\tau))^{2} \leq \norm{\boldsymbol{\eta}(\mathbf{h},\tau) - \boldsymbol{\eta}(\tilde{z},\tau)}^{2}\]
  hence we get the following inequality.
  \begin{align*}
    (\eta_{i}(\mathbf{h},\tau) - \eta_{i}(\tilde{z},\tau))^{2} &\leq k^{2}(h_{i} - z_{i})^{2} \\
    \eta_{i}(\mathbf{h},\tau) - \eta_{i}(\tilde{z},\tau) &\geq -k(h_{i} - z_{i})
  \end{align*}
  Now combining all the relations i.e., $\eta_{i}(\mathbf{h},\tau) -\eta_{i}(z,\tau) \geq \eta_{i}(\mathbf{h},\tau) -\eta_{i}(\tilde{z},\tau)$, 
  and $\eta_{i}(\mathbf{h},\tau) - \eta_{i}(\tilde{z},\tau) \geq -k(h_{i} - z_{i})$, we 
  get the following
  \[k(h_{i} - z_{i}) + \eta_{i}(\mathbf{h},\tau) - \eta_{i}(z,\tau) \geq 0\]
  which is the required inequality. The branch closes with the application of \lref{Real}.
  \begin{sequentproof}
    \ax*[Real]{\LocLip{\boldsymbol{\eta}}, \QuasiMon{\boldsymbol{\eta}}, Q, \tau \geq 0, \dot{\mathbf{h}} - \boldsymbol{\eta}(\mathbf{h},\tau) \geq 0}{\exp(k\tau)\cdot(k(h_{i}-z_{i}) + \dot{h_{i}} - \eta_{i}(z,\tau)) \geq 0}
  \end{sequentproof}
  This finishes the proof of the $i$-th component, which is one of the $n$ open 
  premises. By following this procedure $n$-times all the open premises are closed finishing 
  the proof of axiom \lref{ax:VCP}.
\end{proof}

\subsection{Control Invariance}

All the axioms and proof rules presented in this paper, which enable sound reasoning 
of control invariance in \dL are proved here.

\subsubsection{Comparison Invariants}\label{appendix:comparison-invariants}
\begin{proof}[Proof (Theorem~\ref{thm:scalar-ci})]
  The following abbreviations are used throughout the proof.
  \begin{align*}
    \Gamma &\equiv \LocLip{\eta} \Land \G{\eta} \Land h \succcurlyeq 0, \quad Q_{\tau} \equiv Q \Land \tau \geq 0, \quad Q_{\tau z} \equiv Q \Land \tau \geq 0 \Land z \succcurlyeq 0 \\
    \alpha_{x\tau} &\equiv x' = \mathbf{f}(x,u), \tau' = 1, \qquad \alpha_{x\tau z} \equiv \odes{x' = \mathbf{f}(x,u), \tau' = 1, z' = \eta(z,\tau)} \\
    \G{\eta} &\equiv \forall t \forall x \forall y(\eta(0,t) = 0 \Land  (x \geq y \to \eta(x,t) \geq \eta(y,t)))
  \end{align*}

  \paragraph{(\lref{ax:SCI})} After propositional rewriting the proof starts by resolving the existential quantifier 
  in the succedent using \lref{exR}, such that the resulting right-hand side of the 
  ODE $\mathbf{f}(x,u,\tau)$ is still locally Lipschitz after replacing $u$ with 
  some Skolem function (also a polynomial). After that \lref{ax:DC} and \lref{dI_inequal} 
  add $\tau \geq 0$ to the domain constraint of the succedent.
  \begin{sequentproof}
    \ax[]{\Gamma, \fBox[\ode{\alpha_{x\tau}}[Q_{\tau}]][\exists u(\mathcal{U} \Land \dot{h} + \eta(h,\tau) \geq 0)]}{\fBox[\ode{\alpha_{x\tau}}[Q_{\tau}]][h \succcurlyeq 0]}
    \un[DC,dI_inequal]{\Gamma, \fBox[\ode{\alpha_{x\tau}}[Q]][\exists u(\mathcal{U} \Land \dot{h} + \eta(h,\tau) \geq 0)]}{\fBox[\ode{\alpha_{x\tau}}[Q]][h \succcurlyeq 0]}
    \un[exR]{\Gamma, \fBox[\ode{\alpha_{x\tau}}[Q]][\exists u(\mathcal{U} \Land \dot{h} + \eta(h,\tau) \geq 0)]}{\exists u \fBox[\ode{\alpha_{x\tau}}[Q]][h \succcurlyeq 0]}
  \end{sequentproof}
  From there, axiom \lref{ax:BDG} adds the comparison system $z' = -\eta(z,\tau)$ 
  to the succedent, resulting in two separate premises abbreviated as \textcircled{a} 
  and \textcircled{b}.
  \begin{sequentproof}
    \ax[]{\Gamma, \fBox[\ode{\alpha_{x\tau}}[Q_{\tau}]][\exists u(\mathcal{U} \Land \dot{h} + \eta(h,\tau) \geq 0)]}{\textcircled{a}}
    \ax[]{\Gamma}{\textcircled{b}}
    \bi[ax:BDG]{\Gamma, \fBox[\ode{\alpha_{x\tau}}[Q_{\tau}]][\exists u(\mathcal{U} \Land \dot{h} + \eta(h,\tau) \geq 0)]}{\fBox[\ode{\alpha_{x\tau}}[Q_{\tau}]][h \succcurlyeq 0]}
  \end{sequentproof}

  Continuing with \textcircled{b}. This open premise says we have to show the solution 
  of the introduced comparison system $z' = \eta(z,\tau)$ is bounded from above 
  by the solution of the ODE $\tau' = 1$. Following in similar steps from the 
  proof of axiom \lref{ax:SCP}, we apply axiom \lref{ax:DI_inequal} followed by 
  \lref{()'}.
  \begin{sequentproof}
    \ax[]{\Gamma}{\fBox[\odes{x' = \mathbf{f}(x,u), \tau' = 1, z' = \eta(z,\tau)}[Q_{\tau}]][2zz' \leq c\tau']}
    \un[ax:DI_inequal,()']{\Gamma}{\fBox[\odes{x' = \mathbf{f}(x,u), \tau' = 1, z' = \eta(z,\tau)}[Q_{\tau}]][\norm{z}^{2} \leq c\tau]}
  \end{sequentproof}
  After that with the applications of \lref{DE}, \lref{[:=]}, and \lref{dW}, we 
  are left with proving the real arithmetic inequality $2z\cdot\eta(z,\tau) \leq c$.
  \begin{sequentproof}
    \ax[]{\LocLip{\eta}, \G{\eta}, Q, \tau \geq 0}{-2z\cdot\eta(z,\tau) \leq c}
    \un[dW]{\Gamma}{\fBox[\odes{x' = \mathbf{f}(x,u), \tau' = 1, z' = \eta(z,\tau)}[Q_{\tau}]][2z\cdot\eta(z,\tau) \leq c]}
    \un[DE,[:=]]{\Gamma}{\fBox[\odes{x' = \mathbf{f}(x,u), \tau' = 1, z' = \eta(z,\tau)}[Q_{\tau}]][2zz' \leq c\tau']}
  \end{sequentproof}
  In order to prove the resulting real arithmetic, we first have to unfold the definition 
  of \LocLip{$\eta$} as follows.
  \[\forall \delta \forall a \exists r \exists k \forall t \forall x \forall y (t \leq \delta^{2} \Land B_{r}(a) \to \norm{\eta(x,t) - \eta(y,t)}^{2} \leq k^{2}\norm{x - y}^{2})\]
  Here $B_{r}(a) \equiv t \geq 0 \Land \norm{x - a}^{2} < r^{2} \Land \norm{y - a}^{2} < r^{2}$, 
  similarly we unfold the definition of $\G{\eta}$ as following.
  \[\forall t \forall x \forall y(\eta(0,t) = 0 \Land  (x \leq y \to \eta(x,t) \leq \eta(y,t)))\]
  Next instantiating the quantifiers as $\delta \geq 0$, $a = 0$, $r > 0$, $k \geq 0$, 
  $t = \tau$, $x = z$, and $y = 0$ we get the following from the definition of local 
  Lipschitz.
  \[\tau \leq \delta^{2} \Land \norm{z}^{2} < r^{2} \to \norm{\eta(z,\tau) - \eta(0,\tau)}^{2} \leq k^{2}\norm{z}^{2}\]
  Now from the definition of $\G{\eta}$ we know $\eta(0,\tau) = 0$, which implies 
  the following.
  \[\tau \leq \delta^{2} \Land \norm{z}^{2} < r^{2} \to \norm{\eta(z,\tau)}^{2} \leq k^{2}\norm{z}^{2}\]
  The left of the implication is provable in real arithmetic. The terms right of 
  the implication can be rewritten without norms as $z$ and $\eta(\cdot)$ are scalars.
  \[\eta(z,\tau)^{2} \leq k^{2}z^{2}\]
  Now after expanding and then rearranging the terms in the following inequality we 
  get.
  \begin{align*}
    0 &\leq (z - \eta(z,\tau))^{2} \\
    2z\cdot\eta(z,\tau) &\leq z^{2} + \eta(z,\tau)^{2}
  \end{align*}
  We also already know that $\norm{z}^{2} < r^{2}$, rewriting it equivalently without 
  the norm we get $z^{2} < r^{2}$. With that we have our upper bounds in $z^{2}$ and 
  $\eta(z,\tau)^{2}$, which are combined as follows to get the desired upper bound.
  \begin{align*}
    2z\cdot\eta(z,\tau) &\leq z^{2} + \eta(z,\tau)^{2} \\
    &\leq z^{2} + k^{2}z^{2} \\
    &\leq r^{2}(1 + k^{2})
  \end{align*}
  The branch closes by cutting in $c = r^{2}(1 + k^{2})$ followed by \lref{Real}.
  \begin{sequentproof}
    \ax*[Real]{\LocLip{\eta}, \G{\eta}, Q, \tau \geq 0, c = r^{2}(1 + k^{2})}{2z\cdot\eta(z,\tau) \leq c}
    \un[cut,exL]{\LocLip{\eta}, \G{\eta}, Q, \tau \geq 0}{2z\cdot\eta(z,\tau) \leq c}
  \end{sequentproof}

  Now continuing with the open left premise \textcircled{a} from the application of \lref{ax:BDG}. 
  We first add $z \succcurlyeq 0$ ($\succcurlyeq$ is either $\geq$ or $>$) to the domain constraint of the succedent 
  using \lref{ax:DC}, resulting in two premises \textcircled{c} and \textcircled{d}.
  \begin{sequentproof}
    \ax[]{\Gamma, \fBox[\ode{\alpha_{x\tau}}[Q_{\tau}]][\exists u(\mathcal{U} \Land \dot{h} + \eta(h,\tau) \geq 0)]}{\textcircled{c}}
    \ax[]{\Gamma, \fBox[\ode{\alpha_{x\tau}}[Q_{\tau}]][\exists u(\mathcal{U} \Land \dot{h} + \eta(h,\tau) \geq 0)]}{\textcircled{d}}
    \bi[ax:DC]{\Gamma, \fBox[\ode{\alpha_{x\tau}}[Q_{\tau}]][\exists u(\mathcal{U} \Land \dot{h} + \eta(h,\tau) \geq 0)]}{\fBox[\ode{\alpha_{x\tau z}}[Q_{\tau}]][h \succcurlyeq 0]}
  \end{sequentproof}

  Bringing our attention to the open premise \textcircled{c}. An application of 
  \lref{pr:M[']} monotonically rewrites the postcondition in the succedent to 
  $h \geq z$. The resulting premise $Q, \tau \geq 0, z \succcurlyeq 0, h \geq z \vdash h \succcurlyeq 0$ 
  is provable in real arithmetic.
  \begin{sequentproof}
    \ax[]{\Gamma, \fBox[\ode{\alpha_{x\tau}}[Q_{\tau}]][\exists u(\mathcal{U} \Land \dot{h} + \eta(h,\tau) \geq 0)]}{\fBox[\ode{\alpha_{x\tau z}}[Q_{\tau z}]][h \geq z]}
    \un[pr:M['],Real]{\Gamma, \fBox[\ode{\alpha_{x\tau}}[Q_{\tau}]][\exists u(\mathcal{U} \Land \dot{h} + \eta(h,\tau) \geq 0)]}{\fBox[\ode{\alpha_{x\tau z}}[Q_{\tau z}]][h \succcurlyeq 0]}
  \end{sequentproof}
  From there, the formula $h \geq z$ is cut into the antecedent by first cutting in 
  $\exists z(z = h)$, which combined with $h \succcurlyeq 0$ 
  implies $h \geq z$. After that \lref{pr:M[']} rewrites the postcondition in the 
  antecedent to $\mathcal{U} \Land \dot{h} + \eta(h,\tau) \geq 0$, which is provable 
  by resolving the existential quantifier.
  \begin{sequentproof}
    \ax[]{\Gamma, h \geq z, \fBox[\ode{\alpha_{x\tau}}[Q_{\tau}]][(\mathcal{U} \Land \dot{h} + \eta(h,\tau) \geq 0)]}{\fBox[\ode{\alpha_{x\tau z}}[Q_{\tau z}]][h \geq z]}
    \un[pr:M[']]{\Gamma, h \geq z, \fBox[\ode{\alpha_{x\tau}}[Q_{\tau}]][\exists u(\mathcal{U} \Land \dot{h} + \eta(h,\tau) \geq 0)]}{\fBox[\ode{\alpha_{x\tau z}}[Q_{\tau z}]][h \geq z]}
    \un[cut,Real]{\Gamma, \fBox[\ode{\alpha_{x\tau}}[Q_{\tau}]][\exists u(\mathcal{U} \Land \dot{h} + \eta(h,\tau) \geq 0)]}{\fBox[\ode{\alpha_{x\tau z}}[Q_{\tau z}]][h \geq z]}
  \end{sequentproof}
  Next the conjunction in the postcondition of the antecedent is decomposed using 
  \lref{ax:box_and}. The formula $\fBox[\ode{\alpha_{x\tau}}[Q_{\tau}]][\mathcal{U}]$ 
  is dropped from the context by weakening as it is not required from here on. 
  After that an application of \lref{ax:DC} adds $z \succcurlyeq 0$ to the domain 
  constraint of the antecedent to match the succedent, resulting in two open premises 
  labeled \textcircled{e} and \textcircled{f}.
  \begin{sequentproof}
    \ax[]{\Gamma, h \geq z, \fBox[\ode{\alpha_{x\tau}}[Q_{\tau z}]][\dot{h} + \eta(h,\tau) \geq 0]}{\textcircled{e}}
    \ax[]{\Gamma, h \geq z}{\textcircled{f}}
    \bi[ax:DC]{\Gamma, h \geq z, \fBox[\ode{\alpha_{x\tau}}[Q_{\tau}]][\mathcal{U}], \fBox[\ode{\alpha_{x\tau}}[Q_{\tau}]][\dot{h} + \eta(h,\tau) \geq 0]}{\fBox[\ode{\alpha_{x\tau z}}[Q_{\tau z}]][h \geq z]}
    \un[ax:box_and]{\Gamma, h \geq z, \fBox[\ode{\alpha_{x\tau}}[Q_{\tau}]][(\mathcal{U} \Land \dot{h} + \eta(h,\tau) \geq 0)]}{\fBox[\ode{\alpha_{x\tau z}}[Q_{\tau z}]][h \geq z]}
  \end{sequentproof}
  
  Continuing with the left premise \textcircled{e}. The branch is closed by a straightforward 
  application of the scalar comparison principle axiom \lref{ax:SCP}.
  \begin{sequentproof}
    \ax*[ax:SCP]{\LocLip{\eta}, \G{\eta}, h \geq z, \fBox[\ode{\alpha_{x\tau}}[Q_{\tau z}]][\dot{h} + \eta(h,\tau) \geq 0]}{\fBox[\ode{\alpha_{x\tau z}}[Q_{\tau z}]][h \geq z]}
    \un[]{\Gamma, h \geq z, \fBox[\ode{\alpha_{x\tau}}[Q_{\tau z}]][\dot{h} + \eta(h,\tau) \geq 0]}{\fBox[\ode{\alpha_{x\tau z}}[Q_{\tau z}]][h \geq z]}
  \end{sequentproof}

  Getting back to the open right premise \textcircled{d}, \lref{pr:M[']} rewrites 
  the postcondition of the box modality in the succedent to $\exp(k\tau)\cdot z \succcurlyeq 0$. Here 
  $\exp(\cdot)$ is the real exponential function, $k$ is the Lipschitz constant, 
  and $\tau$ is the time variable. After that an application of \lref{cut} and \lref{Real} 
  adds $\exists z(h = z)$ to the antecedent.
  \begin{sequentproof}[align]
    \ax[]{\Gamma, \exists z(h = z), \fBox[\ode{\alpha_{x\tau}}[Q_{\tau}]][\exists u(\mathcal{U} \Land \dot{h} + \eta(h,\tau) \geq 0)]}{\fBox[\ode{\alpha_{x\tau z}}[Q_{\tau}]][\exp(k\tau)\cdot z \succcurlyeq 0]}
    \un[cut,Real]{\Gamma, \fBox[\ode{\alpha_{x\tau}}[Q_{\tau}]][\exists u(\mathcal{U} \Land \dot{h} + \eta(h,\tau) \geq 0)]}{\fBox[\ode{\alpha_{x\tau z}}[Q_{\tau}]][\exp(k\tau)\cdot z \succcurlyeq 0]}
    \un[pr:M[']]{\Gamma, \fBox[\ode{\alpha_{x\tau}}[Q_{\tau}]][\exists u(\mathcal{U} \Land \dot{h} + \eta(h,\tau) \geq 0)]}{\fBox[\ode{\alpha_{x\tau z}}[Q_{\tau}]][z \succcurlyeq 0]}
  \end{sequentproof}
  Following from there, another \lref{cut} adds $\exp(k\tau)\cdot z \succcurlyeq 0$ 
  to the antecedent. Now we can apply \lref{ax:DI_inequal} followed by \lref{()'}.
  \begin{sequentproof}[align]
    \ax[]{\Gamma, \exists z(h = z)}{\fBox[\ode{\alpha_{x\tau z}}[Q_{\tau}]][\exp(k\tau)\cdot (k\tau'z + z') \geq 0]}
    \un[()']{\Gamma, \exists z(h = z)}{\fBox[\ode{\alpha_{x\tau z}}[Q_{\tau}]][(\exp(k\tau)\cdot z)' \geq 0]}
    \un[ax:DI_inequal]{\Gamma, \exists z(h = z), \exp(k\tau)\cdot z \succcurlyeq 0}{\fBox[\ode{\alpha_{x\tau z}}[Q_{\tau}]][\exp(k\tau)\cdot z \succcurlyeq 0]}
    \un[cut,Real]{\Gamma, \exists z(h = z)}{\fBox[\ode{\alpha_{x\tau z}}[Q_{\tau}]][\exp(k\tau)\cdot z \succcurlyeq 0]}
  \end{sequentproof}
  Now after applying axioms \lref{DE}, \lref{[:=]}, and then weakening \lref{dW}, 
  we are left proving the resulting real arithmetic. Again all the quantified formulas 
  are soundly kept into context.
  \begin{sequentproof}
    \ax[]{\LocLip{\eta}, \G{\eta}, \exists z(h = z), Q, \tau \geq 0}{\exp(k\tau)\cdot (kz + \eta(z,\tau)) \geq 0}
    \un[dW]{\Gamma, \exists z(h = z)}{\fBox[\ode{\alpha_{x\tau z}}[Q_{\tau}]][\exp(k\tau)\cdot (kz + \eta(z,\tau)) \geq 0]}
    \un[DE,[:=]]{\Gamma, \exists z(h = z)}{\fBox[\ode{\alpha_{x\tau z}}[Q_{\tau}]][\exp(k\tau)\cdot (k\tau'z + z') \geq 0]}
  \end{sequentproof}
  We know $\exp(k\eta) \geq 0$, so it suffices to prove $kz + \eta(z,\tau) \geq 0$.
  Now unfolding $\LocLip{\eta}$ and $\G{\eta}$ with the quantifiers instantiated as 
  $\delta \geq 0$, $a = 0$, $r > 0$, $k \geq 0$, 
  $t = \tau$, $x = z$, and $y = 0$ we get the following implication.
  \[\tau \leq \delta^{2} \Land \norm{z}^{2} < r^{2} \to \norm{\eta(z,\tau) - \eta(0,\tau)}^{2} \leq k^{2}\norm{z}^{2}\]
  From the definition of $\G{\eta}$ we know $\eta(0,\tau) = 0$, which implies 
  the following.
  \[\tau \leq \delta^{2} \Land \norm{z}^{2} < r^{2} \to \norm{\eta(z,\tau)}^{2} \leq k^{2}\norm{z}^{2}\]
  The terms right of the implication can be rewritten without the squared norms as $z$ and $\eta(\cdot)$ are scalars.
  \begin{align*}
    \eta(z,\tau) &\geq -kz \\
    kz + \eta(z,\tau) &\geq 0
  \end{align*}
  The branch closes with an application of \lref{Real}.
  \begin{sequentproof}
    \ax*[Real]{\LocLip{\eta}, \G{\eta}, \exists z(h = z)}{\exp(k\tau)\cdot (kz + \eta(z,\tau)) \geq 0}
  \end{sequentproof}

  Finally, coming back to the open premise \textcircled{f}, the proof of this branch 
  follows similarly to \textcircled{d}. First, \lref{pr:M[']} introduces the new 
  invariant $\exp(k\tau)\cdot z \succcurlyeq 0$, where $\exp(\cdot)$ is the real 
  exponential function. Then an application of \lref{ax:DI_inequal} followed by 
  \lref{dW} results in real arithmetic proof obligations.
  \begin{sequentproof}
    \ax[]{\LocLip{\eta}, \G{\eta}, Q, \tau \geq 0}{\exp(k\tau)\cdot (kz + \eta(z,\tau)) \geq 0}
    \un[ax:DI_inequal,dW]{\Gamma, h \geq z, \exp(k\tau)\cdot z \succcurlyeq 0}{\fBox[\ode{\alpha_{x\tau z}}[Q_{\tau}]][\exp(k\tau)\cdot z \succcurlyeq 0]}
    \un[pr:M['],cut]{\Gamma, h \geq z}{\fBox[\ode{\alpha_{x\tau z}}[Q_{\tau}]][z \succcurlyeq 0]}
  \end{sequentproof}
  Similar as before proving $kz - \eta(z,\tau) \geq 0$ suffices to prove the given 
  inequality. Unfolding $\LocLip{\eta}$ and $\G{\eta}$ with the quantifiers 
  instantiated as $\delta \geq 0$, $a = 0$, $r > 0$, $k \geq 0$, 
  $t = \tau$, $x = z$, and $y = 0$ we get the following implication.
  \[\tau \leq \delta^{2} \Land \norm{z}^{2} < r^{2} \to \norm{\eta(z,\tau) - \eta(0,\tau)}^{2} \leq k^{2}\norm{z}^{2}\]
  Then combining with the definition of $\G{\eta}$, we get the desired inequality 
  $kz + \eta(z,\tau) \geq 0$.
  \begin{sequentproof}
    \ax*[Real]{\LocLip{\eta}, \G{\eta}, Q, \tau \geq 0}{\exp(k\tau)\cdot (kz + \eta(z,\tau)) \geq 0}
  \end{sequentproof}
  The branch is closed by an application of \lref{Real} finishing the proof of the 
  scalar comparison invariants axiom \lref{ax:SCI}.

  \paragraph{(\lref{pr:sCI})} For the sake of completeness we give the derivation 
  of the proof rule, as presented in the main text in the proof sketch of Theorem~\ref{thm:scalar-ci} 
  here again. Proof rule \lref{pr:sCI} is also derived syntactically as follows.
  \begin{sequentproof}
    \ax[]{}{\LocLip{\eta} \Land \G{\eta}}
    \ax[]{Q}{\exists u (\mathcal{U} \Land \dot{h} + \eta(h,\tau) \geq 0)}
    \bi[andR,dW]{}{\LocLip{\eta} \Land \G{\eta} \Land \fBox[\odes{x' = \mathbf{f}(x,u,\tau), \tau' = 1}[Q]][\exists u (\mathcal{U} \Land \dot{h} + \eta(h,\tau) \geq 0)]}
    \un[ax:SCI]{h \succcurlyeq 0}{\exists u\fBox[\odes{x' = \mathbf{f}(x,u,\tau), \tau' = 1}[Q]][h \succcurlyeq 0]}
  \end{sequentproof}
  The proof utilizes axiom \lref{ax:SCI}, followed by \lref{andR} and \lref{dW} resulting 
  in the required premises.
\end{proof}

\begin{proof}[Proof (Theorem~\ref{thm:vector-ci})]\label{proof:vci}
   The following abbreviations are used throughout the proof.
  \begin{align*}
    \Gamma &\equiv \LocLip{\boldsymbol{\eta}} \Land \vecG{\boldsymbol{\eta}} \Land \mathbf{h} \succcurlyeq 0, \quad Q_{\tau} \equiv Q \Land \tau \geq 0, \quad Q_{\tau z} \equiv Q \Land \tau \geq 0 \Land z \succcurlyeq 0 \\
    \alpha_{x\tau} &\equiv x' = \mathbf{f}(x,u), \tau' = 1, \qquad \alpha_{x\tau z} \equiv \odes{x' = \mathbf{f}(x,u), \tau' = 1, z' = \boldsymbol{\eta}(z,\tau)} \\
    \G{\eta} &\equiv \forall t \forall x \forall y(\eta(0,t) = 0 \Land  (x \geq y \to \eta(x,t) \geq \eta(y,t))) \qquad \vecG{\boldsymbol{\eta}} \defequiv \bigwedge^{n}_{i = 1} \G{\eta_{i}}
  \end{align*}

  \paragraph{(\lref{ax:VCI})} After some propositional rewriting the proof begins by resolving the existential quantifier 
  in the succedent using \lref{exR}, such that the resulting right-hand side of the 
  ODE $x' = \mathbf{f}(x,u,\tau)$ is still locally Lipschitz after replacing $u$ with 
  some Skolem function (also a polynomial). After that \lref{ax:DC} and \lref{dI_inequal} 
  add $\tau \geq 0$ to the domain constraint of the succedent.
  \begin{sequentproof}
    \ax[]{\Gamma, \fBox[\ode{\alpha_{x\tau}}[Q_{\tau}]][\exists u(\mathcal{U} \Land \dot{\mathbf{h}} + \boldsymbol{\eta}(\mathbf{h},\tau) \geq 0)]}{\fBox[\ode{\alpha_{x\tau}}[Q_{\tau}]][\mathbf{h} \succcurlyeq 0]}
    \un[DC,dI_inequal]{\Gamma, \fBox[\ode{\alpha_{x\tau}}[Q]][\exists u(\mathcal{U} \Land \dot{\mathbf{h}} + \boldsymbol{\eta}(\mathbf{h},\tau) \geq 0)]}{\fBox[\ode{\alpha_{x\tau}}[Q]][\mathbf{h} \succcurlyeq 0]}
    \un[exR]{\Gamma, \fBox[\ode{\alpha_{x\tau}}[Q]][\exists u(\mathcal{U} \Land \dot{\mathbf{h}} + \boldsymbol{\eta}(\mathbf{h},\tau) \geq 0)]}{\exists u \fBox[\ode{\alpha_{x\tau}}[Q]][\mathbf{h} \succcurlyeq 0]}
  \end{sequentproof}
  From there, axiom \lref{ax:BDG} adds the comparison system $z' = \boldsymbol{\eta}(z,\tau)$ 
  to the succedent, resulting in two separate premises abbreviated as \textcircled{a} 
  and \textcircled{b}.
  \begin{sequentproof}
    \ax[]{\Gamma, \fBox[\ode{\alpha_{x\tau}}[Q_{\tau}]][\exists u(\mathcal{U} \Land \dot{\mathbf{h}} + \boldsymbol{\eta}(\mathbf{h},\tau) \geq 0)]}{\textcircled{a}}
    \ax[]{\Gamma}{\textcircled{b}}
    \bi[ax:BDG]{\Gamma, \fBox[\ode{\alpha_{x\tau}}[Q_{\tau}]][\exists u(\mathcal{U} \Land \dot{\mathbf{h}} + \boldsymbol{\eta}(\mathbf{h},\tau) \geq 0)]}{\fBox[\ode{\alpha_{x\tau}}[Q_{\tau}]][\mathbf{h} \succcurlyeq 0]}
  \end{sequentproof}

  Continuing with \textcircled{b}. This premise says the solution 
  of the introduced comparison system $z' = \boldsymbol{\eta}(z,\tau)$ is bounded from above 
  by the solution of the ODE $\tau' = 1$. Following in similar steps from the 
  proof of axiom \lref{ax:VCP}, we apply axiom \lref{ax:DI_inequal} followed by 
  \lref{()'}.
  \begin{sequentproof}
    \ax[]{\Gamma}{\fBox[\odes{x' = \mathbf{f}(x,u), \tau' = 1, z' = \boldsymbol{\eta}(z,\tau)}[Q_{\tau}]][2z\cdot z' \leq c\tau']}
    \un[ax:DI_inequal,()']{\Gamma}{\fBox[\odes{x' = \mathbf{f}(x,u), \tau' = 1, z' = \boldsymbol{\eta}(z,\tau)}[Q_{\tau}]][\norm{z}^{2} \leq c\tau]}
  \end{sequentproof}
  After that with the applications of \lref{DE}, \lref{[:=]}, and \lref{dW}, we 
  are left with proving the real arithmetic inequality $-2z\cdot\boldsymbol{\eta}(z,\tau) \leq c$.
  \begin{sequentproof}
    \ax[]{\LocLip{\boldsymbol{\eta}}, \vecG{\boldsymbol{\eta}}, Q, \tau \geq 0}{2z\cdot \boldsymbol{\eta}(z,\tau) \leq c}
    \un[dW]{\Gamma}{\fBox[\odes{x' = \mathbf{f}(x,u), \tau' = 1, z' = \boldsymbol{\eta}(z,\tau)}[Q_{\tau}]][2z\cdot \boldsymbol{\eta}(z,\tau) \leq c]}
    \un[DE,[:=]]{\Gamma}{\fBox[\odes{x' = \mathbf{f}(x,u), \tau' = 1, z' = \boldsymbol{\eta}(z,\tau)}[Q_{\tau}]][2z\cdot z' \leq c\tau']}
  \end{sequentproof}
  In order to prove the resulting real arithmetic, we first have to unfold the definition 
  of \LocLip{$\boldsymbol{\eta}$} 
  with $B_{r}(a) \equiv t \geq 0 \Land \norm{x - a}^{2} < r^{2} \Land \norm{y - a}^{2} < r^{2}$ as follows.
  \[\forall \delta \forall a \exists r \exists k \forall t \forall x \forall y (t \leq \delta^{2} \Land B_{r}(a) \to \norm{\boldsymbol{\eta}(x,t) - \boldsymbol{\eta}(y,t)}^{2} \leq k^{2}\norm{x - y}^{2})\]
  Now instantiating the quantifiers as, $\delta \geq 0$, $a = 0$, $r > 0$, $k \geq 0$, 
  $t = \tau$, $x = z$, and $y = 0$, we get
  \[\tau \leq \delta^{2} \Land \norm{z}^{2} < r^{2} \to \norm{\boldsymbol{\eta}(z,\tau) - \boldsymbol{\eta}(0, \tau)}^{2} \leq k^{2} \norm{z}^{2}\]
  The premise on the left of the implication is provable in real arithmetic. Now we 
  unfold $\vecG{\boldsymbol{\eta}}$ as $\bigwedge^{n}_{i = 1} \G{\eta_{i}}$, and from 
  $\G{\eta_{i}}$ we know $\eta_{i}(0,\tau) = 0$, hence $\boldsymbol{\eta}(0,\tau) = \mathbf{0}$, 
  where $\mathbf{0}$ is the zero vector. Following from local Lipschitz we get the following.
  \[\norm{\boldsymbol{\eta}(z,\tau)}^{2} \leq k^{2}\norm{z}^{2}\]
  Now using 
  Young's inequality i.e., $ab \leq \frac{a^{p}}{p} + \frac{b^{q}}{q}$, for $p = q = 2$, we get the 
  following relation.
  \[2\norm{z} \norm{\boldsymbol{\eta}(z,\tau)} \leq \norm{z}^{2} + \norm{\boldsymbol{\eta}(z,\tau)}^{2}\]
  We already know $\norm{z}^{2} \leq r^{2}$, combining this with the above inequalities we get,
  \begin{align*}
    2\norm{z} \norm{\boldsymbol{\eta}(z,\tau)} &\leq \norm{z}^{2} + k^{2}\norm{z}^{2}\\
    &\leq r^{2}(1 + k^{2})
  \end{align*}
  Now using the Cauchy-Schwarz inequality i.e., $|v \cdot w| \leq \norm{v}\norm{w}$, we get an upper bound 
  on our required inequality.
  \[2z \cdot \boldsymbol{\eta}(z,\tau) \leq 2\norm{z} \norm{\boldsymbol{\eta}(z,\tau)} \leq r^{2}(1 + k^{2})\]
  By cutting in $c = r^{2}(1 + k^{2})$ and instantiating 
  all the quantifiers as showed above the branch closes with \lref{Real}.
  \begin{sequentproof}
    \ax*[Real]{\LocLip{\boldsymbol{\eta}}, \QuasiMon{\boldsymbol{\eta}}, Q, \tau \geq 0, c = r^{2}(1 + k^{2})}{2z\cdot \boldsymbol{\eta}(z,\tau) \leq c}
    \un[cut,Real]{\LocLip{\boldsymbol{\eta}}, \QuasiMon{\boldsymbol{\eta}}, Q, \tau \geq 0}{2z\cdot \boldsymbol{\eta}(z,\tau) \leq c}
  \end{sequentproof}
  Now continuing with the open left premise \textcircled{a} from the application of \lref{ax:BDG}. 
  We first add $z \succcurlyeq 0$ ($\succcurlyeq$ is either $\geq$ or $>$) to the domain constraint of the succedent 
  using \lref{ax:DC}, resulting in two premises \textcircled{c} and \textcircled{d}.
  \begin{sequentproof}
    \ax[]{\Gamma, \fBox[\ode{\alpha_{x\tau}}[Q_{\tau}]][\exists u(\mathcal{U} \Land \dot{\mathbf{h}} + \boldsymbol{\eta}(\mathbf{h},\tau) \geq 0)]}{\textcircled{c}}
    \ax[]{\Gamma, \fBox[\ode{\alpha_{x\tau}}[Q_{\tau}]][\exists u(\mathcal{U} \Land \dot{\mathbf{h}} + \boldsymbol{\eta}(\mathbf{h},\tau) \geq 0)]}{\textcircled{d}}
    \bi[ax:DC]{\Gamma, \fBox[\ode{\alpha_{x\tau}}[Q_{\tau}]][\exists u(\mathcal{U} \Land \dot{\mathbf{h}} + \boldsymbol{\eta}(\mathbf{h},\tau) \geq 0)]}{\fBox[\ode{\alpha_{x\tau z}}[Q_{\tau}]][\mathbf{h} \succcurlyeq 0]}
  \end{sequentproof}

  Bringing our attention to the open premise \textcircled{c}. An application of 
  \lref{pr:M[']} monotonically rewrites the postcondition in the succedent to 
  $\mathbf{h} \geq z$. The resulting premise $Q, \tau \geq 0, z \succcurlyeq 0, \mathbf{h} \geq z \vdash \mathbf{h} \succcurlyeq 0$ 
  is provable in real arithmetic.
  \begin{sequentproof}
    \ax[]{\Gamma, \fBox[\ode{\alpha_{x\tau}}[Q_{\tau}]][\exists u(\mathcal{U} \Land \dot{\mathbf{h}} + \boldsymbol{\eta}(\mathbf{h},\tau) \geq 0)]}{\fBox[\ode{\alpha_{x\tau z}}[Q_{\tau z}]][\mathbf{h} \geq z]}
    \un[pr:M['],Real]{\Gamma, \fBox[\ode{\alpha_{x\tau}}[Q_{\tau}]][\exists u(\mathcal{U} \Land \dot{\mathbf{h}} + \boldsymbol{\eta}(\mathbf{h},\tau) \geq 0)]}{\fBox[\ode{\alpha_{x\tau z}}[Q_{\tau z}]][\mathbf{h} \succcurlyeq 0]}
  \end{sequentproof}
  From there, the formula $\mathbf{h} \geq z$ is cut into the antecedent by first cutting in 
  $\exists z(z = \mathbf{h})$, which combined with $\mathbf{h} \succcurlyeq 0$ 
  implies $\mathbf{h} \geq z$. After which \lref{pr:M[']} rewrites the postcondition in the 
  antecedent to $\mathcal{U} \Land \dot{\mathbf{h}} + \boldsymbol{\eta}(\mathbf{h},\tau) \geq 0$, which is provable 
  by resolving the existential quantifier.
  \begin{sequentproof}
    \ax[]{\Gamma, \mathbf{h} \geq z, \fBox[\ode{\alpha_{x\tau}}[Q_{\tau}]][(\mathcal{U} \Land \dot{\mathbf{h}} + \boldsymbol{\eta}(\mathbf{h},\tau) \geq 0)]}{\fBox[\ode{\alpha_{x\tau z}}[Q_{\tau z}]][\mathbf{h} \geq z]}
    \un[pr:M[']]{\Gamma, \mathbf{h} \geq z, \fBox[\ode{\alpha_{x\tau}}[Q_{\tau}]][\exists u(\mathcal{U} \Land \dot{\mathbf{h}} + \boldsymbol{\eta}(\mathbf{h},\tau) \geq 0)]}{\fBox[\ode{\alpha_{x\tau z}}[Q_{\tau z}]][\mathbf{h} \geq z]}
    \un[cut,Real]{\Gamma, \fBox[\ode{\alpha_{x\tau}}[Q_{\tau}]][\exists u(\mathcal{U} \Land \dot{\mathbf{h}} + \boldsymbol{\eta}(\mathbf{h},\tau) \geq 0)]}{\fBox[\ode{\alpha_{x\tau z}}[Q_{\tau z}]][\mathbf{h} \geq z]}
  \end{sequentproof}
  Next the conjunction in the postcondition of the antecedent is decomposed using 
  \lref{ax:box_and}. The formula $\fBox[\ode{\alpha_{x\tau}}[Q_{\tau}]][\mathcal{U}]$ 
  is dropped from the context by weakening as it is not required from here on. 
  After that an application of \lref{ax:DC} adds $z \succcurlyeq 0$ to the domain 
  constraint of the antecedent to match the succedent, resulting in two open premises 
  labeled \textcircled{e} and \textcircled{f}.
  \begin{sequentproof}
    \ax[]{\Gamma, \mathbf{h} \geq z, \fBox[\ode{\alpha_{x\tau}}[Q_{\tau z}]][\dot{\mathbf{h}} + \boldsymbol{\eta}(\mathbf{h},\tau) \geq 0]}{\textcircled{e}}
    \ax[]{\Gamma, \mathbf{h} \geq z}{\textcircled{f}}
    \bi[ax:DC]{\Gamma, \mathbf{h} \geq z, \fBox[\ode{\alpha_{x\tau}}[Q_{\tau}]][\mathcal{U}], \fBox[\ode{\alpha_{x\tau}}[Q_{\tau}]][\dot{\mathbf{h}} + \boldsymbol{\eta}(\mathbf{h},\tau) \geq 0]}{\fBox[\ode{\alpha_{x\tau z}}[Q_{\tau z}]][\mathbf{h} \geq z]}
    \un[ax:box_and]{\Gamma, \mathbf{h} \geq z, \fBox[\ode{\alpha_{x\tau}}[Q_{\tau}]][(\mathcal{U} \Land \dot{\mathbf{h}} + \boldsymbol{\eta}(\mathbf{h},\tau) \geq 0)]}{\fBox[\ode{\alpha_{x\tau z}}[Q_{\tau z}]][\mathbf{h} \geq z]}
  \end{sequentproof}
  
  Continuing with the left premise \textcircled{e}. The branch is closed after applying 
  the vector comparison principle axiom \lref{ax:VCP}. Observe the fact that if a 
  function $\boldsymbol{\eta}(\cdot)$ is class $\mathcal{G}$ then it implies $\boldsymbol{\eta}(\cdot)$ 
  is also quasimonotone increasing, hence proving the resulting premise $\vecG{\boldsymbol{\eta}} \vdash \QuasiMon{\boldsymbol{\eta}}$.
  \begin{sequentproof}
    \ax*[ax:VCP]{\LocLip{\eta}, \vecG{\eta}, \mathbf{h} \geq z, \fBox[\ode{\alpha_{x\tau}}[Q_{\tau z}]][\dot{\mathbf{h}} + \boldsymbol{\eta}(\mathbf{h},\tau) \geq 0]}{\fBox[\ode{\alpha_{x\tau z}}[Q_{\tau z}]][\mathbf{h} \geq z]}
    \un[]{\Gamma, \mathbf{h} \geq z, \fBox[\ode{\alpha_{x\tau}}[Q_{\tau z}]][\dot{\mathbf{h}} + \boldsymbol{\eta}(\mathbf{h},\tau) \geq 0]}{\fBox[\ode{\alpha_{x\tau z}}[Q_{\tau z}]][\mathbf{h} \geq z]}
  \end{sequentproof}

  Getting back to the open right premise \textcircled{d}, \lref{pr:M[']} rewrites 
  the postcondition of the box modality in the succedent to $\exp(k\tau)\cdot z \succcurlyeq 0$. Here 
  $\exp(\cdot)$ is the real exponential function, $k$ is the Lipschitz constant, 
  and $\tau$ is the time variable. After that an application of \lref{cut} and \lref{Real} 
  adds $\exists z(\mathbf{h} = z)$ to the antecedent.
  \begin{sequentproof}[align]
    \ax[]{\Gamma, \exists z(\mathbf{h} = z), \fBox[\ode{\alpha_{x\tau}}[Q_{\tau}]][\exists u(\mathcal{U} \Land \dot{\mathbf{h}} + \boldsymbol{\eta}(\mathbf{h},\tau) \geq 0)]}{\fBox[\ode{\alpha_{x\tau z}}[Q_{\tau}]][\exp(k\tau)\cdot z \succcurlyeq 0]}
    \un[cut,Real]{\Gamma, \fBox[\ode{\alpha_{x\tau}}[Q_{\tau}]][\exists u(\mathcal{U} \Land \dot{\mathbf{h}} + \boldsymbol{\eta}(\mathbf{h},\tau) \geq 0)]}{\fBox[\ode{\alpha_{x\tau z}}[Q_{\tau}]][\exp(k\tau)\cdot z \succcurlyeq 0]}
    \un[pr:M[']]{\Gamma, \fBox[\ode{\alpha_{x\tau}}[Q_{\tau}]][\exists u(\mathcal{U} \Land \dot{\mathbf{h}} + \boldsymbol{\eta}(\mathbf{h},\tau) \geq 0)]}{\fBox[\ode{\alpha_{x\tau z}}[Q_{\tau}]][z \succcurlyeq 0]}
  \end{sequentproof}
  Next a \lref{cut} adds $\exp(k\tau)\cdot z \succcurlyeq 0$ to the antecedent.
  \begin{sequentproof}
    \ax[]{\Gamma, \exp(k\tau)\cdot z \succcurlyeq 0, \fBox[\ode{\alpha_{x\tau}}[Q_{\tau}]][\exists u(\mathcal{U} \Land \dot{\mathbf{h}} + \boldsymbol{\eta}(\mathbf{h},\tau) \geq 0)]}{\fBox[\ode{\alpha_{x\tau z}}[Q_{\tau}]][\exp(k\tau)\cdot z \succcurlyeq 0]}
    \un[cut,Real]{\Gamma, \exists z(\mathbf{h} = z), \fBox[\ode{\alpha_{x\tau}}[Q_{\tau}]][\exists u(\mathcal{U} \Land \dot{\mathbf{h}} + \boldsymbol{\eta}(\mathbf{h},\tau) \geq 0)]}{\fBox[\ode{\alpha_{x\tau z}}[Q_{\tau}]][\exp(k\tau)\cdot z \succcurlyeq 0]}
  \end{sequentproof}
  Now observe that all the vector inequalities are component-wise, hence the real arithmetic equivalence $\exp(k\tau)\cdot z \succcurlyeq 0 \Liff \bigwedge^{n}_{i=1}(\exp(k\tau)\cdot z_{i} \succcurlyeq 0)$ is valid, 
  where each $z_{i}$ is a scalar. The postcondition in the succedent is rewritten 
  to this equivalence using \lref{pr:M[']}.
  \begin{sequentproof}
    \ax[]{\Gamma, \bigwedge^{n}_{i=1}(\exp(k\tau)\cdot z_{i} \succcurlyeq 0)}{\fBox[\ode{\alpha_{x\tau z}}[Q_{\tau} \Land \dot{\mathbf{h}} \geq -\boldsymbol{\eta}(\mathbf{h},\tau)]][\bigwedge^{n}_{i=1}(\exp(k\tau)\cdot z_{i} \succcurlyeq 0)]}
    \un[pr:M['],Real]{\Gamma, \exp(k\tau)\cdot z \succcurlyeq 0}{\fBox[\ode{\alpha_{x\tau z}}[Q_{\tau} \Land \dot{\mathbf{h}} \geq -\boldsymbol{\eta}(\mathbf{h},\tau)]][\exp(k\tau)\cdot z \succcurlyeq 0]}
  \end{sequentproof}
  Now we decompose the conjunctions in the postcondition using \lref{ax:box_and}, 
  which results in conjunctions of box modalities.
  \begin{sequentproof}
    \ax[]{\Gamma, \bigwedge^{n}_{i=1}(\exp(k\tau)\cdot z_{i} \succcurlyeq 0)}{\bigwedge^{n}_{i=1}(\fBox[\ode{\alpha_{x\tau z}}[Q_{\tau} \Land \dot{\mathbf{h}} \geq -\boldsymbol{\eta}(\mathbf{h},\tau)]][\exp(k\tau)\cdot z_{i} \succcurlyeq 0])}
    \un[ax:box_and]{\Gamma, \bigwedge^{n}_{i=1}(\exp(k\tau)\cdot z_{i} \succcurlyeq 0)}{\fBox[\ode{\alpha_{x\tau z}}[Q_{\tau} \Land \dot{\mathbf{h}} \geq -\boldsymbol{\eta}(\mathbf{h},\tau)]][\bigwedge^{n}_{i=1}(\exp(k\tau)\cdot z_{i} \succcurlyeq 0)]}
  \end{sequentproof}
  Now with the application of \lref{andR} we get $n$ stacked open premises, each of them 
  consisting a box modality with one of the scalar components of the original vector 
  inequality.
  \begin{sequentproof}
    \ax[]{\Gamma, \bigwedge^{n}_{i=1}(\exp(k\tau)\cdot z_{i} \succcurlyeq 0)}{\fBox[\ode{\alpha_{x\tau z}}[Q_{\tau} \Land \dot{\mathbf{h}} \geq -\boldsymbol{\eta}(\mathbf{h},\tau)]][\exp(k\tau)\cdot z_{1} \succcurlyeq 0]}
    \noLine
    \UnaryInfC{$\vdots$}
    \noLine
    \un[]{\Gamma, \bigwedge^{n}_{i=1}(\exp(k\tau)\cdot z_{i} \succcurlyeq 0)}{\fBox[\ode{\alpha_{x\tau z}}[Q_{\tau} \Land \dot{\mathbf{h}} \geq -\boldsymbol{\eta}(\mathbf{h},\tau)]][\exp(k\tau)\cdot z_{n} \succcurlyeq 0]}
    \un[andR]{\Gamma, \bigwedge^{n}_{i=1}(\exp(k\tau)\cdot z_{i} \succcurlyeq 0)}{\bigwedge^{n}_{i=1}(\fBox[\ode{\alpha_{x\tau z}}[Q_{\tau} \Land \dot{\mathbf{h}} \geq -\boldsymbol{\eta}(\mathbf{h},\tau)]][\exp(k\tau)\cdot z_{i} \succcurlyeq 0])}
  \end{sequentproof}
  From here we will only look at the $i$-th premise, as the same result can be 
  followed for all the $n$ premises. Since we have a scalar inequality in the postcondition 
  we can continue by applying axiom \lref{ax:DI_inequal} followed by \lref{dW}.
  \begin{sequentproof}
    \ax[]{\LocLip{\boldsymbol{\eta}}, \QuasiMon{\boldsymbol{\eta}}, Q, \tau \geq 0, \dot{\mathbf{h}} + \boldsymbol{\eta}(\mathbf{h},\tau) \geq 0}{\exp(k\tau)\cdot(kz_{i} + \eta_{i}(z,\tau)) \geq 0}
    \un[dW]{\Gamma, \bigwedge^{n}_{i=1}(\exp(k\tau)\cdot z_{i} \succcurlyeq 0)}{\fBox[\ode{\alpha_{x\tau z}}[Q_{\tau} \Land \dot{\mathbf{h}} \geq -\boldsymbol{\eta}(\mathbf{h},\tau)]][\exp(k\tau)\cdot (kz_{i} + \eta_{i}(z,\tau)) \geq 0]}
    \un[ax:DI_inequal]{\Gamma, \bigwedge^{n}_{i=1}(\exp(k\tau)\cdot z_{i} \succcurlyeq 0)}{\fBox[\ode{\alpha_{x\tau z}}[Q_{\tau} \Land \dot{\mathbf{h}} \geq -\boldsymbol{\eta}(\mathbf{h},\tau)]][\exp(k\tau)\cdot z_{i} \succcurlyeq 0]}
  \end{sequentproof}
  Now continuing with the proof of the resulting real arithmetic. We first construct the following 
  vector.
  \[\tilde{z} = (z_{1}, \dots, z_{i-1}, 0, z_{i+1}, \dots, z_{n})^{\top}\]
  The vector $\tilde{z}$ agrees with $z$ on every component except the $i$-th, the 
  $i$-th component, however, agrees with the $i$-th component of the zero vector $\mathbf{0}$ (where all the components are 0).
  Now from the definition of $\vecG{\boldsymbol{\eta}}$ we know,
  \[\eta_{i}(0,\tau) = 0 \Land  (0 \leq \tilde{z} \to \eta_{i}(0,\tau) \leq \eta_{i}(\tilde{z},\tau))\] 
  from which we get $\eta_{i}(\tilde{z},\tau) \geq 0$ that can be rewritten as,
  \begin{align*}
    0 &\geq -\eta_{i}(\tilde{z},\tau) \\
    \eta_{i}(z,\tau) &\geq \eta_{i}(z,\tau) - \eta_{i}(\tilde{z},\tau)
  \end{align*}
  From the local Lipschitz condition with the quantifiers instantiated as $a = z$, 
  $x = z$, $y = \tilde{z}$, $r > 0$, and $k \geq 0$ we get,
  \[\norm{\tilde{z} - z}^{2} < r^{2} \to \norm{\boldsymbol{\eta}(z,\tau) - \boldsymbol{\eta}(\tilde{z},\tau)}^{2} \leq k^{2}\norm{z - \tilde{z}}^{2}\]
  Now observe that $\tilde{z}_{j} = z_{j}$ for all $j \neq i$, this means $\norm{\tilde{z} - z}^{2} = (0 - z_{i})^{2} < r^{2}$. 
  We also know that each component is bounded by the norm, hence we get,
  \begin{align*}
    (\eta_{i}(z,\tau) - \eta_{i}(\tilde{z},\tau))^{2} \leq \norm{\boldsymbol{\eta}(z,\tau) - \boldsymbol{\eta}(\tilde{z},\tau)}^{2} &\leq k^{2}\norm{z - \tilde{z}}^{2} \\
    (\eta_{i}(z,\tau) - \eta_{i}(\tilde{z},\tau))^{2} &\leq k^{2}(0 - z_{i})^{2}
  \end{align*} 
  Combining the previous inequalies we get the desired relation.
  \[kz_{i} + \eta_{i}(z,\tau) \geq 0\]
  \begin{sequentproof}
    \ax*[Real]{\LocLip{\boldsymbol{\eta}}, \QuasiMon{\boldsymbol{\eta}}, Q, \tau \geq 0, \dot{\mathbf{h}} + \boldsymbol{\eta}(\mathbf{h},\tau) \geq 0}{\exp(k\tau)\cdot(kz_{i} + \eta_{i}(z,\tau)) \geq 0}
  \end{sequentproof}
  The branch closes by an application of \lref{Real}. This reasoning repeated $n$ times 
  to close all the $n$ open premises. Finally coming to the open premise \textcircled{f}. 
  This branch closes by following the same reasoning as that of premise \textcircled{d}, as they 
  are proving the same postcondition for the same set of ODEs, i.e., $\alpha_{x\tau}$. 
  \begin{sequentproof}
    \ax*[pr:M['],ax:box_and,ax:DI_inequal,Real]{\Gamma, \mathbf{h} \geq z}{\fBox[\ode{\alpha_{x\tau}}[Q_{\tau}]][z \succcurlyeq 0]}
  \end{sequentproof}
  This finishes the proof of the vector comparison invariants axiom \lref{ax:VCI}.

  \paragraph{(\lref{pr:vCI})} For the sake of completeness we also give the derivation 
  of the proof rule \lref{pr:vCI}.
  \begin{sequentproof}
    \ax[]{}{\LocLip{\boldsymbol{\eta}} \Land \vecG{\boldsymbol{\eta}}}
    \ax[]{Q}{\exists u(\mathcal{U} \Land \dot{\mathbf{h}} + \boldsymbol{\eta}(\mathbf{h},\tau) \geq 0)}
    \bi[andR,dW]{}{\LocLip{\boldsymbol{\eta}} \Land \vecG{\boldsymbol{\eta}} \Land \fBox[\odes{x' = \mathbf{f}(x,u,\tau), \tau' = 1}[Q]][\exists u(\mathcal{U} \Land \dot{\mathbf{h}} + \boldsymbol{\eta}(\mathbf{h},\tau) \geq 0)]}
    \un[ax:VCI]{\mathbf{h} \succcurlyeq 0}{\exists u\fBox[\odes{x' = \mathbf{f}(x,u,\tau), \tau' = 1}[Q]][\mathbf{h} \succcurlyeq 0]}
  \end{sequentproof}
  The required premises are derived using \lref{ax:VCI}, \lref{andR}, and \lref{dW}.

\end{proof}

\subsubsection{Control Barrier Functions}\label{appendix:cbf}
The proofs of the axioms and proof rules of the (scalar) control barrier function and 
vector control barrier functions are given here.

\begin{proof}[Proof (Corollary~\ref{cor:cbf})]
  The proof utilizes the following abbreviations. 
  \begin{align*}
    \Gamma &\equiv \LocLip{\eta} \Land \K{\eta} \Land h \geq 0 \\
    \alpha_{x\tau} &\equiv x' = \mathbf{f}(x,u,\tau), \tau' = 1 \\
    \K{\eta} &\equiv \forall t \forall x \forall y(\eta(0,t) = 0 \Land  (x > y \to \eta(x,t) > \eta(y,t))) \\
    \G{\eta} &\equiv \forall t \forall x \forall y(\eta(0,t) = 0 \Land  (x \geq y \to \eta(x,t) \geq \eta(y,t)))
  \end{align*}
  \paragraph{(\lref{ax:CBF})} The proof of the axiom is as follows.
  Observe that a class $\mathcal{K}$ function implies that the function is also 
  class $\mathcal{G}$, i.e., $\K{\eta} \to \G{\eta}$, which is unfolded as following.
  \[\forall t \forall x \forall y(\eta(0,t) = 0 \Land (x > y \to \eta(x,t) > \eta(y,t))) \to \forall t \forall x \forall y(\eta(0,t) = 0 \Land (x \geq y \to \eta(x,t) \geq \eta(y,t)))\]
  This implication is provable in real arithmetic, hence the formula $\G{\eta}$ can 
  be soundly cut into the sequent. This can also be seen from the fact that all strictly increasing 
  functions are also non-decreasing. The postcondition in the antecedent is 
  monotonically replaced by $\mathcal{U} \Land \dot{h} + \eta(h,\tau) \geq 0$ using \lref{pr:M[']}.
  \begin{sequentproof}
    \ax[]{\Gamma, \G{\eta}, \fBox[\ode{\alpha_{x\tau}}[Q]][(\mathcal{U} \Land \dot{h} + \eta(h,\tau) \geq 0)]}{\exists u\fBox[\ode{\alpha_{x\tau}}[Q]][h \geq 0]}
    \un[pr:M[']]{\Gamma, \G{\eta}, \fBox[\ode{\alpha_{x\tau}}[Q]][\exists u (\mathcal{U} \Land \dot{h} + \eta(h,\tau) \geq 0)]}{\exists u\fBox[\ode{\alpha_{x\tau}}[Q]][h \geq 0]}
    \un[cut,Real]{\Gamma, \fBox[\ode{\alpha_{x\tau}}[Q]][\exists u (\mathcal{U} \Land \dot{h} + \eta(h,\tau) \geq 0)]}{\exists u\fBox[\ode{\alpha_{x\tau}}[Q]][h \geq 0]}
  \end{sequentproof}
  Continuing from there, axiom \lref{ax:box_and} decomposes the conjunction in the 
  postcondition of the antecedent into two separate box modalities with the 
  postcondition separated. After this the axiom \lref{ax:SCI} can be applied, which 
  finishes the proof.
  \begin{sequentproof}
    \ax*[ax:SCI]{\Gamma, \G{\eta}, \fBox[\ode{\alpha_{x\tau}}[Q]][\mathcal{U}], \fBox[\ode{\alpha_{x\tau}}[Q]][\dot{h} + \eta(h,\tau) \geq 0]}{\exists u\fBox[\ode{\alpha_{x\tau}}[Q]][h \geq 0]}
    \un[ax:box_and]{\Gamma, \G{\eta}, \fBox[\ode{\alpha_{x\tau}}[Q]][(\mathcal{U} \Land \dot{h} + \eta(h,\tau) \geq 0)]}{\exists u\fBox[\ode{\alpha_{x\tau}}[Q]][h \geq 0]}
  \end{sequentproof}
  \paragraph{(\lref{pr:cbf})}
  The proof rule is derived as follows.
  \begin{sequentproof}
    \ax[]{}{\LocLip{\eta} \Land \K{\eta}}
    \ax[]{Q}{\exists u (\mathcal{U} \Land \dot{h} + \eta(h,\tau) \geq 0)}
    \bi[andR,dW]{}{\LocLip{\eta} \Land \K{\eta} \Land \fBox[\ode{x' = \mathbf{f}(x,u), \tau' = 1}[Q]][\exists u (\mathcal{U} \Land \dot{h} + \eta(h,\eta) \geq 0)]}
    \un[ax:CBF]{h \geq 0}{\exists u\fBox[\odes{x' = \mathbf{f}(x,u), \tau' = 1}[Q]][h \geq 0]}
  \end{sequentproof}
  The required two open premises are derived using axiom \lref{ax:CBF}, followed 
  by \lref{andR} and \lref{dW}.
\end{proof}

\begin{proof}[Proof (Corollary~\ref{cor:vcbf})]\label{proof:vcbf}
  The proof of the vectorial CBF axiom and proof rule utilize the following abbreviations.
  \begin{align*}
    \Gamma &\equiv \LocLip{\boldsymbol{\eta}} \Land \vecK{\boldsymbol{\eta}} \Land h \geq 0 \qquad &&\alpha_{x\tau} \equiv x' = \mathbf{f}(x,u), \tau' = 1 \\
    \vecK{\boldsymbol{\eta}} &\equiv \bigwedge^{n}_{i = 1} \K{\eta_{i}} \qquad &&\vecG{\boldsymbol{\eta}} \equiv \bigwedge^{n}_{i = 1} \G{\eta_{i}} \\
  \end{align*}
  \paragraph{(\lref{ax:VCBF})} The proof of the axiom begins by first cutting in 
  $\vecG{\boldsymbol{\eta}}$ into the antecedent. This cut is sound because, 
  strictly increasing functions are also non-decreasing, which means 
  $\vecK{\boldsymbol{\eta}} \to \vecG{\boldsymbol{\eta}}$, and after unfolding the 
  two definitions the following real arithmetic formula is provable.
  \[\bigwedge^{n}_{i = 1} \K{\eta_{i}} \to \bigwedge^{n}_{i = 1} \G{\eta_{i}}\]
  After which the postcondition in the antecedent is rewritten monotonically to 
  $\mathcal{U} \Land \dot{\mathbf{h}} + \boldsymbol{\eta}(\mathbf{h}) \geq 0$ using 
  \lref{pr:M[']}, as the resulting premise with the existential quantifier can be 
  resolved using \lref{exL}.
  \begin{sequentproof}
    \ax[]{\Gamma, \vecG{\boldsymbol{\eta}}, \fBox[\ode{\alpha_{x\tau}}[Q]][(\mathcal{U} \Land \dot{\mathbf{h}} + \boldsymbol{\eta}(\mathbf{h},\tau) \geq 0)]}{\exists u \fBox[\ode{\alpha_{x\tau}}[Q]][\mathbf{h} \geq 0]}
    \un[pr:M[']]{\Gamma, \vecG{\boldsymbol{\eta}}, \fBox[\ode{\alpha_{x\tau}}[Q]][\exists u(\mathcal{U} \Land \dot{\mathbf{h}} + \boldsymbol{\eta}(\mathbf{h},\tau) \geq 0)]}{\exists u \fBox[\ode{\alpha_{x\tau}}[Q]][\mathbf{h} \geq 0]}
    \un[cut,Real]{\Gamma, \fBox[\ode{\alpha_{x\tau}}[Q]][\exists u(\mathcal{U} \Land \dot{\mathbf{h}} + \boldsymbol{\eta}(\mathbf{h},\tau) \geq 0)]}{\exists u \fBox[\ode{\alpha_{x\tau}}[Q]][\mathbf{h} \geq 0]}
  \end{sequentproof}
  Continuing from there, we first separate the conjunction in the postcondition 
  of the antecedent into $[\ode{\alpha_{x\tau}}[Q]]\mathcal{U}$ and $[\ode{\alpha_{x\tau}}[Q]]\dot{\mathbf{h}} + \boldsymbol{\eta}(\mathbf{h},\tau) \geq 0$. 
  After this with a straightforward application of axiom \lref{ax:VCI} finishes the 
  proof.
  \begin{sequentproof}
    \ax*[ax:VCI]{\Gamma, \vecG{\boldsymbol{\eta}}, \fBox[\ode{\alpha_{x\tau}}[Q]][\mathcal{U}], \fBox[\ode{\alpha_{x\tau}}[Q]][\dot{\mathbf{h}} + \boldsymbol{\eta}(\mathbf{h},\tau) \geq 0]}{\exists u \fBox[\ode{\alpha_{x\tau}}[Q]][\mathbf{h} \geq 0]}
    \un[ax:box_and]{\Gamma, \vecG{\boldsymbol{\eta}}, \fBox[\ode{\alpha_{x\tau}}[Q]][(\mathcal{U} \Land \dot{\mathbf{h}} + \boldsymbol{\eta}(\mathbf{h},\tau) \geq 0)]}{\exists u \fBox[\ode{\alpha_{x\tau}}[Q]][\mathbf{h} \geq 0]}
  \end{sequentproof}

  \paragraph{(\lref{pr:vcbf})} The proof rule derives from axiom \lref{ax:VCBF} as 
  follows.
  \begin{sequentproof}
    \ax[]{}{\LocLip{\boldsymbol{\eta}} \Land \vecK{\boldsymbol{\eta}}}
    \ax[]{Q}{\exists u(\mathcal{U} \Land \dot{\mathbf{h}} + \boldsymbol{\eta}(\mathbf{h},\tau))}
    \bi[andR,dW]{}{\LocLip{\boldsymbol{\eta}} \Land \vecK{\boldsymbol{\eta}} \Land \fBox[\odes{x' = \mathbf{f}(x,u), \tau' = 1}[Q]][\exists u(\mathcal{U} \Land \dot{\mathbf{h}} + \boldsymbol{\eta}(\mathbf{h},\tau))]}
    \un[ax:VCBF]{\mathbf{h} \geq 0}{\exists u \fBox[\odes{x' = \mathbf{f}(x,u), \tau' = 1}[Q]][\mathbf{h} \geq 0]}
  \end{sequentproof}
\end{proof}

\subsection{Special Instances}\label{appendix:special-instances}

This section gives the proofs for the special instances of comparison invariants, 
which include, scalar and vector Darboux invariants \cite{DBLP:journals/jacm/PlatzerT20}, 
and differential invariants \cite{DBLP:journals/jar/Platzer17,DBLP:journals/jacm/PlatzerT20}.

\paragraph{Darboux Invariants}
\begin{proof}[Proof (Corollary~\ref{cor:scalar-darboux})]
  We prove the axiom \lref{ax:DBX_inequal} for a parameterized system of ODEs. The proof 
  begins by a straightforward application of \lref{ax:SCI}, the resulting left premise 
  closes, because we choose the comparison function $\eta(p,\tau) = gp$, for a cofactor 
  polynomial $g$, which happens to be nondecreasing, hence satisfying $\LocLip{\eta} \Land \G{\eta}$. 
  This is because all polynomials are locally Lipschitz and nondecreasing along 
  with implies class $\mathcal{G}$.
  \begin{sequentproof}
    \ax*[Real]{}{\LocLip{g} \Land \G{g}}
    \ax[]{\fBox[\odes{x' = \mathbf{f}(x,u), \tau' = 1}[Q]][(p)' \geq gp]}{\textcircled{a}}
    \bi[ax:SCI]{\fBox[\odes{x' = \mathbf{f}(x,u), \tau' = 1}[Q]][(p)' \geq gp], p \succcurlyeq 0}{\exists u\fBox[\odes{x' = \mathbf{f}(x,u), \tau' = 1}[Q]][p \succcurlyeq 0]}
  \end{sequentproof}
  Continuing with \textcircled{a}, we first rewrite the postcondition without the 
  existential quantifier using \lref{pr:M[']}, after setting $\mathcal{U} \equiv true$, 
  we get desired formula as the antecedent.
  \begin{sequentproof}
    \ax*[]{\fBox[\odes{x' = \mathbf{f}(x,u), \tau' = 1}[Q]][\dot{p} \geq gp]}{\fBox[\odes{x' = \mathbf{f}(x,u), \tau' = 1}[Q]][\dot{p} \geq gp]}
    \un[()',Real]{\fBox[\odes{x' = \mathbf{f}(x,u), \tau' = 1}[Q]][(p)' \geq gp]}{\fBox[\odes{x' = \mathbf{f}(x,u), \tau' = 1}[Q]][\mathcal{U} \Land \dot{p} \geq gp]}
    \un[pr:M['],exL]{\fBox[\odes{x' = \mathbf{f}(x,u), \tau' = 1}[Q]][(p)' \geq gp]}{\fBox[\odes{x' = \mathbf{f}(x,u), \tau' = 1}[Q]][\exists u(\mathcal{U} \Land \dot{p} \geq gp)]}
  \end{sequentproof}
\end{proof}

\begin{proof}[Proof (Corollary~\ref{cor:vector-darboux})]
  The existing vectorial Darboux invariants axiom \lref{ax:VDBX} is derivable from 
  axiom \lref{ax:VCI}. The proof uses the abbreviation $\alpha_{x\tau} \equiv \odes{x' = \mathbf{f}(x, u, \tau), \tau' = 1}$.
  \begin{sequentproof}
    \ax*[ax:box_and,ax:VCI]{\mathbf{p} \geq 0 \Land -\mathbf{p} \geq 0, \fBox[\ode{\alpha_{x\tau}}[Q]][((\mathbf{p})' - G\mathbf{p} \geq 0 \Land -(\mathbf{p})' + G\mathbf{p} \geq 0)]}{\exists u\fBox[\ode{\alpha_{x\tau}}[Q]][(\mathbf{p} \geq 0 \Land -\mathbf{p} \geq 0)]}
    \un[pr:M['],Real]{\mathbf{p} = 0, \fBox[\ode{\alpha_{x\tau}}[Q]][(\mathbf{p})' = G\mathbf{p}]}{\exists u\fBox[\ode{\alpha_{x\tau}}[Q]][\mathbf{p} = 0]}
  \end{sequentproof}
  The key insight here is the equivalence $\mathbf{p} = 0 \Liff (\mathbf{p} \geq 0 \Land -\mathbf{p} \geq 0)$, 
  which converts the equality into a conjunction of inequalies. After this an application 
  of \lref{ax:box_and} decomposes the conjunction into separate box modalities, followed 
  by the application of \lref{ax:VCI}. Observe that $\boldsymbol{\eta}(\mathbf{p},\tau) = G \cdot \mathbf{p}$. 
  Here the square matrix $G$ is chosen such that it matches the dimension of $\mathbf{p}$, i.e., 
  if $\mathbf{p}$ is an $n$-dimensional vector then $G$ is $n\times n$ square matrix. 
  More importantly $G$ has to be such that the product $G \cdot \mathbf{p}$ is locally Lipschitz 
  and moreover it also has to be class $\mathcal{G}$. The former is satisfied for all 
  real polynomial entries i.e., any standard \dL terms, for the latter, however, 
  $G$ has to be an essentially positive matrix or a \emph{Metzler matrix}, which means all the 
  off-diagonal elements are nonnegative \cite[\S10.XII]{walter2013ordinary}.
\end{proof}
\paragraph{Differential Invariants}
\begin{proof}[Proof (Corollary~\ref{cor:dI})]
  The existing \dL proof rule \lref{dI_inequal} is derived as follows using proof 
  rule \lref{pr:sCI}.
  \begin{sequentproof}
    \ax*[Real]{}{\LocLip{\eta} \Land \G{\eta}}
    \ax[]{Q}{\dot{p} \geq 0}
    \un[exR,Real]{Q}{\exists u(\mathcal{U} \Land \dot{p} + \eta(p) \geq 0)}
    \bi[pr:sCI]{p \succcurlyeq 0}{\exists u\fBox[\odes{x' = \mathbf{f}(x,u), \tau' = 1}[Q]][p \succcurlyeq 0]}
  \end{sequentproof}
  The desired premise is obtained after setting $\mathcal{U} \equiv true$ and 
  $\eta(p,\tau) = 0 \cdot p$, which is the zero function, hence is locally Lipschitz and 
  nondecreasing (implying $\G{\eta}$).
\end{proof}

\fi

\end{document}